\documentclass[a4paper,11pt]{scrartcl}

\usepackage{tikz}
\usetikzlibrary{calc,shapes, arrows}

\usepackage[color=blue!50]{todonotes}

\usepackage[utf8]{inputenc}
\usepackage[T1]{fontenc}

\usepackage{amsmath,amsthm,amssymb}
\usepackage{mathtools}

\usepackage[linesnumbered, vlined, ruled]{algorithm2e}
\usepackage{lineno}

\usepackage[bottom=1in,top=1in,left=1in,right=1in]{geometry}
\setlength{\parindent}{0cm}
\RedeclareSectionCommand[indent=0pt]{subparagraph}

\usepackage{hyperref}
\usepackage{xspace}
\usepackage{thm-restate}

\newtheorem{theorem}{Theorem}
\newtheorem{definition}[theorem]{Definition}
\newtheorem{lemma}[theorem]{Lemma}

\newtheorem{observation}[theorem]{Observation}
\newtheorem*{technique-no-number}{Technique}
\newtheorem{corollary}[theorem]{Corollary}

\newcommand{\kk}{\texorpdfstring{$k$}{k}}
\newcommand{\RR}{\mathbb{R}}

\DeclareMathOperator*{\argmin}{arg\,min}

\DeclareMathOperator{\cost}{cost}
\def\coloneqq{\mathrel{\mathop:}=}

\title{Achieving anonymity via weak lower bound constraints for $k$-median and $k$-means }

\author{Anna Arutyunova\thanks{University of Bonn, Germany, \href{mailto:arutyunova@informatik.uni-bonn.de}{arutyunova@informatik.uni-bonn.de}} \and Melanie Schmidt\thanks{University of Cologne, Germany, \href{mailto:mschmidt@cs.uni-koeln.de}{mschmidt@cs.uni-koeln.de}}}

\begin{document}
	
	\maketitle

	\begin{abstract}
		We study $k$-clustering problems with lower bounds, including $k$-median and $k$-means clustering with lower bounds. In addition to the point set $P$ and the number of centers $k$, a $k$-clustering problem with (uniform) lower bounds gets a number $B$. The solution space is restricted to clusterings where every cluster has at least $B$ points.
		We demonstrate how to approximate $k$-median with lower bounds via a reduction to facility location with lower bounds, for which $O(1)$-approximation algorithms are known.
		
		Then we propose a new constrained clustering problem with lower bounds where we allow points to be assigned \emph{multiple times} (to different centers). This means that for every point, the clustering specifies a set of centers to which it is assigned. We call this \emph{clustering with weak lower bounds}. We give a $(6.5+\epsilon)$-approximation for $k$-median clustering with weak lower bounds and an $O(1)$-approximation for $k$-means with weak lower bounds. 
		
		We conclude by showing that at a constant increase in the approximation factor, we can restrict the number of assignments of every point to $2$ (or, if we allow fractional assignments, to $1+\epsilon$). This also leads to the first bicritera approximation algorithm for $k$-means with (standard) lower bounds where bicriteria is interpreted in the sense that the lower bounds are violated by a constant factor.
		
		All algorithms in this paper run in time that is polynomial in $n$ and $k$ (and $d$ for the Euclidean variants considered).
	\end{abstract}
	
	\newpage
	\section{Introduction}
	We study $k$-clustering problems with lower bound constraints. 
	Imagine the following approach to publish a reduced version of a large data set: Partition the data into clusters of similar objects, then replace every cluster by one (weighted) point that represents it best. Publish these weighted representatives. 
	For example, it is a fairly natural approach for data that can be modeled as vectors from $\mathbb{R}^d$ to replace a data set by a set of mean vectors, where every mean vector represents a cluster. When representing a cluster by one point, the mean vector minimizes the squared error of the representation. This is a common use case of $k$-means clustering. 
	
	In this paper, we ask the following: If we want to publish the representatives, it would be very convenient if the clusters were of sufficient size to ensure a certain level of anonymity of the individual data points that they represent. Can we achieve this, say, in the case of $k$-means clustering or for the related $k$-median problem?
	
	Using clustering with lower bounds on the cluster sizes to achieve anonymity is an idea posed by Aggarwal et al.~\cite{APFTKKZ10}. They introduce it in the setting of radii-based clustering, and define the \emph{$r$-gather problem}: Given a set of points $P$ from a metric space, find a clustering and centers for the clusters such that the maximum distance between a point and its center is minimized and such that every cluster has at least $r$ points. They also define the {$(k,r)$-center problem} which is the same problem as the $r$-gather problem except that the number of clusters is also bounded by the given number $k$. So the $(k,r)$-center problem takes the $k$-center clustering objective but restricts the solution space to clusterings where every cluster has at least $r$ points. Aggarwal et al.~\cite{APFTKKZ10} give a $2$-approximation for both problems.
	
	We pose the same question, but for sum-based objectives such as $k$-median and $k$-means. 
	Here instead of the maximum distance between a point and its center, the (squared) distances are added up for all points. For a set of points $P$ from a metric space and a number $k$, the $k$-median problem is to find a clustering and centers such that the sum of the distances of every point to its closest center is minimized. For $k$-means clustering, the distances are squared,  the metric is usually Euclidean, and the centers are allowed to come from all of $\mathbb{R}^d$. 
	Now for $k$-median/$k$-means clustering with lower bounds, the situation differs in two aspects. We are given an additional parameter $B$ and solutions now satisfy the additional constraint that every cluster has at least $B$ points\footnote{In the introduction, we stick to uniform lower bounds since this is what we want for anonymity. In the technical part, we also discuss non-uniform lower bounds.}. To achieve this, points are no longer necessarily assigned to their closest center but the solution now involves an assignment function of points to centers. 
	The objective then is to minimize the (squared) sum of distances from every point to its assigned center.  
	To the best of the authors' knowledge, polynomial-time approximation algorithms for $k$-means with lower bounds have not been studied, but for $k$-median, an $O(1)$-approximation follows from known work (see below). However for both problems there are approximation algorithms running in FPT time~\cite{ BFS21, BCFN19}. 
	
	For the related (also sum-based) facility location problem, finding solutions with lower bounds on the cluster sizes appeared in very different contexts. Given sets $P$ and $F$ from a finite metric space and \emph{opening costs} for the points in $F$, the \emph{facility location problem} asks to partition $P$ into clusters and to assign a center from $F$ to each cluster such that the sum of the distances of every point to its cluster plus the sum of the opening costs of open centers is minimized. For facility location with lower bounds, an additional parameter $B$ is given and every cluster has to have at least $B$ points. 
	Karger and Minkoff~\cite{KM00} as well as Guha, Meyerson and Munagala~\cite{GMM00} use relaxed versions of facility location with (uniform) lower bounds as subroutines for solving network design problems. This inspired the seminal work of Svitkina~\cite{S10}, who gives a constant-factor approximation algorithm for the facility location problem with (uniform) lower bounds. Ahmadian and Swamy~\cite{AS12} improve the approximation ratio to 82.6. In~\cite{AS16} they state that the algorithms~\cite{S10,AS12} can be adapted for $k$-median by adequately replacing the first reduction step at the cost of an increase in the approximation factor. The adaption to $k$-median is discussed in more detail in~\cite{HHWZ20}.
	
	It is often the case that restricting the number of clusters to $k$ instead of having facility costs makes the design of approximation algorithms much more cumbersome, in particular when constraints are involved. For example, the related problem of finding a facility location solution where every cluster has to satisfy an \emph{upper} bound, usually referred to as \emph{capacitated facility location}, can be $3$-approximated (see Aggarwal et al~\cite{ALBGGGJ13}), but finding a constant-factor approximation for capacitated $k$-median clustering is a long standing open problem~\cite{ABMM019,IV20}.
	
	We demonstrate that the situation for \emph{lower} bounds is different. By a relatively straightforward approach that we borrow from the area of approximation algorithms for hierarchical clustering, we show that approximation algorithms for facility location with lower bounds can be converted into approximation algorithms for $k$-median with lower bounds (at the cost of an increase in the approximation ratio), and this reduction works also for more general $k$-clustering problems including $k$-means. This leaves us with two challenges:
	\begin{enumerate}
		\item The resulting approximation algorithm has a very high approximation ratio.
		
		\item For $k$-means clustering with lower bounds, no bicriteria or true approximation algorithm is known, and the results for standard facility location with lower bounds do not extend to the case for squared Euclidean distances: Both known algorithms for facility location use the triangle inequality an uncontrolled number of times
		to bound the cost of multiple reassignment steps. Thus the relaxed triangle inequality is not sufficient, as the resulting bound would depend on this number. Also the bicriteria algorithms by Karger and Minkoff~\cite{KM00} and Guha, Meyerson and Munagala~\cite{GMM00} require repeated application of the triangle inequality. Thus, $k$-means with lower bounds needs a new technique.
	\end{enumerate}
	
	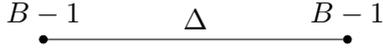
\begin{figure}
		\centering
		\begin{tikzpicture}
		\node[draw, circle, fill, inner sep = 0cm, minimum width = 1mm, label=above:{$B-1$}] (a) at (0,0) {};
		\node[draw, circle, fill, inner sep = 0cm, minimum width = 1mm, label=above:{$B-1$}] (b) at (4,0) {};
		\draw (a) to node[above] {$\Delta$} (b);
		\end{tikzpicture}
		\caption{On the difference between lower-bounded clustering and weakly lower-bounded clustering.\label{fig:smallexample}}
	\end{figure}
	
	To tackle these challenges, we define a new variation of lower-bounded clustering  that we call \emph{weakly lower-bounded} $k$-clustering.
	Here we allow points to be \emph{allocated multiple times}. However a point may not be assigned more than once to the same center. This means that our \lq clustering\rq\ is not a partitioning into subsets, but consists of not necessarily disjoint clusters (whose union is $P$). Each cluster has to respect the lower bound.
	To explain this idea, consider Figure~\ref{fig:smallexample}. 
	There are two locations with $B-1$ points each, and the distance between the two locations is $\Delta$. For clustering with a lower bound of $B$, we can only open one center, which results in a clustering cost of $(B-1)\Delta$ for $k$-median (and $\Omega(B\Delta^2)$ for $k$-means). 
	For clustering with weak lower bound $B$, we allow to assign points multiple times (but only to different centers). For each allocation, we pay the connection cost. In Figure~\ref{fig:smallexample}, this allows us to open two centers while assigning one point from every location to the other location. This costs $2\Delta$ for $k$-median (and $\Omega(\Delta^2)$ for $k$-means) for the two extra assignments. So even though we pay for more connections, the overall cost is smaller. This means that clustering with weak lower bounds can have an arbitrarily smaller cost than clustering with lower bounds, and in a way, this is a benefit: it means that we potentially pay less for having the lower bounds satisfied. Of course it also means that the gap between the optimal costs of the two problem variants with (standard) lower bounds and weak lower bounds is unbounded. We obtain the following results.
	\begin{itemize}
		\item 
		We design a $(6.5+\epsilon)$-approximation algorithm for weakly lower-bounded $k$-median and an $O(1)$-approximation algorithm for weakly lower-bounded $k$-means. The algorithms are conceptually simpler than their counterparts for lower-bounded facility location.
		\item Then we show that we can adapt the solutions such that every point is assigned to \emph{only two centers} at the cost of a constant factor increase in the approximation ratio. We say that a solution has $b$-weak lower bounds if every point is assigned to at most $b$ centers, so our results satisfy $2$-weak lower bounds. 
		\item Furthermore, we show that for $\epsilon\in (0,1)$ we can also get $O(1/\epsilon)$-approximate solutions that satisfy $(1+\epsilon)$-weak lower bounds if we allow fractional assignments of points. 
		\item Finally, we show that our result on $2$-weak lower bounds also implies a $(O(1),O(1))$-bicriteria approximation result for lower bounds, where the lower bounds are satisfied only to an extent of $B/O(1)$. Applying this result to squared Euclidean distances yields a bicriteria approximation for $k$-means with lower bounds, which is the first to the best of the authors' knowledge.
		\item Our results also extend to non-uniform lower bounds.
	\end{itemize}
	Recall our anonymization goal. When using weakly lower-bounded clustering, we still get the number of clusters that we desire and we also fully satisfy the anonymity requirement. We achieve this by distorting the data slightly by allowing data points to influence two clusters. In the fractional case, we get a solution where every data point is assigned to one main cluster and then contributes an $\epsilon$-connection to a different cluster. By this small disturbance of the data set, we can meet the anonymity lower bound requirement for all clusters. 
	
	\subparagraph*{Techniques}
	The proof that $k$-clustering can be reduced to facility location builds upon a known nesting technique from the area of approximation algorithms for hierarchical clustering and is relatively straightforward. Our conceptional contribution is the definition of weakly lower-bounded clustering as a means to achieve anonymity.
	To obtain constant-factor approximations for weakly lower-bounded clustering, the idea is to incorporate an estimate for the cost of establishing lower bounds via facility costs, approximate a $k$-clustering problem with facility costs and then enforce lower bounds on a solution by connecting the closest $B-\ell$ points to a center which previously only had $\ell$ points. Similar ideas are present in the literature, which we adapt to our new problem formulation.
	
	The main technical contribution in our paper is the proof that a solution assigning points to arbitrarily many centers can be converted into a solution where every point is assigned at most twice (or $(1+\epsilon)$-times, respectively), not only for $k$-median, but also for $k$-means. The latter means that the proof cannot use subsequent reassignment steps as it is the case in previous algorithms but has to carefully ensure that points are only reassigned once. We can also bypass this problem in the construction of a bicriteria algorithm. Previous bicriteria algorithms for lower bounds do not extend to $k$-means due to using multiple reassignments.
	
	\subparagraph*{Related work}
	Approximation algorithms for clustering have been studied for decades. The unconstrained $k$-center problem can be $2$-approximated~\cite{G85,HS86} and this is tight under P $\neq$ NP~\cite{HN79}. 
	The $(k,r)$-center problem we discussed above is introduced and $2$-approximated in~\cite{APFTKKZ10}.
	We also call this problem $k$-center with lower bounds. McCutchen and Khuller \cite{McCK08} study $k$-center with lower bounds in a streaming setting and provide a $(6+\epsilon)$-approximation. One can also consider \emph{non-uniform} lower bounds, i.e., every center has an individual lower bound that has to be satisfied if the center is opened. This variant is studied by Ahmadian and Swamy in~\cite{AS16} and they give a $3$-approximation (for the slightly more general \emph{$k$-supplier} problem with non-uniform lower bounds).
	
	The facility location problem has a rich history of approximation algorithms and the currently best algorithm due to Li~\cite{Li13}, achieving an approximation ratio of 1.488, is very close to the best known lower bound of 1.463~\cite{GK99}. Bicriteria approximation algorithms for facility location with lower bounds are developed by Karger and Minkoff~\cite{KM00} and Guha, Meyerson and Munagala~\cite{GMM00}. Svitkina~~\cite{S10} gives the first $O(1)$-approximation algorithm. The core of the algorithm is a reduction to facility location with capacities, embedded in a long chain of pre- and postprocessing steps. Ahmadian and Swamy~\cite{AS12} improve the approximation guarantee to $82.6$. For the case of non-uniform lower bounds, Li~\cite{Li19} gives an $O(1)$-approximation algorithm. Although we did not discuss this in the introduction because it is less relevant to the anonymity motivation, we show in Appendix~\ref{facloctokmedian} that this result also implies an $O(1)$-approximation for $k$-median with non-uniform lower bounds.
	
	The $k$-median and $k$-means problems are APX-hard with the best known lower bounds being $1+2/e$~\cite{JMS02} and 1.0013~\cite{ACKS15,LSW17}. The $k$-median problem can be $(2.675+\epsilon)$-approximated~\cite{BPRST17} and the best known approximation ratio for the $k$-means problem is $6.357+\epsilon$~\cite{ANSW17}. To the best of the authors' knowledge, polynomial-time approximation algorithms for $k$-means with lower bounds have not been studied before. For the $k$-median problem, $O(1)$-approximations follow relatively easily from the work on facility location as outlined in Appendix~\ref{facloctokmedian}. Furthermore there is a possible adaptation of the algorithms in~\cite{AS12, S10} as shown in ~\cite{HHWZ20}. The bicriteria algorithm for facility location with lower bounds~\cite{GMM00,KM00} can be adapted relatively straightforwardly to $k$-median with lower bounds as shown in~\cite{HHWZ2020}. The authors are neither aware of a polynomial-time approximation algorithm or bicriteria algorithm for facility location with lower bounds that works for squared metrics, nor of one for $k$-means with lower bounds. We propose a bicriteria result that is applicable to $k$-means in Appendix~\ref{sec:bicriteria}. Bera et al.~\cite{BCFN19} give a $4.676$-approximation for $k$-median with lower bounds running in time $2^k\text{poly}(n)$. For Euclidean instances in $\RR^d$, Bandyapadhyay et al.~\cite{BFS21} give a $(1+\epsilon)$-approximation for both problems running in time $2^{\tilde O(k/\epsilon^{O(1)})}\text{poly}(nd)$.
	
	Finding a polynomial constant-factor approximation algorithm for the $k$-median problem with \emph{upper bounds}, i.e., with capacities, is a long standing open problem. Recently, efforts have been made to obtain FPT approximation algorithms for the problem~\cite{ABMM019,CGKLL19}. 
	
	\section{Preliminaries} \label{sec:prelim}
	A {$k$-clustering problem} gets a finite set of input points $P$, a possibly infinite set of possible centers $F$, and a number $k \in \mathbb{N}$ and asks for a set of centers $C\subset F$ with $|C|\le k$ and a mapping $a : P \to C$ such that
	\[
	\cost(P,C,a) = \cost(C,a)=\sum_{x \in P} d(x, a(x))
	\]
	is minimized, where $d:(P \cup F) \times (P\cup F) \to \mathbb{R}^+$ is a distance function that is symmetric and satisfies that $d(x,y)=0$ iff $x=y$.
	For the \emph{generalized $k$-median problem}, the distance $d$ satisfies the $\alpha$-relaxed triangle inequality, i.e., for all $x,y,z \in P\cup F$, it holds that $d(x,y) \le \alpha d(x,z) + \alpha d(y,z)$.
	
	We define the \emph{$k$-median problem} as a generalized $k$-median problem with $P=F$ (finite) and $\alpha=1$, and the \emph{$k$-means problem} by setting $F=\mathbb{R}^d$ and $P\subset F$, and choosing $d$ as the squared Euclidean distance, for which $\alpha=2$. For these two problems, choosing the mapping $a: P \to C$ is always optimally done by assigning every point to (one of) its closest center(s). 
	A \emph{generalized facility location problem} has the same input as a generalized $k$-median problem except that it gets facility costs $f : F \to \mathbb{R}$ instead of a number $k$. The goal is to find a set of centers $C \subset F$ without cardinality constraint that minimizes
	$
	\sum_{x \in P} d(x, a(x)) + \sum_{c \in C} f(c)$.
	We use the term facility location not only if $d$ is a metric but also in the case of a distance function satisfying the $\alpha$-relaxed triangle inequality, analogously to the generalized $k$-median problem defined above.
	
	We study generalized $k$-median and generalized facility location problems under side constraints which means that the choice of the mapping $a$ is restricted.
	The side constraints that we study are versions of lower bounded clustering, i.e., they demand that every center gets a minimum number of points that are assigned to it. For clustering with (uniform) lower bounds, the input contains a number $B$ and every cluster in the solution has to have at least $B$ points. Non-uniform lower bounds are meaningful in the case of a finite set $F$ and then, non-uniform lower bounds are given via a function $B : F \to \mathbb{N}$. If any points are assigned to a center $c \in F$ in a feasible solution, then it has to be at least $B(c)$ points.
	
	When adding constraints, there is a subtle detail in the definition of generalized $k$-median problems for the case $P = F$: The question whether the center of a cluster has to be part of the cluster. Notice that without constraints, this makes no difference because assigning a center to a different center than itself cannot be beneficial. When we add lower bounds, this can change. 
	We assume that choosing a center outside of the cluster is allowed and specifically say when the solution is such that centers are members of their clusters.
	
	Our new problem variant called \emph{weakly lower-bounded generalized k-median} is defined as follows. Given an instance of the same form as for the unconstrained generalized $k$-median problem plus lower bounds $B : F \to \mathbb{N}$, the goal is to compute a set of at most $k$ centers $C\subset F$ and an assignment $a\colon P\rightarrow \mathcal P(C)$ such that the lower bound is satisfied, i.e., $|\{x\in P\mid c\in a(x)\}|\geq B(c)$ for all $c\in C$ and every point is assigned at least once. If a point is assigned multiple times the distance of the point to all assigned centers is paid by the solution. The total cost of a solution is given by  
	\[\cost(C,a)=\sum_{x\in P}\sum_{c\in a(x)}d(x, c).\]
	If a solution of a weakly lower-bounded clustering problem satisfies that every point is assigned to at most $b$ centers, then we say that the solution satisfies $b$-weak lower bounds.
	
	\section{Reducing lower-bounded \kk-clustering to facility location}
	
	In this section, we observe that by using a known technique from the area of approximation algorithms for hierarchical clustering, we can turn an approximation algorithm for generalized facility location with lower bounds into an algorithm for generalized $k$-median with lower bounds. The technique is called \emph{nesting}. Given two solutions $S_1$ and $S_2$ for the same generalized facility location problem with different number of centers $k_1 > k_2$, nesting describes how to find a solution $S$ with $k_2$ centers which has a cost bounded by a constant times the costs of $S_1$ and $S_2$ and which is \emph{hierarchically compatible} with $S_1$, i.e., the clusters in $S$ result from merging clusters in $S_1$. We use this by computing a solution $S_1$ with an approximation algorithm for generalized facility location satisfying the lower bounds and a solution $S_2$ for unconstrained generalized $k$-median and then combining them via a nesting step. The resulting solution $S$ has at most $k$ centers and the clusters result from clusters that satisfy the lower bound -- thus they satisfy the lower bound as well. For uniform lower bounds, the execution of this plan is very straightforward, for non-uniform lower bounds we have to be a bit more careful and adjust the nesting appropriately. Since most of this section follows relatively straightforwardly from known work, we defer the details to Appendix~\ref{facloctokmedian}.
	Although the reduction is applicable to generalized $k$-median, this only helps to obtain constant-factor approximations for $k$-median because no approximation algorithms for generalized facility location with lower bounds are known for $\alpha > 1$. We get the following statement from combining Lemma~\ref{lem:nesting-mergable} in Appendix~\ref{facloctokmedian} with the (adjusted) nesting results from Lin et al.~\cite{LNRW10} (see Lemma~\ref{nesting-lemma} in Appendix~\ref{facloctokmedian}) and the approximation algorithms for facility location with uniform lower bounds by Ahmadian and Swamy~\cite{AS12} and non-uniform lower bounds by Li~\cite{Li19}.
	
	\begin{corollary}
		There exist polynomial-time $O(1)$-approximation algorithms for the $k$-median problem with uniform and non-uniform lower bounds.
	\end{corollary}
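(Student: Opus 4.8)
The plan is to carry out the nesting scheme sketched in the text. Fix an instance of $k$-median with lower bounds — uniform, via a number $B$, or non-uniform, via a function $B : F \to \mathbb{N}$ — and let $\opt$ denote the cost of an optimal solution to it.

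\emph{Step 1: a fine, feasible solution via facility location.} Set all facility costs to $0$ (or to a negligibly small positive value) and run the $O(1)$-approximation algorithm for generalized facility location with lower bounds: Ahmadian and Swamy~\cite{AS12} in the uniform case, Li~\cite{Li19} in the non-uniform case. Call the output $S_1$; by construction it satisfies the lower bounds. Every optimal solution of the lower-bounded $k$-median instance is feasible for this facility location instance, since facility location imposes no bound on the number of centers, and once facility costs vanish its objective value is exactly the connection cost $\opt$. Hence the facility location optimum is at most $\opt$, so $\cost(S_1) = O(\opt)$.

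\emph{Step 2: a coarse solution and the nesting step.} Run any $O(1)$-approximation for \emph{unconstrained} $k$-median, e.g.~\cite{BPRST17}, to obtain a solution $S_2$ with at most $k$ centers; dropping the lower bounds only decreases the optimum, so $\cost(S_2) = O(\opt)$. If $S_1$ already uses at most $k$ centers, output $S_1$ and stop. Otherwise $S_1$ and $S_2$ have a strictly decreasing number of centers, so we combine them with the nesting step of Lemma~\ref{nesting-lemma} (the adaptation of Lin et al.~\cite{LNRW10}). This produces a solution $S$ with at most $k$ centers whose clusters are unions of clusters of $S_1$ and whose cost is $O(\cost(S_1) + \cost(S_2)) = O(\opt)$. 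By Lemma~\ref{lem:nesting-mergable}, a solution obtained by merging clusters of a solution satisfying the lower bounds still satisfies them, so $S$ is feasible, which yields the claimed $O(1)$-approximation.

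The cost bookkeeping is routine once the two cited lemmas are in hand; the delicate point — and the reason the nesting has to be ``adjusted'' in the non-uniform case — is that the nesting rule of Lin et al.\ decides which centers of $S_1$ to keep open based on their proximity to centers of $S_2$, and for non-uniform bounds one must ensure that the center retained for each merged cluster is one whose own lower bound is met by that cluster (for uniform $B$ this is automatic, since any nonempty union of $S_1$-clusters already has at least $B$ points). Verifying that the modified selection rule still satisfies the hypotheses of Lemma~\ref{nesting-lemma}, so that the constant-factor cost bound survives, is where the real work lies, and this is deferred to Appendix~\ref{facloctokmedian}.
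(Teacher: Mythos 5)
Your proposal is correct and follows essentially the same route as the paper: compute a lower-bounded facility-location solution (Ahmadian--Swamy or Li) with zero facility costs, compute an unconstrained $k$-median solution, combine them via the adjusted nesting of Lemma~\ref{nesting-lemma}, and invoke mergeability for feasibility, with the non-uniform case handled by the variant of nesting that keeps a center of $S_1$ for each merged cluster. This is precisely the chain of Lemma~\ref{lem:nesting-mergable} and Lemma~\ref{nesting-lemma} that the paper uses.
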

	
	As a final note we observe that the crucial property of lower bound constraints we use here is \emph{mergeability}: If a uniform lower bound is satisfied for a solution, then merging clusters results in a solution that is still feasible. This is in stark contrast to for example capacitated clustering. Our reduction in Lemma~\ref{lem:nesting-mergable} works for mergeable constraints in general.
	
	\section{Generalized \kk-median with weak lower bounds} \label{sec:genkmedianwlb}
	Now we consider a relaxed version of generalized $k$-median with lower bounds where points in $P$ can be assigned multiple times. This relaxation does make sense since we have lower bounds on the centers, so it can be more valuable to assign points to multiple centers to satisfy the lower bounds instead of closing the respective centers. To see this we refer to Figure \ref{fig:smallexample}.
	We call this problem \emph{generalized k-median with weak lower bounds}.
	
	For ease of presentation, it is sensible to assume that $F$ is finite. We observe that we can always set $F=P$ at a constant increase in the cost function if we are given a uniform lower bound. In particular, we assume in this section that $F=P$ holds for $k$-means.
	
	\begin{lemma}\label{lem:f-p}
		Let $P$ be a point set and $F$ be a possibly infinite set of centers. Let $a : P \to F$ be a mapping and define $a'(x) = \arg\min_{y \in P} d(y,a(x))$. Then it holds that
		\[
		\sum_{x \in P} d(x, a'(x))  \le 2 \alpha \cdot \sum_{x \in P} d(x, a(x)).
		\]
	\end{lemma}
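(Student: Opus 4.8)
The plan is to prove the inequality pointwise: for every $x \in P$ I would establish
\[
d(x, a'(x)) \le 2\alpha\, d(x, a(x)),
\]
and then sum this over all $x \in P$ to obtain the stated bound. So the whole argument reduces to controlling, for a single point, the cost of snapping its assigned center $a(x) \in F$ to the nearest input point $a'(x) \in P$.

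The first step is to invoke the $\alpha$-relaxed triangle inequality on the three points $x$, $a(x)$, $a'(x)$, all of which lie in $P \cup F$, which gives
\[
d(x, a'(x)) \le \alpha\, d(x, a(x)) + \alpha\, d(a(x), a'(x)).
\]
The second step is to bound the term $d(a(x), a'(x))$ using the definition of $a'$: since $a'(x) = \argmin_{y \in P} d(y, a(x))$ and $x$ is itself a point of $P$, it is a candidate in this minimization, so $d(a(x), a'(x)) \le d(a(x), x) = d(x, a(x))$, where the last equality uses that $d$ is symmetric. Substituting this into the previous display yields $d(x, a'(x)) \le \alpha\, d(x, a(x)) + \alpha\, d(x, a(x)) = 2\alpha\, d(x, a(x))$, and summing over $x \in P$ completes the proof.

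There is no real obstacle here; the only point worth emphasizing is that the relaxed triangle inequality is applied \emph{exactly once}, which is precisely why the blow-up is the clean factor $2\alpha$ and does not accumulate — for $\alpha = 1$ this is the familiar metric snapping argument, and for squared Euclidean distances ($\alpha = 2$) it costs a factor of $4$. This is exactly the feature that makes the reduction to $F = P$ safe to use for $k$-means in the remainder of the section.
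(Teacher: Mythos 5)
Your proposal is correct and is essentially the paper's own argument: one application of the $\alpha$-relaxed triangle inequality through $a(x)$, followed by the observation that $d(a(x),a'(x)) \le d(a(x),x)$ because $x \in P$ is a candidate in the minimization defining $a'(x)$. Stating it pointwise before summing is a cosmetic difference only.
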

	\begin{proof}
		The lemma follows from the relaxed triangle inequality:
		\[
		\sum_{x \in P} d(x, a'(x)) \le \alpha \sum_{x \in P} \big (d(x, a(x))+d(a(x),a'(x))\big )
		\le 2 \alpha \cdot \sum_{x \in P} d(x, a(x)).\qedhere
		\]
	\end{proof}
	Notice that the factor can be improved for $k$-means, but here and in other places of the paper, we do not optimize the constant for $k$-means.

	To achieve anonymity it is enough to have a uniform lower bound. However if we assume $F=P$ from the beginning, then our results also hold for \emph{non-uniform lower bounds}, so we consider this more general case in this section.
	
	For standard $k$-median/$k$-means with weak lower bounds we give a $(6.5+\epsilon)$-approximate algorithm and an $O(1)$-approximate algorithm respectively. Furthermore we show that a solution to generalized $k$-median with weak lower bounds can be transformed into a solution to generalized $k$-median with \emph{2-weak lower bounds} in polynomial time. We show that this transformation increases the cost only by a factor of $\alpha(\alpha+1)$. We combine this with the approximation algorithm for standard $k$-median/$k$-means with weak lower bounds and obtain an approximation algorithm for standard $k$-median/$k$-means with 2-weak lower bounds.
	If we allow fractional assignments we show how to obtain a solution which assigns every point by an amount of at most $1+\epsilon$ for arbitrary $\epsilon\in (0,1)$, losing $\lceil\frac{1}{\epsilon}\rceil\alpha(\alpha+1)+1$ in the approximation factor. 
	
	\subparagraph*{Computing a solution}
	\label{sec:kmeanswithfaccosts}
	To approximate generalized $k$-median with weak non-uniform lower bounds, we reduce this problem to generalized $k$-median with center costs. 
	In this variant of generalized $k$-median, the input contains both a number $k$ \emph{and} center opening costs $f : F \to \mathbb{R}^+$. The objective is then 
	\[
	\cost^f(C,a) = \sum_{x \in P} d(x, a(x)) + \sum_{c \in C} f(c)
	\]
	while the solution space is constrained to center sets of size at most $k$ as for generalized $k$-median.
	
	The reduction that we use works by introducing a center cost of 
	\begin{align}
		f(c)=\sum_{p\in D_c}d(p,c)\label{eq:faccosts}
	\end{align}
	for every point $c\in F$. This cost is paid if $c$ becomes a center. Here $D_c$ is the set consisting of the $B(c)$ nearest points in $P$ to $c$. 
	The idea for this reduction is adapted from the bicriteria algorithm for lower-bounded facility location presented by Guha, Meyerson and Munagala \cite{GMM00} and Karger, Minkoff \cite{KM00}.
	
	Note that for a center $c$ in a feasible solution $(C,a)$ to generalized $k$-median with weak lower bounds, the term $\sum_{p\in D_c}d(p,c)$ is a lower bound on the assignment cost caused by $c$. This leads to the following lemma.
	\begin{restatable}{lemma}{lemGenKmedianCenterCost}\label{lem:genkmediancentercost}
		Let $OPT'$ be an optimal solution to the generalized $k$-median problem with center costs as defined in~\eqref{eq:faccosts}
		and $OPT=(O,h)$ be an optimal solution to generalized $k$-median with weak lower bounds. It holds that $\cost^f(OPT')\leq 2\cost(OPT).$ 
	\end{restatable}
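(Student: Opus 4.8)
The plan is to show that the optimal solution $OPT = (O, h)$ to weakly lower-bounded generalized $k$-median can be turned into a feasible solution for generalized $k$-median \emph{with center costs} (with the same center set $O$), and that the cost of this derived solution is at most $2\cost(OPT)$; since $OPT'$ is optimal for the center-cost problem, this yields $\cost^f(OPT') \le \cost^f(O, h') \le 2\cost(OPT)$. Concretely, I would define the assignment $h'\colon P \to O$ for the center-cost problem by letting $h'(x)$ be \emph{any one} of the centers in $h(x)$ (recall $h(x) \subseteq \mathcal P(O)$ is a nonempty set, so this is well-defined); say, the closest one. Then $O$ is a set of at most $k$ centers and $(O, h')$ is a feasible solution to generalized $k$-median with center costs.

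The next step is to bound the two parts of $\cost^f(O, h')$ separately. For the connection part, since $h'(x) \in h(x)$ we have $d(x, h'(x)) \le \sum_{c \in h(x)} d(x, c)$, and summing over $x \in P$ gives $\sum_{x \in P} d(x, h'(x)) \le \cost(O, h) = \cost(OPT)$. For the center-cost part, fix a center $c \in O$. Because $(O,h)$ is feasible for the weak lower bounds, at least $B(c)$ points $x$ satisfy $c \in h(x)$; let $A_c = \{x \in P \mid c \in h(x)\}$, so $|A_c| \ge B(c)$. Now $D_c$ is the set of the $B(c)$ \emph{nearest} points to $c$, so $\sum_{p \in D_c} d(p, c)$ is at most the sum of distances to $c$ over any $B(c)$-element subset of $P$ — in particular over any $B(c)$ of the points in $A_c$. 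Hence $f(c) = \sum_{p \in D_c} d(p, c) \le \sum_{x \in A_c} d(x, c)$, which is exactly (a lower bound on) the assignment cost that $h$ pays on account of $c$. Summing over $c \in O$, $\sum_{c \in O} f(c) \le \sum_{c \in O} \sum_{x : c \in h(x)} d(x, c) = \sum_{x \in P} \sum_{c \in h(x)} d(x, c) = \cost(OPT)$, where the interchange of summation is justified because we are just reindexing the pairs $(x, c)$ with $c \in h(x)$.

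Adding the two bounds gives $\cost^f(O, h') = \sum_{x \in P} d(x, h'(x)) + \sum_{c \in O} f(c) \le \cost(OPT) + \cost(OPT) = 2\cost(OPT)$, and optimality of $OPT'$ finishes the argument. I do not expect a genuine obstacle here; the only point requiring a little care is the double-counting step for the center costs — one must check that charging $f(c)$ to the points of $A_c$ never charges the same point–center incidence twice (it does not, since the outer sum is over distinct centers $c$ and each term only uses distances $d(x,c)$ with $c \in h(x)$), so the total is bounded by the full weak-assignment cost $\sum_x \sum_{c \in h(x)} d(x,c)$ and not more. Note also that this argument does not use the relaxed triangle inequality at all, so it holds verbatim for every value of $\alpha$, in particular for $k$-means.
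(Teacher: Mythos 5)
Your proof is correct and takes essentially the same route as the paper: define $h'(x)$ as the closest center in $h(x)$, note that $(O,h')$ is feasible for the center-cost problem, and bound the connection cost and the sum of facility costs separately, each by $\cost(OPT)$. The paper compresses the second bound into the remark that $\sum_{c\in O}\sum_{p\in D_c}d(p,c)$ is a lower bound on the assignment cost of $OPT$, whereas you spell out the underlying reason (minimality of $D_c$ among $B(c)$-subsets together with $|A_c|\ge B(c)$), but the argument is the same.
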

	\begin{proof}
		For $p\in P$ let $c_p=\textup{argmin}\{d(p,c)\mid c\in h(p)\}$ be the closest center to which $p$ is assigned in $OPT$. We define $h'(p)=c_p$ for all $p\in P$ and obtain a feasible solution $(O, h')$ to the generalized $k$-median problem with center cost. Furthermore we have 
		\begin{align*}
		\cost^f(OPT')&\le \cost^f( O,h')=\sum_{c\in O}f(c)+\sum_{p\in P} d(p,h'(p))\\
		&=\sum_{c\in O}\sum_{p\in D_c}d(p,c)+\sum_{p\in P} d(p,h'(p))\\
		&\leq 2\sum_{p\in P}\sum_{c\in h(p)}d(p,c)\\
		&= 2\cost(OPT).
		\end{align*} 
		The second inequality follows from the fact that $\sum_{c\in O}\sum_{p\in D_c}d(p,c)$ and 
		$\sum_{p\in P}d(p,h'(p))$ are both lower bounds on the assignment cost of $OPT$.\vspace{1em}
	\end{proof}
	Let $(C,a)$ be a solution for the generalized $k$-median problem with center costs. To turn it into a solution for generalized $k$-median with weak lower bounds we have to modify the assignment. Let $c\in C$ and $n_c=|a^{-1}(c)|$. We additionally assign  $m_c=\textup{max}\{0, B(c)-n_c\}$ points to $c$ to satisfy the lower bound. Let $S_c\subset D_c$ be the set of points in $ D_c$ which are not assigned to $c$. We choose $m_c$ points from $S_c$ and assign them to $c$. This is feasible since we are allowed to assign points multiple times. Let $(C, a')$ be the corresponding solution. 
	\begin{restatable}{lemma}{lemZweitesLemma}\label{lem:zweiteslemma}
		It holds that $\cost(C,a')\leq \cost^f(C,a)$.
	\end{restatable}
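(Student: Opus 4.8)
The plan is to compare the two cost functions term by term, after splitting $\cost(C,a')$ into the part that is already present in $(C,a)$ plus the cost of the newly added assignments. Concretely, for every $x\in P$ we have $a(x)\in a'(x)$, and the only further centers in $a'(x)$ are those $c$ for which $x$ was picked as one of the $m_c$ extra points drawn from $S_c$. Writing $E_c\subseteq S_c\subseteq D_c$ for the set of extra points assigned to $c$, this yields the identity
\[
\cost(C,a') \;=\; \sum_{x\in P} d(x,a(x)) \;+\; \sum_{c\in C}\sum_{p\in E_c} d(p,c),
\]
while expanding the center cost via \eqref{eq:faccosts} gives
\[
\cost^f(C,a) \;=\; \sum_{x\in P} d(x,a(x)) \;+\; \sum_{c\in C} f(c) \;=\; \sum_{x\in P} d(x,a(x)) \;+\; \sum_{c\in C}\sum_{p\in D_c} d(p,c).
\]
Hence it suffices to prove $\sum_{p\in E_c} d(p,c)\le \sum_{p\in D_c} d(p,c)$ for every $c\in C$, and since $E_c\subseteq D_c$ and all distances are nonnegative, this is immediate once we know the construction is well defined, i.e., that $S_c$ contains at least $m_c$ points.

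The only point needing a short argument is this feasibility. Since $|D_c|=B(c)$ and $|D_c\cap a^{-1}(c)|\le |a^{-1}(c)|=n_c$, we get $|S_c|=|D_c\setminus a^{-1}(c)|\ge B(c)-n_c$, and trivially $|S_c|\ge 0$, so $|S_c|\ge \max\{0,B(c)-n_c\}=m_c$; thus $m_c$ points can indeed be selected from $S_c$. (It is also worth recording, although not needed for the cost bound, that $(C,a')$ is feasible for weak lower bounds: $E_c$ is disjoint from $a^{-1}(c)$, so $c$ receives $n_c+m_c=\max\{n_c,B(c)\}\ge B(c)$ distinct points and no point is assigned to $c$ twice.)

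I do not expect a genuine obstacle here: the statement is essentially a bookkeeping lemma, whose content is exactly that the center cost $f(c)$ was chosen (following \cite{GMM00,KM00}) large enough to pre-pay for the worst case of having to connect all of $D_c$ to $c$. The one thing to be careful about is not to double-count: a point chosen as an extra for $c$ may already be assigned elsewhere by $a$ — indeed $E_c\subseteq S_c$ consists precisely of points not assigned to $c$ by $a$ — so the decomposition of $\cost(C,a')$ above must keep the original term $\sum_{x\in P} d(x,a(x))$ intact and merely add the $E_c$-contributions on top, rather than re-routing any existing assignment.
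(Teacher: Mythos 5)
Your proof is correct and follows essentially the same approach as the paper: decompose $\cost(C,a')$ into the original assignment cost plus the cost of the extra connections, and bound the latter per center by $\sum_{p\in D_c} d(p,c)=f(c)$ because the extra points $E_c$ are a subset of $D_c$. Your additional verification that $|S_c|\ge m_c$ (so that the construction is well defined) is a detail the paper leaves implicit but is a reasonable thing to spell out.
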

	\begin{proof}
		The additional assignment cost for each center $c\in C$ can be upper bounded by $\sum_{p\in D_c}d(p,c)$. We obtain 
		\begin{align*}
		\cost(C,a')\leq& \sum_{c\in C}\sum_{p\in D_c}d(p,c)+\sum_{p\in P}d(p,a(p))\\
		=&\cost^f(C,a). \qedhere
		\end{align*}
	\end{proof}
	Lemma~\ref{lem:genkmediancentercost} and Lemma~\ref{lem:zweiteslemma} imply the following corollary.
	\begin{corollary}\label{cor:findasolution}
		Given a $\gamma$-approximation for the generalized $k$-median problem with center costs, we get a $2\gamma$-approximation for the generalized $k$-median problem with weak lower bounds in polynomial time.
	\end{corollary}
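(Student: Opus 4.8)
The plan is to chain the two preceding lemmas together with the guarantee of the assumed $\gamma$-approximation. First I would construct, from the given weakly-lower-bounded instance, an instance of the generalized $k$-median problem with center costs by setting $f(c)=\sum_{p\in D_c}d(p,c)$ as in~\eqref{eq:faccosts}, where $D_c$ is the set of the $B(c)$ nearest points of $P$ to $c$. For each $c\in F$ this set is found by sorting the distances from $c$ and taking a prefix, so forming the whole center-cost instance takes polynomial time.

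Next I would run the assumed $\gamma$-approximation on this center-cost instance to obtain a solution $(C,a)$ with $\cost^f(C,a)\le\gamma\cdot\cost^f(OPT')$, where $OPT'$ denotes an optimal solution of the center-cost instance. Then I would apply the postprocessing described just before Lemma~\ref{lem:zweiteslemma}: for every $c\in C$ with $n_c=|a^{-1}(c)|<B(c)$, pick $m_c=\max\{0,B(c)-n_c\}$ additional points from $S_c=D_c\setminus a^{-1}(c)$ and also assign them to $c$. Since multiple assignments are permitted and $|S_c|=B(c)-|D_c\cap a^{-1}(c)|\ge B(c)-n_c\ge m_c$, there are always enough such points, so the resulting pair $(C,a')$ is feasible for weak lower bounds, and this step is again polynomial.

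Finally I would combine the three bounds. Lemma~\ref{lem:zweiteslemma} gives $\cost(C,a')\le\cost^f(C,a)$; the approximation guarantee gives $\cost^f(C,a)\le\gamma\,\cost^f(OPT')$; and Lemma~\ref{lem:genkmediancentercost} gives $\cost^f(OPT')\le 2\,\cost(OPT)$, where $OPT$ is an optimal solution to generalized $k$-median with weak lower bounds. Stringing these inequalities together yields $\cost(C,a')\le 2\gamma\,\cost(OPT)$, which is the claimed bound, and every step above runs in polynomial time.

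Since the substantive work has already been carried out in Lemmas~\ref{lem:genkmediancentercost} and~\ref{lem:zweiteslemma}, there is no real obstacle here; the only things worth a sentence are verifying that assembling the center-cost instance and performing the postprocessing are polynomial, and checking that $D_c$ is large enough to supply the $m_c$ extra points needed to meet the lower bound, both of which are immediate from $|D_c|=B(c)$.
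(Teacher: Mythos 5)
Your proof is correct and follows exactly the intended chain: construct the center-cost instance via \eqref{eq:faccosts}, run the $\gamma$-approximation, apply the postprocessing before Lemma~\ref{lem:zweiteslemma}, and combine $\cost(C,a')\le\cost^f(C,a)\le\gamma\,\cost^f(OPT')\le 2\gamma\,\cost(OPT)$ via Lemmas~\ref{lem:zweiteslemma} and~\ref{lem:genkmediancentercost}. The paper leaves this chaining implicit ("Lemma~\ref{lem:genkmediancentercost} and Lemma~\ref{lem:zweiteslemma} imply the following corollary"), so your write-up matches the intended argument, including the small but worthwhile sanity check that $|S_c|\ge m_c$.
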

	
	We use the $(3.25+\epsilon)$-approximation for $k$-median with center costs~\cite{CL12}, which results in a $(6.5+\epsilon)$-approximation for $k$-median with weak lower bounds. 
	For $k$-means, we use the algorithm by Jain and Vazirani~\cite{Jain} which was originally designed for $k$-median. However, as outlined in the journal version~\cite{Jain}, it can be used for $k$-means when $F=P$, and also for $k$-median with center costs. The two extensions are not conflicting and can both be applied to obtain an $O(1)$-approximation for $k$-means with center costs for the case $F=P$.
	\label{sec:kmeanswithfaccosts_end}
	
	\subsection{Reducing the number of assignments per client}
	We see that the solution for standard $k$-median/$k$-means with weak lower bounds computed above can assign a point to all centers in the worst case. The number of assigned centers per point cannot be bounded by a constant. This may not be desirable in the context of publishing anonymized representatives since the distortion of the original data set is not bounded.
	
	However, we show that any solution to the generalized $k$-median problem with weak lower bounds can be transformed into a solution assigning every point at most twice. This increases the cost by a factor of $\alpha(\alpha+1)$. Recall that $\alpha$ is the constant appearing in the relaxed triangle inequality. This leads to the following theorem.
	\begin{theorem}
		\label{TwoBound}
		Given a solution $(C,a)$ to generalized k-median with weak lower bounds, we can compute a solution $(\widetilde{C}, \widetilde{a})$ to generalized k-median with 2-weak lower bounds (assigning every point at most twice) in polynomial time such that $\cost(\widetilde{C}, \widetilde{a})\leq \alpha(\alpha+1)\, \cost(C,a).$
	\end{theorem}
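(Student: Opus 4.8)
\subparagraph*{Proof proposal.}
The plan is to build $(\widetilde C,\widetilde a)$ from $(C,a)$ by first giving every point a single \emph{primary} assignment and then repairing the lower bounds using at most one extra assignment per point. For each $x\in P$ let $c_x\coloneqq\argmin_{c\in a(x)}d(x,c)$ and tentatively put $\widetilde a(x)=\{c_x\}$. Discard every center to which no point is primary; this only decreases the cost and cannot violate a lower bound, since a discarded center carries no assignment. Write $n_c=|\{x:c_x=c\}|$ and $t_c=\max\{0,B(c)-n_c\}$, so the remaining task is to add $t_c$ further assignments to each ``deficient'' center $c$, each point being used at most once more (hence at most twice in total).

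For the repair, observe that a deficient center $c$ has $|a^{-1}(c)|\ge B(c)>n_c$, so $a^{-1}(c)$ contains at least $t_c$ points $w$ with $c_w\neq c$; re-assigning such a $w$ a second time to $c$ costs the term $d(w,c)$, which already appears in the cost of $w$ in $(C,a)$. I would set up the bipartite $b$-matching (flow) instance whose candidate edges are exactly these pairs, with demand $t_c$ at each deficient center and capacity $1$ at each point, and compute a maximum flow. If it saturates all demands we are done with a factor of $1$: every kept assignment of a point is bounded by a term already paid by that point in $(C,a)$, and no point is assigned more than twice.

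The flow need not saturate all demands: a point may lie in $a^{-1}(c)$ for many centers $c$ and yet be usable only once, so no feasible matching exists (the instance of Figure~\ref{fig:smallexample} replicated to several co-located centers, all of whose clusters are tight, already shows this). By max-flow--min-cut the unsaturated part corresponds to a family $S$ of deficient centers that are mutually ``close'', being linked through shared assigned points. Rather than fill these, I would \emph{remove} a suitable subset of them from $\widetilde C$; whenever a center $c$ is removed I re-home each point $z$ with $c_z=c$ either to its next-closest center in $a(z)$ (cost already paid by $z$) or, if $z$ is assigned only to $c$, to $c_x$ for a point $x\in a^{-1}(c)$ whose primary $c_x$ is \emph{kept}, choosing $x$ as close to $c$ as possible among such points. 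The relaxed triangle inequality enters only here, along $z\to c\to x\to c_x$:
\[
d(z,c_x)\le \alpha\bigl(d(z,x)+d(x,c_x)\bigr)\le \alpha^2 d(z,c)+\alpha^2 d(x,c)+\alpha\,d(x,c_x)\le \alpha^2 d(z,c)+\alpha(\alpha+1)\,d(x,c),
\]
where $d(x,c_x)\le d(x,c)$ because $c_x$ is $x$'s closest assigned center and $c\in a(x)$. Using $|a^{-1}(c)|>n_c$ to compare $\sum_{z:c_z=c} d(z,c_x)$ against the cost of $c$'s cluster in $(C,a)$ is what should yield the overall factor $\alpha(\alpha+1)$.

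The main obstacle is carrying out this removal-and-re-homing step in a \emph{single pass}: the estimate above must not be iterated (iteration is precisely what fails for squared distances, as noted in the introduction), so the set of removed centers and all re-homing targets must be fixed globally and at once, while guaranteeing (i) every removed center still has a kept neighbour reachable through a shared point, (ii) no point ends up in three clusters, and (iii) each cost term of $(C,a)$ is charged only $O(1)$ times with the constants balanced to give exactly $\alpha(\alpha+1)$. Extracting such a simultaneous, non-recursive repair from the min-cut structure of the infeasible matching is the heart of the argument; the remainder is bookkeeping with the $\alpha$-relaxed triangle inequality.
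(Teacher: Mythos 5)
Your proposal does not constitute a proof: you explicitly stop short of the crucial step, writing that ``extracting such a simultaneous, non-recursive repair from the min-cut structure of the infeasible matching is the heart of the argument; the remainder is bookkeeping.'' That extraction \emph{is} the theorem. The paper resolves it not by a global flow argument at all, but by processing centers in a fixed order and maintaining invariants that guarantee (a) once a center is processed its triply-assigned set stays empty, (b) a point chosen for a swap is currently assigned to exactly one center and has no still-open originally-assigned center smaller than the one being processed (the set $\overline{P_d^1}$), and (c) new connections are never deleted. These invariants are precisely what makes the reassignment one-shot rather than cascading; without an analogue, there is no reason a min-cut certificate yields a removal set with properties (i)--(iii) you list as open.

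There is also a concrete quantitative gap in the re-homing estimate. You bound $d(z,c_x)\le \alpha^2 d(z,c)+\alpha(\alpha+1)d(x,c)$ and then re-home \emph{all} points $z$ exclusively assigned to the removed center $c$ through the \emph{same} witness $x$. The term $d(x,c)$ is therefore charged $n'_c$ times (once per such $z$), not $O(1)$ times, and $|a^{-1}(c)|>n_c$ does not rescue this: it only says there exists at least one point besides the $z$'s in $a^{-1}(c)$, not that there are $n'_c$ cost terms each at least $d(x,c)$ to absorb the charges. So even granting the removal set, the sketched charging does not give $\alpha(\alpha+1)$ and must over-count. The paper avoids this by ensuring each Type-1 and Type-2 tuple is charged at most once (Lemma~\ref{lemma:types}) and that the three types are never all charged for the same pair (Lemma~\ref{lemma:intersection}); your scheme has no analogue of that disjointness, and the witness $x$ is a clear counterexample to it.

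Finally, the closest-center primary choice is also in tension with the factor you target: when the matching saturates, you claim factor $1$, but a point $w$ assigned as a secondary to $c$ pays $d(w,c)$ in addition to $d(w,c_w)$, so even the ``good'' branch already pays up to $2$ per point, not $1$. The target constant $\alpha(\alpha+1)$ (which is $2$ for $\alpha=1$) has essentially no slack left, so the loose accounting in the removal branch cannot be absorbed.
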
 
	
	\subparagraph{Reassignment process.} 
	We start by setting $\widetilde C=C$ and $\widetilde a=a$ and modify both $\widetilde C$ and $\widetilde a$ until we obtain a feasible solution to generalized $k$-median with 2-weak lower bounds. During the process, the centers in $\widetilde C$ are called \emph{currently open}, and when a center is deleted from $\widetilde C$, we say it is \emph{closed}.
	The centers are processed in an arbitrary but fixed order, i.e., we assume that $C=\{c_1,\ldots,c_{k'}\}$ for some $k'\leq k$ and process them in order $c_1,\ldots,c_{k'}$. 
	We say that $c_i$ is \emph{smaller} than $c_j$ if $i < j$.
	
	Let $c=c_i$ be the currently processed center. 
	By $P_c$, we denote the set of points assigned to $c$ under $\widetilde{a}$. We divide $P_c$ into three sets $P_c^1=\{q\in P_c\mid |\widetilde{a}(q)|=1\}$, $P_c^2=\{q\in P_c\mid |\widetilde{a}(q)|=2\}$ and $P_c^3=\{q\in P_c\mid |\widetilde{a}(q)|\geq 3\}$. Furthermore with $C(P_c^3)$ we denote all centers which are connected to at least one point in $P_c^3$ under $\widetilde a$.
	 
	If $P_c^3$ is empty, we are done and proceed with the next center in $\widetilde C$. Otherwise we need to empty $P_c^3.$ Observe that points in $P_c^3$ are assigned to multiple centers, so if we delete the connection between one of these points and $c$, the point is still served by some other center. However, doing so may violate the lower bound at $c$. So we have to replace this connection.
	
	As long as $P_c^3$ is non-empty, we do the following. We pick a center $d=\min C(P_c^3)\backslash \{c\}$ and a point $x\in P_c^3$ connected to $d$. 
	We want to assign a point $y$ from $P_d^1$ to $c$ to free $x$. 
	For technical reasons, we restrict the choice of $y$: We exclude all points from the subset $\overline{P_d^1} := \{q\in P_d^1\mid |a(q)|\geq 3 \textup{ and } a(q)\cap \{c_1,\ldots, c_{i-1}\}\cap \widetilde{C}\neq \emptyset\}$, i.e., all points which were assigned to at least $3$ centers under the initial assignment $a$, and where one of these at least $3$ centers is still open \emph{and} smaller than $c$. 

	If $P_d^1\backslash \overline{P_d^1}$ is non-empty, we pick a point $y\in P_d^1\backslash \overline{P_d^1}$ arbitrarily. We set $\widetilde{a}(y)=\{d,c\}$ and $\widetilde{a}(x)=\widetilde{a}(x)\backslash\{c\}$. So $x$ is no longer connected to $c$, but to satisfy the lower bound at $c$ we replace $x$ by $y$ (Figure \ref{TwoBound:replacingX}). 
	
	\begin{figure}[b]
		\centering
		\begin{tikzpicture}[scale=0.8,dot/.style = {shape = circle, fill = black, inner sep =1.5pt},
		center/.style= {shape = rectangle, draw, minimum size= .6cm},
		every edge/.style={->, draw, >= stealth'},
		light/.style={color=gray!50!white},
		auto]
		\node[dot, label=$x$] (pointx) at (2, 1) {};
		\node[dot, label=$y$] (pointy) at (6, -1) {};
		\node[center] (centerc) at (0,0) {$c$};
		\node[center] (centerd) at (4,0) {$d$};
		
		\draw (pointx) edge (centerc)
		(pointx) edge (centerd)
		(pointy) edge (centerd);
		
		\node[dot, label=$x$] (pointx') at (2+8, 1) {};
		\node[dot, label=$y$] (pointy') at (6+8, -1) {};
		\node[center] (centerc') at (0+8,0) {$c$};
		\node[center] (centerd') at (4+8,0) {$d$};
		
		\draw (pointx') edge[light] (centerc')
		(pointx') edge (centerd')
		(pointy') edge (centerd')
		(pointy') edge (centerc');
		\end{tikzpicture}
		\caption{Connection between $x\in P_c^3$ and $c$ is deleted. A point $y\in P_d^1$ replaces $x$.}
		\label{TwoBound:replacingX}
	\end{figure}
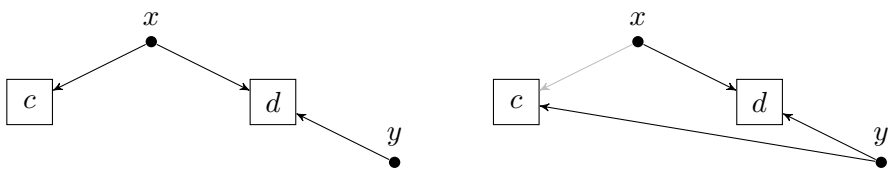
	
	If $P_d^1\backslash \overline{P_d^1}$ is empty, our replacement plan does not work. Instead, we close $d$. This means that $x$ is now assigned to one center less, and, if this happens repeatedly, $x$ will at some point no longer be in $P_c^3$. Since we close $d$, all points in $P_d^1$ have to be reassigned because they are only connected to $d$. For each $q\in P_d^1$, we reassign $q$ to the smallest currently open center in $a(q)$. Notice that such a center exists and is smaller than $c$ because $P_d^1=\overline{P_d^1}$ and for every $q \in \overline{P_d^1}$, there is at least one center in $a(q) \cap \widetilde C$ which is smaller than $c$.
	
	The entire process is described in Algorithm \ref{Alg:2WLB}. It satisfies the following invariants.
	
	\IncMargin{1ex}
	\begin{algorithm}[t]
		\caption{Reducing the number of assigned centers per point to two}
		\label{Alg:2WLB}
		\DontPrintSemicolon
		\SetKwInOut{Input}{Input}\SetKwInOut{Output}{Output}
		\SetKwFor{ForAll}{for all}{}{}
		\BlankLine
		define an ordering on the centers $c_1\leq c_2\ldots\leq c_{k'}$ \;
		set $\widetilde{C}\coloneqq C$ and $ \widetilde{a}\coloneqq a$\;
		\ForAll{$c\in C$}
		{
			$P_c\coloneqq\{q\in P\mid c\in \widetilde a(q)\}$\; 
			$P_c^3\coloneqq\{q\in P_c\mid |\widetilde{a}(q)|\geq 3\}$, \,
			$P_c^i\coloneqq\{q\in P_c\mid|\widetilde{a}(q)|=i\}$ for $i=1,2$\;
			$C(P_c^3)\coloneqq\bigcup_{q\in P_c^3}\widetilde{a}(q)$\;
			
		}
		\For{$i=1$ \KwTo $l$}
		{
			\While{$P_{c_i}^3\neq \emptyset$}
			{\label{alg:whilehead}
				$d=\textup{min } C(P_{c_i}^3)\backslash \{c_i\}$\;\label{alg:def-d}
				$\overline{P_d^1}=\{q\in P_d^1\mid |a(q)|\geq 3 \ \textup{and} \ a(q)\cap \{c_1,\ldots, c_{i-1}\}\cap \widetilde C\neq \emptyset\}\}$\;
				\If{$P_d^1\backslash\overline{P_d^1}=\emptyset$\label{alg:special}}
				{	
					\ForAll{$q\in P_d^1$} 
					{ 
						let $e=\min (a(q)\cap \widetilde{C})$\;\label{alg:reconnect1}
						set $\widetilde a(q)=\{e\}$\;\label{alg:reconnect2}
					}
					delete $d$ from $\widetilde{C}$ and all connections to $d$ in $\widetilde{a}$\;\label{alg:deletion}	
				}
				\Else{
					pick $x\in P_{c_i}^3$ connected to $d$ and $y\in P_d^1\backslash\overline{P_d^1}$\;\label{alg:def-y}
					set $\widetilde{a}(x)=\widetilde{a}(x)\backslash \{c_i\},\,\widetilde{a}(y)=\{c_i,d\}$\label{alg:swap}
					
				}
			}
			
		}
		
	\end{algorithm}
	\DecMargin{1ex}
	
	\begin{restatable}{lemma}{lemDrittesLemma}
		\label{lemma:easy_properties}
		Algorithm~\ref{Alg:2WLB} computes a feasible solution $(\widetilde C, \widetilde a)$ to generalized $k$-median with 2-weak lower bounds.	
		Furthermore the following properties hold during all steps of the algorithm. 
		\begin{enumerate}
			\item\label{lem:alg:invarianten:a0} The algorithm never establishes connections for points currently assigned more than once.
			\item\label{lem:alg:invarianten:a} 
			For any center $c \in C$, $P_c$ does not change before $c$ is  processed or closed.
			\item\label{lem:alg:invarianten:b} If a connection between $x\in P$ and the currently processed center $c\in\widetilde C$ is deleted by the algorithm, we have from this time on $x\notin P_c^3$ until termination. Moreover $P_c^3$ remains empty after $c$ is processed.
			\item\label{lem:alg:invarianten:c} While the algorithm processes $c\in C$ we always have $c<\min C(P_c^3)\backslash\{c\}$. Moreover all currently open centers which are smaller than $c$ remain open until termination.
			\item\label{lem:alg:invarianten:d} If the algorithm establishes a new connection in Line \ref{alg:reconnect2} or Line \ref{alg:swap} it remains until termination.
		\end{enumerate}
	\end{restatable}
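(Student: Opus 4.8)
The plan is to verify all claims by one induction over the executions of the body of the inner \textbf{while} loop of Algorithm~\ref{Alg:2WLB}, each such execution being either the \textbf{if}-branch of Line~\ref{alg:special} (close $d$, reassign $P_d^1$ in Line~\ref{alg:reconnect2}) or the \textbf{else}-branch (Line~\ref{alg:swap}). The starting point is that $\widetilde a$ is modified only in Line~\ref{alg:reconnect2}, where a point $q\in P_d^1$ (so currently with $|\widetilde a(q)|=1$) is moved to the single center $e$, and in Line~\ref{alg:swap}, where $x\in P_c^3$ loses its connection to $c=c_i$ while $y\in P_d^1$ (so currently with $|\widetilde a(y)|=1$) gains the connection to $c_i$, which is genuinely new because $d\ne c_i$. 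Hence the only newly created connections are $q$--$e$ and $y$--$c_i$, and in both the affected point currently has exactly one assigned center; this is invariant~\ref{lem:alg:invarianten:a0}, requiring no induction. From it I would extract the engine of the argument: \emph{once $|\widetilde a(p)|\le 2$ holds for a point $p$, it holds for the rest of the run}, because $|\widetilde a(p)|$ can grow only through a new connection, which by~\ref{lem:alg:invarianten:a0} happens only when $|\widetilde a(p)|\le 1$; in particular every newly created connection leaves its endpoint with $|\widetilde a|\le 2$.

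Next I would prove invariant~\ref{lem:alg:invarianten:b}. The connection between $x$ and the currently processed center $c_i$ can be deleted only in Line~\ref{alg:swap} (Line~\ref{alg:deletion} deletes connections to $d\ne c_i$), and there $|\widetilde a(x)|$ drops from $\ge 3$ to $\ge 2$; should $x$ later regain the connection to $c_i$ --- necessarily as the ``$y$'' of a later Line~\ref{alg:swap}, or as the ``$q$'' of a later Line~\ref{alg:reconnect2} with $e=c_i$ --- it does so with $|\widetilde a(x)|\le 2$, so by the absorbing property $x$ never re-enters $P_{c_i}^3$. The same reasoning applied to every point, starting from the moment the \textbf{while} loop for $c_i$ exits (when $P_{c_i}^3=\emptyset$), shows $P_{c_i}^3$ stays empty afterwards.

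Invariants~\ref{lem:alg:invarianten:c},~\ref{lem:alg:invarianten:a} and~\ref{lem:alg:invarianten:d} follow in that order. For~\ref{lem:alg:invarianten:c}: if $C(P_{c_i}^3)$ contained a center $c_j$ with $j<i$, some point of $P_{c_i}^3$ (with $|\widetilde a|\ge 3$) would be connected to the already-processed $c_j$, contradicting the emptiness of $P_{c_j}^3$ just established; hence $d=\min C(P_{c_i}^3)\setminus\{c_i\}>c_i$, so Line~\ref{alg:deletion} never closes a center $\le c_i$, which gives the second half of~\ref{lem:alg:invarianten:c}. Then~\ref{lem:alg:invarianten:a} holds because $P_c$ changes only when $c$ is in the role of $c_i$ (Line~\ref{alg:swap}) or of $d$ (Line~\ref{alg:deletion}, where $c$ is being closed), using $e<c_i<d$ to rule out Line~\ref{alg:reconnect2} touching an as-yet-unprocessed center. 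For~\ref{lem:alg:invarianten:d}: a connection to a center is ever removed only while that center is processed (Line~\ref{alg:swap}) or closed (Line~\ref{alg:deletion}); the connection $q$--$e$ has $e<c_i$ already processed and, by~\ref{lem:alg:invarianten:c}, never closed afterwards, so it is permanent; the connection $y$--$c_i$ is not removed in a later Line~\ref{alg:swap} because $y$ never re-enters $P_{c_i}^3$ by~\ref{lem:alg:invarianten:b}, and $c_i$ is never closed after being processed by~\ref{lem:alg:invarianten:c}.

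Finally, termination and feasibility. The potential $\sum_{p\in P}\max(|\widetilde a(p)|-2,0)$ drops by at least one in each \textbf{while} iteration --- by exactly one at $x$ in Line~\ref{alg:swap}, and at the point of $P_{c_i}^3$ witnessing $d\in C(P_{c_i}^3)$ in Line~\ref{alg:deletion}, while the reassignments in Line~\ref{alg:reconnect2} keep $|\widetilde a|=1$ --- so the algorithm halts. No step disconnects a point ($x$ retains $\ge 2$ centers, the points of $P_d^1$ are reassigned, the other points connected to $d$ retain $\ge 1$ center), so every point stays assigned at least once; at termination every point is connected only to centers that were processed, each with empty $P^3$-set by~\ref{lem:alg:invarianten:b}, hence $|\widetilde a(p)|\le 2$ everywhere, i.e.\ the solution has $2$-weak lower bounds. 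And $|\{p\mid c\in\widetilde a(p)\}|$ is non-decreasing for every center until it is closed (Line~\ref{alg:swap} trades $x$ for $y$ at $c_i$ and changes no center's count, Line~\ref{alg:deletion} only adds connections at centers other than $d$), so, starting from the feasible $(C,a)$, every open center keeps at least $B(c)$ points. The delicate part is to carry out the induction in an order that avoids circularity --- the absorbing property and then~\ref{lem:alg:invarianten:b} before~\ref{lem:alg:invarianten:c}, and~\ref{lem:alg:invarianten:c} before~\ref{lem:alg:invarianten:a} and~\ref{lem:alg:invarianten:d} --- and to read the ``until termination'' clauses as per-step statements (no smaller center is closed in the current step, no freshly created connection is deleted in the current step) whose accumulation yields the stated global form.
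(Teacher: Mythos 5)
Your proof is correct and follows essentially the same route as the paper's: verify Property~\ref{lem:alg:invarianten:a0} directly from the two lines that create connections, then derive Property~\ref{lem:alg:invarianten:b}, then \ref{lem:alg:invarianten:c}, and use these to get \ref{lem:alg:invarianten:a} and \ref{lem:alg:invarianten:d}, with termination by a decreasing potential and feasibility from the monotonicity of cluster sizes at open centers. Your explicit ``absorbing property'' (once $|\widetilde a(p)|\le 2$ it stays so) is a clean packaging of what the paper uses implicitly via Property~\ref{lem:alg:invarianten:a0}, and it is the right way to see both halves of Property~\ref{lem:alg:invarianten:b}.
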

	
	\begin{proof}
		The process terminates: For every iteration of the while loop starting in Line \ref{alg:whilehead}, either a point is deleted from $P_{c_i}^3$ or there is at least one point $x \in P_{c_i}^3$ for which $|\widetilde a(x)|$ is reduced by one. Furthermore $|\widetilde a(x)|$ does never increase for any $x\in P_{c_i}^3$.
		
		The final solution satisfies lower bounds: 
		Every time we delete a connection between a point and a center it either happens because the center is closed or we replace this connection by assigning a new point to it. So the lower bounds are satisfied at all open centers.
		
		All points stay connected to a center:
		Assume that the algorithm deletes the connection between a point $p$ and the center $d$ it is exclusively assigned to. This only happens if at this time $d$ is closed by the algorithm. Then $p$ is assigned to another center as defined in Line \ref{alg:reconnect2}. 
		
		We conclude that the solution is feasible.
		
		\textbf{Property~\ref{lem:alg:invarianten:a0}}: The algorithm establishes connections in Line \ref{alg:reconnect2} and Line \ref{alg:swap} which always involve a point currently assigned once.
		
		\textbf{Property~\ref{lem:alg:invarianten:a}:} 
		Let $c\in C$. Connections are only changed for the center that is currently processed or for a smaller center which has been processed already. Thus, the algorithm does not add or delete any connections involving $c$ before $c$ is processed or closed. 
		
		\textbf{Property~\ref{lem:alg:invarianten:b}:} 
		Assume that after the connection between $x\in P_c^3$ and $c$ is deleted by the algorithm, $x$ is again part of $P_c^3$. That would require that the algorithm establishes a new connection for a point which is connected more than once, which does not happen by Property \ref{lem:alg:invarianten:a0}. For the same reason $P_c^3$ remains empty after $c$ is processed by the algorithm. 
		
		\textbf{Property~\ref{lem:alg:invarianten:c}:}
		Assume $c$ is currently processed by the algorithm and $d=\min C(P_c^3)\backslash\{c\}$. We know that at this time $P_d^3$ is non-empty, which is by Property \ref{lem:alg:invarianten:b} only possible if $d$ is processed after $c$. Thus we have $c<d$. This also means that centers can only be closed by the algorithm if they are not processed so far.   
		
		\textbf{Property~\ref{lem:alg:invarianten:d}:}
		If a connection is deleted, the respective point is either connected to more than two centers or to a center which is closed at this time.
		A connection in Line \ref{alg:reconnect2} or Line \ref{alg:swap} is established by the algorithm between a point which is at this time assigned exactly once and a center which is already processed or currently processed by the algorithm. Thus the point is from this time on never assigned to more than two centers and the center remains open until termination by Property \ref{lem:alg:invarianten:c}. So the necessary conditions for a deletion of this connection are never fulfilled. 
	\end{proof}
	
	We now want to bound the cost of new connections created by the algorithm by the cost of the original solution. 
	Notice that only Line~\ref{alg:swap} generates new connections, Line~\ref{alg:reconnect2} re-establishes connections that were originally present.
	So let $N_c$ be the set of all points newly assigned to $c$ by the algorithm in Line \ref{alg:swap} while center $c$ is processed.
	For $y\in N_c$ let $d_y$ be the respective center in Line~\ref{alg:def-d} of Algorithm \ref{Alg:2WLB} and $x_y$ the point in Line \ref{alg:def-y} contained in $P_c^3$ and connected to $d_y$. 
	
	Using the $\alpha$-relaxed triangle inequality, we obtain the following upper bound.
	\begin{align}
		\nonumber d(y,c)	&\leq \alpha(d(y,x_y)+d(x_y,c))\leq \alpha\Big(\alpha\big (d(y,d_y)+d(d_y,x_y)\big )+d(x_y,c)\Big)\\
		&=\alpha ^2\big(d(y,d_y)+d(d_y,x_y)\big)+\alpha d(x_y,c).\label{eq:abc}
	\end{align}
	We can apply \eqref{eq:abc} to all $c\in\widetilde{C}$ and all $y\in N_c$. This yields the following upper bound on the cost of the final solution $(\widetilde C,\widetilde a)$.
	\begin{align}
		\nonumber\cost(\widetilde C,\widetilde a)		&=\sum_{c\in\widetilde{C}}\sum_{\substack{y\in P:\\ c\in\widetilde{a}(y)}} d(y,c)
		=\sum_{c\in\widetilde{C}}
		\Big (\sum_{y\in P_c\backslash N_c\hspace*{-2ex}\phantom{\widetilde C}} d(y,c)+ 
		\sum_{y\in N_c\hspace*{-1ex}\phantom{\widetilde C}} d(y,c)\Big )\\
		\label{thm:2WLB_sum}&\leq \sum_{c\in\widetilde{C}}
		\Big(\sum_{y\in P_c\backslash N_c\hspace*{-2ex}\phantom{\widetilde C}} d(y,c)+
		\sum_{y\in N_c\hspace*{-1ex}\phantom{\widetilde C}} \alpha^2(d(y,d_y)+d(d_y,x_y))+ \alpha d(x_y,c)\Big). 
	\end{align}
	
	Expression~\eqref{thm:2WLB_sum} is what we want to pay for. 
	We show in Observation~\ref{obs:all_are_in_cost} below that all involved distances contribute to the original cost as well. So in principle, we can bound each summand by a term in the original cost. But what we need to do is to bound the number of times that each term in the original cost gets charged. To organize the counting, we count how many times a specific tuple of a point $z$ and a center $f$  occurs as $d(z,f)$ in \eqref{thm:2WLB_sum}.
	Since it is important at which position a tuple appears, we give names to the different occurrences (also see Figure~\ref{fig:types}).
	
	We say that that a tuple appears as a tuple of Type 0 if it appears as $d(y,c)$ in~\eqref{thm:2WLB_sum}, as tuple of Type 1 if it appears as $d(x_y,c)$, and as tuple of Type 2 if it appears as $d(y,d_y)$ or $d(d_y,x_y)$. We distinguish the latter type further by calling a tuple occurring as $d(y,d_y)$ a tuple of Type 2.1 and a tuple occurring as $d(x_y,d_y)$ a tuple of Type 2.2. 
	We say that $(y,d_y),(d_y,x_y)$ and $(x_y,c)$ \emph{contribute} to the cost of $(y,c)$, where by the \emph{cost} of $(y,c)$ we mean the upper bound on $d(y,c)$ in \eqref{eq:abc} which we want to pay for.
	\begin{observation}
		\label{obs:all_are_in_cost}
		If a tuple $(z,f)$, $z\in P, f \in C$, occurs as Type 0, 1 or 2, then $f \in a(z) $, so in particular, $d(z,f)$ occurs as a term in the cost of the original solution.
	\end{observation}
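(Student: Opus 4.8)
The plan is to track the provenance of each connection that appears in~\eqref{thm:2WLB_sum}. A connection can be \emph{created} during the run of Algorithm~\ref{Alg:2WLB} only in two places: Line~\ref{alg:reconnect2} re-establishes a connection $(q,e)$ with $e=\min(a(q)\cap\widetilde C)\in a(q)$, and Line~\ref{alg:swap} creates $(y,c_i)$ where $c_i$ is the center currently processed and $y$ is, by definition, a point of $N_{c_i}$. Consequently, any connection present at a given moment is either an original one (hence in $a$) or was, at some earlier step, created by one of these two lines. I would also record one auxiliary fact, immediate from Property~\ref{lem:alg:invarianten:a0}: since a connection is added only to a point currently holding exactly one assignment, and such an addition leaves it with at most two assignments, a point's assignment count never climbs back above two once it has dropped to two or fewer; therefore \emph{whenever $|\widetilde a(z)|\ge 3$, the current assignment of $z$ is contained in its original one}, i.e.\ $\widetilde a(z)\subseteq a(z)$.

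With this in hand the four cases are short. For a Type 0 tuple $(y,c)$ we have $y\in P_c\setminus N_c$ at termination, so the surviving connection $(y,c)$ is original or was created in Line~\ref{alg:reconnect2} (both give $c\in a(y)$) or was created in Line~\ref{alg:swap}; the last case would have placed $y$ in $N_c$, contradicting $y\notin N_c$. For a Type 1 tuple $(x_y,c)$ and a Type 2.2 tuple $(x_y,d_y)$, the point $x_y$ lies in $P_c^3$ at the iteration where it is picked, so $|\widetilde a(x_y)|\ge 3$ there; by the auxiliary fact $\widetilde a(x_y)\subseteq a(x_y)$, and since both $c$ and $d_y$ belong to $\widetilde a(x_y)$ at that moment, we obtain $c\in a(x_y)$ and $d_y\in a(x_y)$.

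The remaining Type 2.1 tuple $(y,d_y)$, with $y\in P_{d_y}^1$, is the one that genuinely uses the ordering of the centers, and I expect it to be the main obstacle: one must rule out that the replacement connection created by an \emph{earlier} swap could resurface as $(y,d_y)$. Here I would invoke Property~\ref{lem:alg:invarianten:c}: while $c$ is being processed, $d_y=\min C(P_c^3)\setminus\{c\}>c$, so $d_y$ has not been processed yet. Now trace $(y,d_y)$, which is already present when $y$ is chosen from $P_{d_y}^1$: were it created in Line~\ref{alg:swap}, that step would have processed $d_y$ itself, so $d_y$ could not be larger than $c$ in the processing order, contradicting $d_y>c$; hence $(y,d_y)$ is original or was created in Line~\ref{alg:reconnect2}, and in both cases $d_y\in a(y)$. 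Having shown $f\in a(z)$ in all four cases, the conclusion follows because $\cost(C,a)=\sum_{x\in P}\sum_{f\in a(x)}d(x,f)$ lists $d(z,f)$ as a summand exactly when $f\in a(z)$.
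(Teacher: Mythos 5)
Your proof is correct and essentially follows the paper's approach. The only structural difference is cosmetic: you package the core observation about Type~1 and Type~2.2 tuples as a standalone ``auxiliary fact'' (whenever $|\widetilde a(z)|\ge 3$, the current assignment is a subset of the original one), derived from Property~\ref{lem:alg:invarianten:a0}, whereas the paper invokes Property~\ref{lem:alg:invarianten:a0} for Type~1 and Property~\ref{lem:alg:invarianten:a} (unchangedness of $P_{d_y}$ before $d_y$ is processed) for Type~2. For Type~2.1 you trace the provenance of the single connection $(y,d_y)$ and rule out a Line~\ref{alg:swap} creation via the ordering of centers, while the paper gets the same conclusion more directly from Property~\ref{lem:alg:invarianten:a} combined with $c<d_y$; both arguments are sound and rely on the same invariants.
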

	\begin{proof}
		For a center $c$ the set $P_c\backslash N_c$ consists of points which are assigned to $c$ by the initial assignment $a$ or assigned to $c$ while $c$ is not processed by the algorithm. The latter can only happen if a connection is reestablished in Line \ref{alg:reconnect2} which requires that the connection was already present in $(C,a)$. So Type 0 tuples satisfy the statement.
		
		For Type 1 and 2 tuples, 
		consider $y\in N_c$ for some center $c$ and the respective tuples $(x_y,c),(y,d_y),(x_y,d_y).$ Notice that both $y$ and $x_y$ are connected to $d_y$ the step before $y$ is assigned to $c$. By Property \ref{lem:alg:invarianten:c} of Lemma \ref{lemma:easy_properties} we have $c<d_y$. Thus we know by Property \ref{lem:alg:invarianten:a} of Lemma \ref{lemma:easy_properties} that $P_{d_y}$ is not changed by the algorithm at least until $y$ is assigned to $c$. So $d_y\in a(y)$ and $d_y\in a(x_y)$ which proves that Type 2 tuples satisfy the statement. 
		Moreover it holds that $c\in a(x_y)$ since there is a time where $x_y\in P_c^3$. This can, by Property \ref{lem:alg:invarianten:a0} of Lemma \ref{lemma:easy_properties}, only happen if the connection between $x_y$ and $c$ is already part of $(C,a)$. Thus, Type 1 tuples satisfy the statement.
	\end{proof}
	\begin{figure}[t]
		\centering
		\begin{tikzpicture}[scale=0.8,dot/.style = {shape = circle, fill = black, inner sep =1.5pt},
		center/.style= {shape = rectangle, draw, minimum size= .6cm},
		every edge/.style={->, draw, >= stealth'},
		light/.style={color=gray!50!white},
		auto]
		\node[dot, label=$x_y$] (pointx) at (2, 1) {};
		\node[dot, label=$y$] (pointy) at (6, -1) {};
		\node[center] (centerc) at (0,0) {$c$};
		\node[center] (centerd) at (4,0) {$d_y$};
		
		\draw (pointx) edge[light] node[swap, color = black]{$\alpha$} (centerc)
		(pointx) edge node{$\alpha^2$} (centerd)
		(pointy) edge node[swap]{$\alpha^2$} (centerd)
		(pointy) edge (centerc);
		\end{tikzpicture}
		\caption{Bounding the distance between $y$ and $c$. The respective distances appear with a factor of $\alpha$ or $\alpha^2$. Tuple $(x_y,c)$ is of Type 1 and $(x_y,d_y),(y,d_y)$ are of Type 2.\label{fig:types}}
	\end{figure}
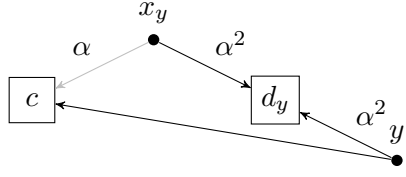
	As indicated above, a tuple $(z,f)$ can contribute to the cost of multiple tuples. Notice that a tuple occurs at most once as a tuple of Type 0 in \eqref{thm:2WLB_sum}. To bound the cost of $(\widetilde C,\widetilde a)$ we bound the number of times a tuple appears as Type 1 or Type 2 tuple in \eqref{thm:2WLB_sum}.  
	\begin{lemma}
		\label{lemma:types}
		For all $z\in P, f \in C$, the tuple $(z,f)$ can appear in \eqref{thm:2WLB_sum} at most once as a tuple of Type 1 and at most once as a tuple of Type 2.
	\end{lemma}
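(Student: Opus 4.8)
The plan is to fix a pair $(z,f)$ with $z\in P$, $f\in C$ and bound, separately, how often it occurs in~\eqref{thm:2WLB_sum} in each of the positions Type~1 ($d(x_y,c)$), Type~2.1 ($d(y,d_y)$) and Type~2.2 ($d(x_y,d_y)$), and then to rule out that Type~2.1 and Type~2.2 both occur for the same pair; together these give the asserted bound of one Type~1 occurrence and one Type~2 occurrence. Before the case analysis I would isolate three facts. First, $|\widetilde a(z)|$ can only increase in Line~\ref{alg:swap}, and only from $1$ to $2$; hence once $|\widetilde a(z)|\le 2$ it stays $\le 2$, and in particular a point that ever has three assignments has had only assignments from $a$ up to that moment (every re-established connection happens in Line~\ref{alg:swap} or Line~\ref{alg:reconnect2}, which require the point to have a single assignment). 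Second, if $z$ is the point picked as ``$x$'' in Line~\ref{alg:def-y} while $c_i$ is processed, then, since $z\in P_{c_i}^3$, $d\in\widetilde a(z)$, $d=\min C(P_{c_i}^3)\setminus\{c_i\}$ and $c_i<d$ by Property~\ref{lem:alg:invarianten:c}, we have $c_i=\min\widetilde a(z)$ and $d=\min(\widetilde a(z)\setminus\{c_i\})$ at that moment, so after $z$ loses $c_i$ all of $z$'s assignments are $\ge d$. Third, during the processing of a single center $c$ the value $\min C(P_c^3)\setminus\{c\}$ is non-decreasing: each while-loop iteration only removes points from $P_c^3$ (Line~\ref{alg:swap} removes $x$; Line~\ref{alg:deletion} removes whoever loses the connection to the closed center) and never adds one, so $C(P_c^3)$ only shrinks, and closing $d$ moreover deletes the current minimizer.

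For Type~1, an occurrence is $d(x_y,f)$ with $z=x_y$, so $z$ is picked as ``$x$'' while $f$ is processed, which requires $z\in P_f^3$; by Property~\ref{lem:alg:invarianten:b} this never recurs once $z$'s connection to $f$ is cut, and $f$ is processed only once, so there is at most one such occurrence. For Type~2.1, an occurrence is $d(z,d_z)$ with $z$ picked as ``$y$'' and $f=d_z$; right afterwards $\widetilde a(z)=\{c_i,f\}$ with $c_i<f$ the processed center (Property~\ref{lem:alg:invarianten:c}), and $c_i$ can never be closed (cf.\ Property~\ref{lem:alg:invarianten:c}: the center processed at a closing time is strictly smaller than the closed one), so $z$ keeps $c_i$ forever. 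If $z$ were picked as ``$y$'' again it would first have to return to some $P_{d'}^1$; by the monotonicity fact it cannot re-enter the three-assignment regime, and it cannot shed $c_i$, so that single assignment must be $\{c_i\}$, forcing $d'=c_i$ and hence (Property~\ref{lem:alg:invarianten:c}) a processed center smaller than $c_i$ at a time after $c_i$ has been processed, which is impossible. Thus $z$ is picked as ``$y$'' at most once, $d_z$ is then fixed, and $(z,f)$ is a Type~2.1 tuple at most once.

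For Type~2.2, an occurrence is $d(z,d_y)$ with $z=x_y$ picked as ``$x$'' while some $c'$ is processed and iteration-center $d=f=d_y$; by the second fact $c'=\min\widetilde a(z)<f$ at that moment and afterwards all of $z$'s assignments are $\ge f$. A second Type~2.2 occurrence, while $c''\ge c'$ is processed, would need $z\in P_{c''}^3$ and $c''=\min\widetilde a(z)<f$ then; but $z$ would have three assignments at that later time, so by monotonicity it never dropped to two in between, never regained a connection, and its assignment set stayed within $\{c\in C\mid c\ge f\}$, contradicting that $c''<f$ lies in it. Finally, suppose $(z,f)$ occurs both as a Type~2.2 tuple at time $t$ (while $c'$ is processed) and as a Type~2.1 tuple at time $t'$. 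Since Type~2.2 needs $|\widetilde a(z)|\ge 3$ and Type~2.1 needs $|\widetilde a(z)|=1$, monotonicity forces $t<t'$, so the center $\hat c$ processed at $t'$ satisfies $\hat c\ge c'$. If $\hat c>c'$, then at $t'$ the picked point $z$ lies in $P_f^1\setminus\overline{P_f^1}$, but $z$ had three assignments at $t$, all from $a$, so $|a(z)|\ge 3$, and $c'\in a(z)$ (Observation~\ref{obs:all_are_in_cost}) is a center smaller than $\hat c$ still open at $t'$, which places $z$ in $\overline{P_f^1}$ --- a contradiction. If $\hat c=c'$, both events happen while $c'$ is processed with iteration-center equal to $f$ at both times, so by the third fact $\min C(P_{c'}^3)\setminus\{c'\}$ equals $f$ throughout $[t,t']$; but between $t$ and $t'$ the point $z$ must lose its assignments to the centers it still held at $t$ besides $c'$ and $f$ (at least one, all $>f$), and such a loss can only come from Line~\ref{alg:deletion} --- a reassignment in Line~\ref{alg:reconnect2} would send $z$ to a center $\le c'<f$, from which it could never return to $\{f\}$ --- where the closed center equals $\min C(P_{c'}^3)\setminus\{c'\}$, forcing that value to be $>f$, a contradiction. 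Hence no pair is both Type~2.1 and Type~2.2, and the three counts combine to the lemma.

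The hardest part is this last cross-case in which the Type~2.2 and the Type~2.1 event both fall inside the processing of a single center: there the exclusion set $\overline{P_d^1}$ in Line~\ref{alg:def-y} does not by itself keep $z$ out, and one must combine the monotonicity of $\min C(P_c^3)\setminus\{c\}$ with a precise description of the only way $z$'s assignment set can shrink down to $\{f\}$ (never passing through a reassignment). Setting up these structural invariants cleanly --- in particular that a point's three-assignment phase precedes all of its later singleton phases and consists only of original assignments --- is where most of the care is needed; the Type~1 and Type~2.1 bounds, by contrast, are short consequences of Properties~\ref{lem:alg:invarianten:b} and~\ref{lem:alg:invarianten:c}.
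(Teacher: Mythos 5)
Your proof is correct and follows the same high-level decomposition as the paper's: a separate argument for each tuple Type and a final argument that no pair is simultaneously Type 2.1 and Type 2.2. Your Type 1 argument is identical, and your Type 2.1 argument reaches the same conclusion by a slightly longer route (you show $z$ is never again picked as ``$y$'' at all, whereas the paper only needs, and shows, that $z$ is never again in $P_f^1$). What you add over the paper are the explicit monotonicity facts: that $|\widetilde a(z)|$ can rise only from $1$ to $2$, that being picked as ``$x$'' pins $\min\widetilde a(z)$ to the processed center with all other assignments sitting above the iteration center $d$, and that $\min C(P_c^3)\setminus\{c\}$ is non-decreasing within one center's while-loop. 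These invariants pay off in the cross-case: when the Type-2.2 event and the later Type-2.1 event occur while the \emph{same} center is being processed ($\hat c=c'$ in your notation), the paper's argument --- which places $y'$ into $\overline{P_f^1}$ by exhibiting the earlier processed center $c$ as an element of $a(y')\cap\{e\mid e<c'\}\cap\widetilde C$ --- relies on $c<c'$ and gives no separate reason when $c=c'$. Your monotonicity-of-the-iteration-center argument shows this subcase cannot arise at all (nothing except the current minimizer can be closed, and closing it would eject $z$ from $P_f^1$), so the paper's conclusion still stands, but your write-up is the one that actually closes it.
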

	\begin{proof}
		In the following, the tuple whose cost the tuple $(z,f)$ contributes to will always be named $(y,c)$, and we denote the time at which $y$ is newly assigned to $c$ by $t$.
		
		\textbf{Type 1:} Assume $(z,f)$ contributes to the cost of $(y,c)$ as a tuple of Type 1. Then $f=c$.
		Notice that at the time step before $t$ we must have $z\in P_c^3$ and afterwards, $z$ is never again contained in $P_c^3$ by Property \ref{lem:alg:invarianten:b} of Lemma \ref{lemma:easy_properties}. Thus the pair $(z,c)$ can never again be responsible for any reassignment to $c$, i.e., $(z,c)=(z,f)$ does not contribute to any further cost as a tuple of Type 1.
		
		\textbf{Type 2.1:}	
		Assume that $(z,f)$ contributes to the cost of $(y,c)$ as a tuple of Type 2.1. Then $z=y$.
		At the time step before $t$, we have $y\in P_f^1$, $f \in C(P_c^3)$, and at time $t$, we have $y\in P_c^2\cap P_{f}^2$. By Property \ref{lem:alg:invarianten:d} of Lemma \ref{lemma:easy_properties}, newly established connections are never deleted, so after time $t$, it always holds that $y\in P_{c}$.
		So even if $y$ is in $P_f$ at a later time, it cannot be in $P_f^1$ since it is also connected to $c$. So $(y,f)=(z,f)$ does not contribute to any further cost as tuple of Type 2.1.
		Furthermore by Property \ref{lem:alg:invarianten:a0} of Lemma \ref{lemma:easy_properties} we know that $y$ is always assigned to fewer than three centers after $t$ which means that $(y,f)$ does not contribute as tuple of Type 2.2 to the cost of any connection established by the algorithm after $t$ either.
		
		\textbf{Type 2.2:}	Finally we consider the case where $(z,f)$ contributes to the cost of $(y,c)$ as a tuple of Type 2.2. At time $t$, the algorithm processes $c$. By the way the algorithm chooses $f$ and $z$, we know that $z \in P_c^3$ (at the beginning of the process, i.e., before $t$) and $f = \min C(P_c^3)\backslash\{c\}$. After $t$, Property \ref{lem:alg:invarianten:b} of Lemma \ref{lemma:easy_properties} implies $z \notin P_c^3$, which means that as a tuple of Type 2.2, it can never again contribute to the cost of any tuple containing $c$.
		Assume instead that it contributes (as Type 2.2) to the cost of a tuple $(y',c')$ for a center $c' \neq c$, and some point $y' \in P$. This is supposed to happen after $t$, so $y'$ is newly assigned to $c'$ at some time $t' > t$.
		Before $c'$ is processed, we must always have $z\in P_{c'}^3$ by Property \ref{lem:alg:invarianten:a0} and \ref{lem:alg:invarianten:a} of Lemma \ref{lemma:easy_properties}. So in particular, at time $t < t'$ we have $c'\in C(P_{c}^3)\backslash \{ c\}$. Moreover we know that at some time while $c'$ is processed by the algorithm we have $f=\min C(P_{c'}^3)\backslash \{c'\}$. Using Property \ref{lem:alg:invarianten:c} of Lemma \ref{lemma:easy_properties} we conclude that $c'<f$. 
		Which is a contradiction since the algorithm chose $f$ and not $c'$ at time $t$, i.e., $f=\min C(P_{c}^3)\backslash \{ c\}$ must hold. Thus, $(z,f)$ cannot contribute to the cost of $(y',c')$ as a tuple of Type 2.2.
		
		It is left to show that $(z,f)$ cannot contribute to the cost of any $(y',c')$ as a tuple of Type 2.1 at some time $t' > t$. 
		For a contribution as Type 2.1, we would have $z=y'$ and $y' \in P_{f}^1$. We show that in this case $y'$ is in fact contained in $\overline {P_{f}^1}$. Remember that at time $t$ we have $y'=z\in P_c^3$ and that this only happens if $|a(y')|\geq 3$ by Property \ref{lem:alg:invarianten:a0} of Lemma \ref{lemma:easy_properties}. Moreover $c$ is still open by Property \ref{lem:alg:invarianten:c} of Lemma \ref{lemma:easy_properties} and is smaller than $c'$. Thus $c\in a(y')\cap \{e\mid e<c'\}\cap \widetilde C$, which proves $y'\in \overline{P_f^1}.$ Therefore the algorithm does not assign $y'$ to $c'$ (see Lines \ref{alg:special}-\ref{alg:deletion}) and $(z,f)$ does not contribute as tuple of Type 2.1 to the cost of any connection established by the algorithm after $t$. 
	\end{proof}
	
	We now know that a tuple only appears at most once as any of the three tuple types. For the final counting, we define $T0$, $T1$ and $T2$ as the sets of all tuples of Type 0, 1 and 2, respectively. 
	We could already prove a bound on the cost now, but to make it slightly smaller and prove Theorem \ref{TwoBound}, we need one final statement.
	\begin{lemma}
		\label{lemma:intersection}
		The set $T0\cap T1\cap T2$ is empty.
	\end{lemma}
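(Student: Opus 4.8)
The plan is to argue by contradiction. Suppose a tuple $(z,f)$ with $z\in P$ and $f\in C$ lies in $T0\cap T1\cap T2$, and trace the history of the connection between $z$ and $f$ over the execution of Algorithm~\ref{Alg:2WLB}. From $(z,f)\in T1$ I get a moment, call it $t_1$, at which $f$ is the center currently being processed, $z=x_y\in P_f^3$ just before $t_1$, and the connection $(z,f)$ is deleted in Line~\ref{alg:swap}. From $(z,f)\in T0$ I get that $f\in\widetilde a(z)$ in the final solution and $z\notin N_f$. Since the connection $(z,f)$ disappears at $t_1$ but is present at termination, it must be re-created afterwards; Line~\ref{alg:swap} only creates connections to the currently processed center, so a re-creation there would happen while $f$ is being processed and would place $z$ in $N_f$, which is excluded. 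Hence $(z,f)$ is re-created in Line~\ref{alg:reconnect2} at a moment $t_3>t_1$, so $f=\min(a(z)\cap\widetilde C)$ holds at $t_3$; in particular, every center of $a(z)$ smaller than $f$ is closed at time $t_3$.

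Next I bring in $(z,f)\in T2$. By Lemma~\ref{lemma:types} this is a single occurrence of Type~2.1 or Type~2.2, so $f=d_{y'}$ for some point $y'$ that the algorithm newly assigns to a center $c'$ in Line~\ref{alg:swap} at a moment $t_2$, with either $z=y'$ (Type~2.1) or $z=x_{y'}$ (Type~2.2). At $t_2$ we have $f=d_{y'}=\min C(P_{c'}^3)\setminus\{c'\}$ while $c'$ is the processed center, so Property~\ref{lem:alg:invarianten:c} of Lemma~\ref{lemma:easy_properties} gives $c'<f$; therefore $f$ has not yet been processed at $t_2$, and since $f$ is processed at $t_1$ the events are ordered $t_2<t_1<t_3$.

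In the Type~2.2 case ($z=x_{y'}$), just before $t_2$ the point $z$ lies in $P_{c'}^3$, hence is connected to $c'$ while being assigned at least three times; since the algorithm never raises the assignment count of a point already assigned at least twice (Property~\ref{lem:alg:invarianten:a0}), the connection $(z,c')$ cannot have been created by the algorithm, so $c'\in a(z)$. As $c'<f$ and $c'$ is processed at $t_2<t_3$, Property~\ref{lem:alg:invarianten:c} says $c'$ is still open at $t_3$, contradicting that every center of $a(z)$ smaller than $f$ is closed at $t_3$. In the Type~2.1 case ($z=y'$) the argument does not even need $t_3$: at $t_2$ the algorithm sets $\widetilde a(z)=\{c',f\}$; the fresh connection $(z,c')$ survives to termination by Property~\ref{lem:alg:invarianten:d}, and the connection $(z,f)$ survives until $t_1$ by the invariants of Lemma~\ref{lemma:easy_properties} (it cannot be deleted by closing $f$, which is later processed, nor by an earlier swap step, which by Property~\ref{lem:alg:invarianten:b} would keep $z$ out of $P_f^3$ afterwards). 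Hence $z$ is assigned to exactly the two centers $c'$ and $f$ throughout $(t_2,t_1)$, so $z\notin P_f^3$ just before $t_1$, contradicting $(z,f)\in T1$. Either way a contradiction arises, so $T0\cap T1\cap T2=\emptyset$.

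I expect the main obstacle to be the event-ordering bookkeeping: pinning down which connections are original (members of $a(z)$) versus created by the algorithm (and hence permanent by Property~\ref{lem:alg:invarianten:d}), and keeping track of when each involved center is processed or closed, so that the invariants of Lemma~\ref{lemma:easy_properties} are applied at the correct time stamps. The one genuinely non-routine step is the deduction $c'\in a(z)$ in the Type~2.2 case from the assignment-count invariant, and making sure that it is precisely the combination of all three membership conditions that produces the contradiction: $T1$ supplies the deletion at $t_1$, $T0$ supplies the re-creation at $t_3$ forcing $f=\min(a(z)\cap\widetilde C)$, and $T2$ supplies $c'\in a(z)$ with $c'<f$ and $c'$ already processed.
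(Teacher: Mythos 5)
Your proof is correct and follows essentially the same approach as the paper's: you use $T1$ and $T0$ to conclude that $(z,f)$ is deleted while $f$ is processed and later re-established in Line~\ref{alg:reconnect2} (forcing $f=\min(a(z)\cap\widetilde C)$ at the re-creation time), then split on whether the Type~2 occurrence is $2.1$ or $2.2$ and derive the same two contradictions the paper does (Type~2.1 via Property~\ref{lem:alg:invarianten:a0}, Type~2.2 via the minimality of $f$). The only notable stylistic difference is in the Type~2.1 case: you argue that the specific connections $(z,c')$ and $(z,f)$ both survive on $(t_2,t_1)$ so that $\widetilde a(z)=\{c',f\}$, whereas the paper more directly observes that Property~\ref{lem:alg:invarianten:a0} caps $|\widetilde a(z)|$ at $2$ after $t_2$ (which alone forces $z\notin P_f^3$ at $t_1$, regardless of whether $(z,f)$ persists); your extra bookkeeping is correct but not needed.
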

	\begin{proof}
		Let $(z,f)\in T0\cap T1\cap T2$. Since $(z,f)$ is of Type 0, the point $z$ must be connected to $f$ in the final assignment $\widetilde a.$ We distinguish whether the connection between $z$ and $f$ was deleted at some point by the algorithm or not. If it is not deleted, $(z,f)$ cannot be of Type 1 since this would require that $z$ is temporarily not assigned to $f$. Otherwise the connection between $z$ and $f$ was deleted while $f$ was processed and later reestablished by the algorithm in Line \ref{alg:reconnect2}. 
		
		By assumption the tuple is also of Type 2. 
		Assume it is of Type 2.1 and contributes to the cost of a tuple $(y, c)$ with $z=y$. We know that $c<f$ by Property \ref{lem:alg:invarianten:c} of Lemma~\ref{lemma:easy_properties}. Consider the time when $z$ is newly assigned to $c$. The step before we have $z\in P_f^1$. On the other hand while $f$ is processed we have $z\in P_f^3$ in contradiction to Property \ref{lem:alg:invarianten:a0} of Lemma \ref{lemma:easy_properties}.
		
		Assume finally that $(z,f)$ is of Type 2.2 and contributes to the cost of a tuple $(y,c).$ Again we have $c<f$. Consider the time $y$ is newly assigned to $c$. The step before we have $z\in P_{c}^3$ and, by Property \ref{lem:alg:invarianten:a0} and \ref{lem:alg:invarianten:a} of Lemma \ref{lemma:easy_properties}, also $z\in P_{f}^3$. At the time the connection between $z$ and $f$ is reestablished by the algorithm, both centers are contained in $a(z)\cap\widetilde C$. This is a contradiction to $c<f=\min (a(z)\cap \widetilde C)$.   
		This completes the proof.  
	\end{proof}
	\begin{proof}[Proof of Theorem \ref{TwoBound}]
		Slightly abusing the notation we write $d(e)$ for a tuple $e=(z,f)$ by which we mean the distance $d(z,f)$. Combining Lemma \ref{lemma:types} and \ref{lemma:intersection} we obtain
		\addtocounter{equation}{-1}
		\begin{align}
		\cost(\widetilde C,\widetilde a)
		&\leq \sum_{c\in\widetilde{C}}
		\Big(\sum_{y\in P_c\backslash N_c\hspace*{-2ex}\phantom{\widetilde C}} d(y,c)+
		\sum_{y\in N_c\hspace*{-1ex}\phantom{\widetilde C}} \alpha^2(d(y,d_y)+d(d_y,x_y))+ \alpha d(x_y,c)\Big)\\
		&= \sum_{e\in T0} d(e)+\alpha^2\sum_{e\in T2}d(e)+\alpha\sum_{e\in T1}d(e) \label{thmpf:eq}\\
		&\leq (\alpha^2+\alpha)\cost (C,a).\label{thmpf:ineq} 
		\end{align}
		By Lemma \ref{lemma:types} we know that a tuple only appears at most once as any of the three tuple types. We replace \eqref{thm:2WLB_sum} by summing up the cost of all tuples in $T_i$ for $i=0,1,2$ with the respective factor for each type and obtain \eqref{thmpf:eq}. 
		
		Finally by Observation \ref{obs:all_are_in_cost} the cost $d(e)$ for $e\in T_0\cup T_1\cup T_2$ occurs as a term in the original solution and $T_0\cap T_1\cap T_2=\emptyset$ by Lemma \ref{lemma:intersection}, which proves \eqref{thmpf:ineq}.
	\end{proof}
	So it is possible to reduce the number of assignments per point to two at a constant factor increase in the approximation factor. We can go even further and allow points to be fractionally assigned to centers which poses the question if it is possible to bound the assigned amount by a number smaller than two. Indeed we can prove for every $\epsilon\in (0,1)$ that we can modify a solution to generalized $k$-median with weak lower bounds such that every point is assigned by an amount of at most $1+\epsilon$ and the cost increases by a factor of $\mathcal O(\frac{1}{\epsilon}\alpha^2)$.
	Note that even if we allow fractional assignments of points to centers, the centers remain either open or closed, which differentiates our result from a truly fractional solution, where it is also allowed to open centers fractionally. 
	Furthermore, the new assignment assigns every point to at most two centers. It is assigned by an amount of one to one center and potentially by an additional amount of $\epsilon$ to a second center. 
	
	Since we consider fractional assignments we modify our notation and denote with $\widetilde{a}_{x}^{c}\in[0,1]$ the amount by which $x\in P$ is assigned to $c\in \widetilde{C}$, where $\widetilde{C}$ is the set of centers. 
	Let $\widetilde a_x=\sum_{c\in\widetilde{C}}\widetilde a_x^c$ be the amount by which $x\in P$ is assigned to $\widetilde{C}$. 
	The assignment $\widetilde{a}$ is feasible if $\widetilde{a}_x\geq 1$ for all $x\in P$ and $\sum_{x\in P}\widetilde a_x^c\geq B(c)$ for all $c\in\widetilde{C}$, and its cost is
	\[
	\cost(\widetilde{C},\widetilde{a})
	=\sum_{c\in\widetilde{C}}\sum_{\hspace{-1ex}\phantom{\widetilde{C}}x\in P}\widetilde a_x^c d(x,c).
	\]
	The proof of the following theorem is in Appendix~\ref{appendix-eps}. It is similar to the proof of Theorem~\ref{TwoBound} but to satisfy lower bounds we can only assign an amount of $\epsilon$ from points which are already assigned once. Therefore we consider suitable sets with $\lceil\frac{1}{\epsilon}\rceil$ points, which leads to the increase of $\mathcal O(\frac{1}{\epsilon})$ in the approximation factor.  
	
	\begin{restatable}{theorem}{thmEps}
		\label{EpsilonBound}
		Given $0<\epsilon<1$ and a solution $(C,a)$ to generalized $k$-median with weak lower bounds. We can compute a solution $(\widetilde{C},\widetilde{a})$ to generalized $k$-median with $(1+\epsilon)$-weak lower bounds, i.e., $\widetilde a_x\leq 1+\epsilon$ for all $x\in P$ in polynomial time such that $\cost(\widetilde{C},\widetilde{a})\leq (\lceil\frac{1}{\epsilon}\rceil\alpha(\alpha+1)+1)\cost(C,a).$ 
	\end{restatable}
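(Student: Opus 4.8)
The plan is to run a fractional, \emph{batched} version of the reassignment procedure behind Theorem~\ref{TwoBound}. Since a point currently assigned by total amount $1$ may only absorb an extra amount of $\epsilon$ without exceeding the budget $1+\epsilon$, I would replace each single reassignment of Algorithm~\ref{Alg:2WLB} by one that uses $\lceil 1/\epsilon\rceil$ such points, each receiving an $\epsilon$-fraction of a new connection; this is exactly where the extra factor $\lceil 1/\epsilon\rceil$ in the bound comes from.

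Concretely, I would start from $\widetilde C = C$, $\widetilde a = a$ (all amounts $1$), fix an ordering $c_1\le\dots\le c_{k'}$, and call a point \emph{settled} once it has one amount-$1$ connection and one amount-$\epsilon$ connection. Defining $P^1_d$ and the \emph{overloaded} set (points with at least two amount-$1$ connections) with respect to the amount-$1$ connections only, and reusing a suitably strengthened version of the excluded set $\overline{P^1_d}$ of Algorithm~\ref{Alg:2WLB}, I would process the centers in order; while the current center $c$ still carries an overloaded point, set $d=\min C(\cdot)\setminus\{c\}$ over those points, pick an overloaded $x$ with an amount-$1$ connection to $d$, and: if $P^1_d\setminus\overline{P^1_d}$ has at least $\lceil 1/\epsilon\rceil$ points, pick such a set $Y$, delete the amount-$1$ connection between $x$ and $c$, and give each $y\in Y$ an extra amount-$\epsilon$ connection to $c$ (this settles $Y$ and restores at least one unit of assignment at $c$, so its lower bound still holds); if this set is non-empty but smaller than $\lceil 1/\epsilon\rceil$, $\epsilon$-connect all of it to $c$ (which only over-satisfies $c$) so that afterwards $P^1_d=\overline{P^1_d}$; and if it is empty, close $d$ as in Algorithm~\ref{Alg:2WLB}, moving the amount-$1$ connection of every point still attached to $d$ to its smallest currently open original center. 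This terminates since every step removes a point from an overloaded set or strictly lowers the total overload, and settled points are never enlarged.

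I would then establish the analogues of Lemma~\ref{lemma:easy_properties} -- no connection is ever added to a settled point, a center's set of amount-$1$ clients is frozen until it is processed or closed, the processed center is smaller than every other center in the relevant set and all smaller open centers stay open, and every newly created connection survives -- and conclude feasibility: every point keeps total amount $\ge 1$, lower bounds are never violated, final amounts lie in $\{1,1+\epsilon\}$, and every point ends at at most two centers. For the cost, only the amount-$\epsilon$ batch edges go beyond the final amount-$1$ edges, and those amount-$1$ edges are of the form $d(z,f)$ with $f\in a(z)$ and sum to at most $\cost(C,a)$. For a batch edge $(y,c)$ I would bound, via the $\alpha$-relaxed triangle inequality exactly as in~\eqref{eq:abc}, $d(y,c)\le\alpha^2(d(y,d_y)+d(d_y,x_y))+\alpha\,d(x_y,c)$, and bound its contribution $\epsilon\,d(y,c)$ crudely by $d(y,c)$. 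Classifying tuples into Types $0,1,2.1,2.2$ as in Theorem~\ref{TwoBound}, the analogue of Observation~\ref{obs:all_are_in_cost} shows every charged tuple is an original connection, and the analogues of Lemma~\ref{lemma:types} -- whose arguments carry over with ``at most two amount-$1$ connections'' replacing ``assigned fewer than three times'' -- show each original tuple is charged at most $\lceil 1/\epsilon\rceil$ times as Type~$1$, at most $\lceil 1/\epsilon\rceil$ times as Type~$2$ (a batch has at most $\lceil 1/\epsilon\rceil$ members, and the rest of the reasoning is unchanged), and at most once as Type~$0$. Summing with coefficients $\alpha$, $\alpha^2$ and $1$ -- without needing emptiness of $T0\cap T1\cap T2$, which accounts for the additive $1$ -- gives $\cost(\widetilde C,\widetilde a)\le(\lceil 1/\epsilon\rceil\alpha(\alpha+1)+1)\cost(C,a)$.

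The hard part will not be this estimate, which is just a batched copy of the one for Theorem~\ref{TwoBound}, but the bookkeeping around closing a center: I must design $\overline{P^1_d}$ and the partial-batch step so that $d$ is closed only when every point still attached to $d$ by an amount-$1$ connection has another smaller, still-open original center to move to -- in particular so that no settled batch member and no point with a single original center is ever stranded -- and so that a settled point is never touched by a later step. Exactly these facts are what make the Type~$1$ and Type~$2$ occurrence counts hold with the batch size $\lceil 1/\epsilon\rceil$ as the only new factor over Theorem~\ref{TwoBound}.
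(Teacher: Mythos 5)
Your high-level plan coincides with the paper's proof (Appendix~\ref{appendix-eps}, Algorithm~\ref{Alg:eWLB}): batch the reassignment of Theorem~\ref{TwoBound} into groups of $\lceil 1/\epsilon\rceil$ amount-$\epsilon$ connections, use the same $\alpha$-relaxed triangle-inequality decomposition, classify charged tuples into Types $0,1,2.1,2.2$, and bound each Type~$1$/$2$ count by $\lceil 1/\epsilon\rceil$. But your ``middle case'' -- when $1\le|P^1_d\setminus\overline{P^1_d}|<\lceil 1/\epsilon\rceil$, you $\epsilon$-connect everything to $c$ without deleting $x$--$c$ and without closing $d$ -- breaks the Type~$1$ count. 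The charging argument rests on the invariant that once $(x,c)$ serves as the pivot for a batch of new connections to $c$, the point $x$ leaves $P_c^2$ in that same step, so $(x,c)$ is charged as Type~$1$ only during that one step, hence at most $\lceil 1/\epsilon\rceil$ times. In your middle case $x$ keeps both $x$--$c$ and $x$--$d$, so $x$ remains in $P_c^2$; closing $d$ on the next iteration only decrements $\widetilde a_x$ by one, and if $x$ was originally connected to many centers, $x$ can serve as the pivot of several further partial batches, each contributing up to $\lceil 1/\epsilon\rceil-1$ more Type~$1$ charges to $(x,c)$, so the per-tuple bound fails. The paper instead folds the sub-threshold case into the closing step: when $|P^1_d\setminus\overline{P^1_d}|<\lceil 1/\epsilon\rceil$ it closes $d$ immediately, gives the surviving points of $P^1_d\setminus\overline{P^1_d}$ amount-$1$ (not amount-$\epsilon$) connections to $c$, and deletes $x$--$c$ when $\widetilde a_x\ge 3$, so that $x$ leaves $P_c^2$ in the same step and the count holds.

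The second concrete gap is the closing rule. You propose to move every amount-$1$ client of $d$ to its ``smallest currently open original center,'' but a settled point $q$ with amount $1$ at $d$ and $\epsilon$ at some earlier center $c'$ can have $a(q)=\{d\}$, so there is no other original center to move to. The paper keeps these points as the set $Q^\epsilon_d$ and, when $d$ is closed, upgrades the existing $\epsilon$-connection at $c'$ to amount~$1$ (Line~\ref{alg2:reconnection2}); this costs nothing extra since the edge $d(q,c')$ was already charged when the $\epsilon$-connection was created. You flag the closing bookkeeping as the hard part, but as stated it cannot be repaired by strengthening $\overline{P^1_d}$: the fix you need is the upgrade-to-$1$ at $c'$, which is not an original center of $q$.
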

	
	On pages~\pageref{sec:kmeanswithfaccosts}-\pageref{sec:kmeanswithfaccosts_end} we reduce generalized $k$-median with weak lower bounds to generalized $k$-median with center cost and obtain a $(6.5+\epsilon)$ or $O(1)$-approximation for $k$-median or $k$-means with weak lower bounds, respectively. We combine this with Theorem \ref{TwoBound} to get a solution with $2$-weak lower bounds whose cost is a constant factor away from the problem with weak lower bounds. 
	Since weak lower bounds are a relaxation of 2-weak lower bounds, we get: 
	\begin{corollary}
		Let $OPT$ be an optimal solution to $k$-median/$k$-means with 2-weak lower bounds and $\epsilon>0$ be a constant. We can compute a solution $(C,a)$ in polynomial time for
		\begin{enumerate}
			\item k-median with 2-weak lower bounds with $\cost(C,a)\leq (13+\epsilon)\cost(OPT)$
			\item k-means with 2-weak lower bounds with $\cost(C,a)\leq O(1)\cost(OPT).$
		\end{enumerate}
	\end{corollary}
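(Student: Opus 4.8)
The plan is to chain three ingredients that are already in place and to track the constants. The key structural observation is that weak lower bounds are a relaxation of $2$-weak lower bounds: every feasible solution to the $2$-weak variant is feasible for the weak variant, so on any instance $\cost(OPT) = \opt_{\mathrm{2weak}} \geq \opt_{\mathrm{weak}}$, where $\opt_{\mathrm{weak}}$ denotes the optimal cost of generalized $k$-median with weak lower bounds. Consequently, any multiplicative upper bound proven against $\opt_{\mathrm{weak}}$ is automatically also a bound against $\cost(OPT)$, and it suffices to produce a $2$-weak solution whose cost is a constant factor above $\opt_{\mathrm{weak}}$.

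First I would run the pipeline of pages~\pageref{sec:kmeanswithfaccosts}--\pageref{sec:kmeanswithfaccosts_end}: by Corollary~\ref{cor:findasolution} together with the $(3.25+\epsilon')$-approximation for $k$-median with center costs of~\cite{CL12}, one obtains in polynomial time a solution $(C',a')$ with weak lower bounds and $\cost(C',a') \leq (6.5+\epsilon')\,\opt_{\mathrm{weak}}$; for $k$-means the Jain--Vazirani algorithm~\cite{Jain}, run with $F=P$ and with center costs (compatible modifications), gives $\cost(C',a') \leq O(1)\,\opt_{\mathrm{weak}}$. This $(C',a')$ may assign a point to arbitrarily many centers.

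Then I would apply Theorem~\ref{TwoBound} to $(C',a')$, obtaining in polynomial time a solution $(C,a)$ with $2$-weak lower bounds such that $\cost(C,a) \leq \alpha(\alpha+1)\,\cost(C',a')$. For $k$-median we have $\alpha=1$, hence $\alpha(\alpha+1)=2$ and $\cost(C,a) \leq 2(6.5+\epsilon')\,\opt_{\mathrm{weak}} \leq (13+2\epsilon')\,\cost(OPT)$; setting $\epsilon' = \epsilon/2$ yields the claimed $(13+\epsilon)$-bound. For $k$-means we have $\alpha=2$, hence $\alpha(\alpha+1)=6$ and $\cost(C,a) \leq 6\cdot O(1)\,\opt_{\mathrm{weak}} = O(1)\,\cost(OPT)$, where any further constant loss incurred by first reducing to the case $F=P$ (via a Lemma~\ref{lem:f-p}-style argument applied assignment by assignment) is absorbed into the $O(1)$.

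I do not expect a genuine obstacle: the statement is essentially a composition of Corollary~\ref{cor:findasolution} and Theorem~\ref{TwoBound}. The only things that need care are getting the direction of the relaxation right (a $2$-weak solution is weak, not the other way around) and rescaling $\epsilon'$ so that the final $k$-median guarantee reads exactly $13+\epsilon$ rather than $13+2\epsilon$.
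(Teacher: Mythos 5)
Your proposal is correct and is exactly the paper's argument: chain the $(6.5+\epsilon')$-approximation for weak lower bounds (from Corollary~\ref{cor:findasolution} plus~\cite{CL12}, or Jain--Vazirani with $F=P$ for $k$-means) with Theorem~\ref{TwoBound} (losing $\alpha(\alpha+1)$, i.e.\ $2$ for $k$-median and $6$ for $k$-means), and compare to $\cost(OPT)$ via the observation that weak lower bounds relax $2$-weak lower bounds. The rescaling $\epsilon' = \epsilon/2$ to land on $13+\epsilon$ is the right bookkeeping, and the aside on absorbing the $F=P$ reduction (Lemma~\ref{lem:f-p}) into the $O(1)$ for $k$-means is also what the paper implicitly does.
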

	
	Combining the results on pages~\pageref{sec:kmeanswithfaccosts}-\pageref{sec:kmeanswithfaccosts_end} with Theorem \ref{EpsilonBound} we obtain:
	\begin{corollary}
		Let $OPT$ be an optimal solution to $k$-median/$k$-means with $(1+\epsilon')$-weak lower bounds and $\epsilon>0$ be a constant. We can compute a solution $(C,a)$ in polynomial time for
		\begin{enumerate}
			\item k-median with $(1+\epsilon')$-weak lower bounds with $\cost(C,a)\leq((13+\epsilon)\lceil\frac{1}{\epsilon'}\rceil+6.5+\epsilon)\cost(OPT)$
			\item k-means with $(1+\epsilon')$-weak lower bounds with $\cost(C,a)\leq O(\frac{1}{\epsilon'})\cost(OPT).$
		\end{enumerate}
	\end{corollary}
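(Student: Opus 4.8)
The statement is a pure composition of earlier results, so the plan is to assemble the pipeline and then keep track of constants; there is essentially no new construction to do. Fix one of the two problems: for $k$-median we have $\alpha=1$ and the approximation ratio for weak lower bounds is $6.5+\epsilon$, while for $k$-means we have $\alpha=2$ and that ratio is some $O(1)$. Let $OPT_{\mathrm w}$ denote an optimal solution to (generalized) $k$-median with \emph{weak} lower bounds, and let $OPT$ denote an optimal solution with $(1+\epsilon')$-weak lower bounds.

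First I would run the algorithm behind Corollary~\ref{cor:findasolution}, instantiated with the $(3.25+\epsilon)$-approximation for $k$-median with center costs (respectively the $O(1)$-approximation for $k$-means with center costs, using the reduction $F=P$), to obtain in polynomial time a solution $(C_1,a_1)$ with weak lower bounds and $\cost(C_1,a_1)\le(6.5+\epsilon)\cost(OPT_{\mathrm w})$, respectively $\cost(C_1,a_1)=O(1)\cdot\cost(OPT_{\mathrm w})$. Then I would apply Theorem~\ref{EpsilonBound} with parameter $\epsilon'$ to $(C_1,a_1)$, obtaining in polynomial time a solution $(C,a)$ with $(1+\epsilon')$-weak lower bounds and $\cost(C,a)\le(\lceil\frac{1}{\epsilon'}\rceil\alpha(\alpha+1)+1)\cost(C_1,a_1)$. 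Plugging in $\alpha(\alpha+1)=2$ for $k$-median and chaining with the previous inequality gives $\cost(C,a)\le(2\lceil\frac{1}{\epsilon'}\rceil+1)(6.5+\epsilon)\cost(OPT_{\mathrm w})=((13+\epsilon)\lceil\frac{1}{\epsilon'}\rceil+6.5+\epsilon)\cost(OPT_{\mathrm w})$ after renaming $\epsilon$; for $k$-means the composed factor is $O(\frac{1}{\epsilon'})$, so $\cost(C,a)=O(\frac{1}{\epsilon'})\cost(OPT_{\mathrm w})$.

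It remains to replace $OPT_{\mathrm w}$ by $OPT$, i.e.\ to show $\cost(OPT_{\mathrm w})\le\cost(OPT)$ — that weak lower bounds are a relaxation of $(1+\epsilon')$-weak lower bounds. This is the analogue of the remark made just above for $2$-weak lower bounds, and I expect it to be the only step needing care, since here a $(1+\epsilon')$-weak solution is \emph{fractional} and is therefore not literally a weak (integral-multiset) solution, so one cannot simply quote an inclusion of solution spaces. The extra observation is that, for a fixed center set $O$, minimising the fractional assignment cost subject to $\widetilde a_x^c\in[0,1]$, $\sum_c\widetilde a_x^c\ge 1$ and $\sum_x\widetilde a_x^c\ge B(c)$ is a transportation problem — a minimum-cost flow with integral arc capacities and integral arc lower bounds — and hence has an integral optimal solution, which is exactly a feasible weak (integral) lower-bounded assignment for $O$ of no larger cost. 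Applying this with $O$ equal to the center set of $OPT$ — whose assignment is feasible for that fractional problem once its extra constraint $\widetilde a_x\le 1+\epsilon'$ is dropped — yields a weak lower-bounded solution of cost at most $\cost(OPT)$, and hence $\cost(OPT_{\mathrm w})\le\cost(OPT)$. Substituting this into the bound of the previous paragraph gives exactly the two claimed inequalities, and everything in the pipeline runs in polynomial time.
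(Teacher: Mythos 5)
Your proof is correct and follows the same high-level pipeline the paper intends: instantiate Corollary~\ref{cor:findasolution} to get a weak-lower-bound solution, apply Theorem~\ref{EpsilonBound}, then chain the factors. The arithmetic also checks out: with $\alpha=1$ the composed factor $(2\lceil 1/\epsilon'\rceil+1)(6.5+\epsilon_0)=(13+2\epsilon_0)\lceil 1/\epsilon'\rceil+6.5+\epsilon_0$ matches the stated bound after halving $\epsilon_0$, and for $k$-means the factors multiply to $O(1/\epsilon')$.

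Where you go beyond the paper is in the last step. The paper derives the analogous $2$-weak corollary by observing that ``weak lower bounds are a relaxation of $2$-weak lower bounds,'' which is a genuine inclusion of feasible regions since both are integral. It then states this corollary as following by the same combination, but in the $(1+\epsilon')$-weak setting that inclusion argument no longer literally applies: a $(1+\epsilon')$-weak solution is fractional, so it is not a feasible weak solution, and so $\cost(OPT_{\mathrm w})\le\cost(OPT)$ is not a free consequence of a subset relation. You correctly flag this and close the gap with the right tool: fixing the center set $O$ of $OPT$ and dropping the $\widetilde a_x\le 1+\epsilon'$ constraint leaves an LP whose constraint matrix is the node--edge incidence matrix of a bipartite graph (plus unit box constraints), hence totally unimodular with integral right-hand side, so the LP admits an integral optimum; that integral optimum is a feasible weak lower-bounded assignment on $O$ of cost at most $\cost(OPT)$. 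This is exactly the observation needed, and it makes the corollary's proof airtight. An equivalent alternative would be to reprove Lemma~\ref{lem:genkmediancentercost} directly against a fractional $(1+\epsilon')$-weak optimum (using that $\sum_x\widetilde a_x^c d(x,c)\ge\sum_{q\in D_c}d(q,c)$ whenever $\widetilde a_x^c\in[0,1]$ and $\sum_x\widetilde a_x^c\ge B(c)$), but your LP-integrality route is clean, self-contained, and leaves the earlier lemmas untouched.
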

	
	\subsection{A bicriteria algorithm to generalized \kk-median with lower bounds}
	
	A $(\beta,\delta)$-bicriteria solution for generalized $k$-median with lower bounds consists of at most $k$ centers $C'\subset F$ and an assignment $a'\colon P\rightarrow C$ such that at least $\beta B(c)$ points are assigned to $c\in C'$ by $a'$ and $\cost(C',a')\leq \delta \cost(OPT)$. Here $OPT$ denotes an optimal solution to generalized $k$-median with lower bounds.
	
	Given a $\beta \geq \frac{1}{2}$ and a $\gamma$-approximate solution to generalized $k$-median with 2-weak lower bounds $(C,a)$, we can compute a $(\beta, \gamma\max\{\frac{\alpha\beta}{1-\beta}+1, \frac{\alpha^2\beta}{1-\beta}\})$-bicriteria solution in the following way. In the beginning all points are unassigned. Let $C=\{c_1,\ldots, c_{k'}\}$. Starting at $c_1$ we decide for all centers in $C$ if they are closed or not. If we decide that a center $c$ is open we directly assign at least $\lceil\beta B(c)\rceil$ points to $c$.  Let $A_i$ denote the points assigned to $c_i$ under $a$. When considering a center $c_i$, we check if at least $\lceil\beta B(c_i)\rceil$ points in $A_i$ are not assigned so far. If so, we open $c_i$ and assign all currently unassigned points from $A_i$ to it. Otherwise, we know that a significant fraction of the points in $A_i$ are already assigned to some earlier centers. We close $c_i$ and use the existing connections to reassign the unassigned points in $A_i$. We can bound the cost of the reassignment because a constant fraction of the points in $A_i$ is already assigned. For this it is crucial that every point is assigned at most twice in the original solution. Details of the proof are in Appendix~\ref{sec:bicriteria}. 
	
	\begin{restatable}{theorem}{thmBicriteria}
		Given a $\gamma$-approximate solution $(C,a)$ to  generalized $k$-median with 2-weak lower bounds and a fixed $\beta \in [0.5,1)$, 
		Algorithm~\ref{algo:bicriteriell} (on page~\pageref{algo:bicriteriell}) computes a $(\beta, \gamma\max\{\frac{\alpha\beta}{1-\beta}+1, \frac{\alpha^2\beta}{1-\beta}\})$-bicriteria solution to generalized $k$-median with lower bounds in polynomial time. In particular, there exists a polynomial-time $(\frac{1}{2},O(1))$-bicriteria approximation algorithm for $k$-means with lower bounds.
	\end{restatable}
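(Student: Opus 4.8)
The plan is to analyze Algorithm~\ref{algo:bicriteriell}, which sweeps through the centers $c_1,\dots,c_{k'}$ of the given $2$-weak solution $(C,a)$ in the fixed order while maintaining the set of points not yet placed in the solution under construction. Reaching $c_i$, it inspects $A_i:=a^{-1}(c_i)=\{x\in P\mid c_i\in a(x)\}$: if at least $\lceil\beta B(c_i)\rceil$ points of $A_i$ are still unplaced, it \emph{opens} $c_i$ and places every currently unplaced point of $A_i$ at $c_i$; otherwise it \emph{closes} $c_i$ and must reassign the still-unplaced points of $A_i$, the set $U_i$, routing each of them through a point of $R_i:=A_i\setminus U_i$ (the points of $A_i$ already placed) to that point's open center. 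Feasibility is the easy half. An opened center $c_i$ receives at least $\lceil\beta B(c_i)\rceil\ge\beta B(c_i)$ points at the moment it is opened, and since a center, once decided, is never reopened or reclosed, this count never drops; the output uses a subset of $C$, hence at most $k$ centers, and the algorithm runs in polynomial time. Every point ends up placed: whenever $c_i$ is closed we have $R_i\neq\emptyset$, since $R_i=\emptyset$ would give $|U_i|=|A_i|\ge B(c_i)\ge\lceil\beta B(c_i)\rceil$, contradicting the closing condition, so the routing of $U_i$ is well defined, and once a point is placed it stays placed.

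The substance is the cost bound, which I would obtain by charging $\cost(C',a')$ onto the terms of $\cost(C,a)$. A point placed \emph{directly} is assigned to a center $c\in a(p)$, so its contribution $d(p,c)$ is a term of $\cost(C,a)$, charged with factor $1$. For a point $p$ \emph{reassigned} while $c_i$ is closed, routed through $q\in R_i$ to $q$'s center $c_q$, I would apply the $\alpha$-relaxed triangle inequality twice, in the order that keeps the router term cheap:
\[
d(p,c_q)\;\le\;\alpha\bigl(d(p,q)+d(q,c_q)\bigr)\;\le\;\alpha^2\bigl(d(p,c_i)+d(q,c_i)\bigr)+\alpha\,d(q,c_q).
\]
Here $d(p,c_i)$ and $d(q,c_i)$ are terms of $\cost(C,a)$ because $c_i\in a(p)$ and $c_i\in a(q)$; for $d(q,c_q)$ to be a term of $\cost(C,a)$ as well, the algorithm must guarantee that $q$ was placed \emph{directly}, i.e.\ $c_q\in a(q)$ --- this is the one structural property the reassignment rule has to enforce. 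To limit the reuse of a router I would route all of $U_i$ through the single $q^*\in R_i$ minimizing $\alpha^2 d(q,c_i)+\alpha\,d(q,c_q)$ and bound the total routing cost of $U_i$ by $\tfrac{|U_i|}{|R_i|}\sum_{q\in R_i}\bigl(\alpha^2 d(q,c_i)+\alpha\,d(q,c_q)\bigr)$; the closing condition gives $|U_i|<\beta B(c_i)$ and feasibility of $(C,a)$ gives $|R_i|=|A_i|-|U_i|>(1-\beta)B(c_i)$, so the multiplier $|U_i|/|R_i|$ is strictly below $\beta/(1-\beta)$.

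For the counting I would use the $2$-weak property to argue that each closing consumes essentially one layer of original connections: a point is reassigned at most once (only when all of its $\le2$ original centers are closed, and then only at the last of them), and a point serves as a router for at most one closed center, namely its larger center when that one is closed --- and only if it was previously placed directly at its smaller, still-open center. Tracking which role each term $d(z,f)$, $f\in a(z)$, can take, one finds: the term with $f$ equal to $z$'s final (direct) center is charged once with factor $1$ and, if $z$ later routes for its larger closed center, once more with factor at most $\alpha\cdot\beta/(1-\beta)$, for a total of $\tfrac{\alpha\beta}{1-\beta}+1$; every other term is charged at most once, either with factor $\alpha^2$ (as the ``$d(p,c_i)$''-term of a reassignment) or with factor at most $\alpha^2\cdot\beta/(1-\beta)$ (as the ``$d(q,c_i)$''-term of a router), and since $\beta\ge\tfrac12$ makes $\beta/(1-\beta)\ge1$, both are at most $\tfrac{\alpha^2\beta}{1-\beta}$. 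Hence $\cost(C',a')\le\max\{\tfrac{\alpha\beta}{1-\beta}+1,\tfrac{\alpha^2\beta}{1-\beta}\}\cost(C,a)$, and since $(C,a)$ is $\gamma$-approximate and any feasible solution to generalized $k$-median with lower bounds is in particular a $2$-weak solution of the same cost, $\cost(C,a)\le\gamma\cost(OPT)$. Specializing to $\alpha=2$, $\beta=\tfrac12$ and plugging in the $O(1)$-approximation for $k$-means with $2$-weak lower bounds from the previous section gives the $(\tfrac12,O(1))$-bicriteria algorithm for $k$-means with lower bounds.

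The hard part is exactly what makes the $k$-means case work where earlier bicriteria algorithms fail: no point may be rerouted through a point that was itself rerouted, since every extra hop costs another factor of $\alpha$ and a chain of reassignments would turn the clean $\alpha^2\beta/(1-\beta)$ into something growing with the chain length. I expect the remedy to be a ``wait'' rule --- never reroute a point while one of its original centers is still unprocessed --- which forces every already-placed point of $A_i$, hence every router, to have been assigned directly to its other, smaller-indexed, still-open center, so that $d(q,c_q)$ is genuinely a term of $\cost(C,a)$ and the recursion is cut after a single step. Checking that this rule still places every point, and carrying out the role-bookkeeping of the previous paragraph without double counting, is where the actual work sits.
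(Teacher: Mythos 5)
Your proposal is correct and takes essentially the same approach as the paper: sweep through the given $2$-weak solution's centers, open $c_i$ when $\geq\lceil\beta B(c_i)\rceil$ of $A_i$ are still free, and otherwise close it and re-route the stranded points through already-placed points of $A_i$ to their open centers; the ``wait'' rule you identify at the end (only re-route a point once all of its $\leq 2$ original centers have been processed) is exactly the paper's restriction to the subset $B_i=\{x\in A_i\cap N \mid a(x)\subseteq\{c_1,\ldots,c_i\}\}$, and the $2$-weak bound is what guarantees every router was placed directly, so $d(q,c_q)$ is a term of $\cost(C,a)$ and no reassignment chains arise. The only cosmetic difference is that you bound the re-routing by funnelling all of $B_i$ through the single cheapest router via an averaging argument, whereas the paper distributes $B_i$ fractionally over the routers subject to a per-router cap of $\tfrac{\beta}{1-\beta}|\{p\in A_i:c\in a(p)\}|$; both yield the same per-term charges $1+\tfrac{\alpha\beta}{1-\beta}$ on open-center terms and $\max\{\alpha^2,\tfrac{\alpha^2\beta}{1-\beta}\}=\tfrac{\alpha^2\beta}{1-\beta}$ on closed-center terms.
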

 
\paragraph*{Acknowledgments}
	We thank anonymous reviewers for their detailed comments to a previous version.
	Furthermore, the first author acknowledges support by DFG grant RO 5439/1-1 and the second author acknowledges support by DFG grant SCHM 2765/1-1.
	\bibliographystyle{plainurl}
	\bibliography{references}
	
	\appendix
	
	\section{Reducing lower-bounded \texorpdfstring{$k$}{k}-clustering to facility location}
\label{facloctokmedian}

Assume we want to approximate the generalized $k$-median problem under a side constraint where the side constraint is benign in the following sense: If a clustering satisfies the constraint, then the clustering resulting from merging two clusters is also feasible under the constraint. This is true for lower-bounded clustering since a cluster arising from merging two clusters with $B$ points each definitely has at least $B$ points, too. We call such constraints \emph{mergeable constraints}. A slightly weaker mergability property holds for non-uniform lower bounds where the constraint depends on the center: If we merge two clusters that satisfy lower bounds $B_1$ and $B_2$ of their centers $c_1$ and $c_2$, then the merged cluster still satisfies the lower bounds of $c_1$ and $c_2$, so as long as the merged cluster uses one of these two centers, the lower bound is still satisfied.

For many clustering problems, solving the version where the number of centers is constrained to $k$ is much more difficult to tackle than solving the facility location variant.

For example, (uniform) capacitated facility location allows for a $3$-approximation, while finding a constant-factor approximation for uniform capacitated $k$-median is a long-standing open problem. However, this is not the case for lower-bounded clustering because of the above described mergability property. 

Roughly speaking, we show that for mergable constraints, we can turn an unconstrained generalized $k$-median solution and a constrained facility location solution into a constrained generalized $k$-median solution which does not cost much more. 
To do this, we borrow a concept from the area of \emph{hierarchical} clustering which formalizes what it costs to merge clusters under a specific clustering objective. 

\begin{definition}[adapted from~\cite{LNRW10}]\label{def:nesting}
	A generalized facility location problem satisfies the $(\gamma,\delta)$-nesting property for reals $\gamma, \delta \ge 0$ if for any input point set $P$ and any two solutions $S_1=(C_1,a_1)$ and $S_2=(C_2,a_2)$ with $|C_1| > |C_2|$, a solution $S=(C,a)$ can be computed such that
	\begin{itemize}
		\item $S_1$ and $S$ are hierarchically compatible, i.e., for all $c\in C_1$ there exists a $c' \in C$ such that for all $x \in P$ with $a_1(x)=c$ it holds that $a(x)=c'$,
		\item $\cost(P,S) \le \gamma \cdot \cost(P,S_1) + \delta \cost(P,S_2)$, and
		\item $|C| \le |C_2|$.
	\end{itemize}
	We call such a solution $S$ \emph{$(\gamma,\delta)$-nested} with respect to $S_1$ and $S_2$.
\end{definition}

Lin et al.~\cite{LNRW10} show that the standard facility location / $k$-median cost function satisfies the $(2,1)$-nesting property (also see Lemma~\ref{nesting-lemma} below). Combining this with the best-known constant-factor approximation for $k$-median~\cite{BPRST17} which achieves a $2.675+\epsilon$ approximation  and the $82.6$-approximation for facility location with uniform lower bounds by Ahmadian and Swamy~\cite{AS12}, the following lemma implies a $(167.875+\epsilon)$-approximation for $k$-median with lower bounds.

\begin{lemma}\label{lem:nesting-mergable}
	Assume that we are given a generalized facility location problem that satisfies the $(\gamma,\delta)$-nesting property, an $\beta$-approximation algorithm for its generalized $k$-median variant and a $\alpha$-approximation algorithm for the constrained generalized facility location variant under a mergeable constraint. Then there is a $(\gamma \cdot \alpha+ \delta \cdot \beta)$-approximation algorithm for the generalized $k$-median problem under the same constraint.
\end{lemma}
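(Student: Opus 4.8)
The plan is to combine the three supplied ingredients---the $\alpha$-approximation for constrained generalized facility location, the $\beta$-approximation for unconstrained generalized $k$-median, and the $(\gamma,\delta)$-nesting procedure---and to show that the optimal constrained $k$-median solution $OPT$ simultaneously upper-bounds the optima that the first two algorithms are compared against. Concretely, given an instance $(P,F,k)$ of constrained generalized $k$-median together with the (mergeable) side constraint, I would first invoke the $\alpha$-approximation for \emph{constrained} generalized facility location on the instance $(P,F)$ with all facility costs set to $0$, obtaining a feasible solution $S_1=(C_1,a_1)$. Since $OPT$ opens some center set and assigns $P$ to it so that the constraint is respected, $OPT$ is itself a feasible solution of this facility-location instance with total opening cost $0$, so its facility-location cost equals $\cost(OPT)$ and therefore $\cost(S_1)\le\alpha\,\cost(OPT)$. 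Second, I would run the $\beta$-approximation for the \emph{unconstrained} generalized $k$-median variant on $(P,F,k)$, obtaining $S_2=(C_2,a_2)$ with $|C_2|\le k$; dropping the constraint from $OPT$ yields a feasible unconstrained $k$-median solution of the same cost, hence $\cost(S_2)\le\beta\,\cost(OPT)$.

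Now I would combine $S_1$ and $S_2$. If $|C_1|\le|C_2|$, then $S_1$ already uses at most $|C_2|\le k$ centers and satisfies the constraint, and since $\gamma\ge 1$ for the nesting results we invoke we have $\cost(S_1)\le\alpha\,\cost(OPT)\le(\gamma\alpha+\delta\beta)\,\cost(OPT)$, so I simply return $S_1$. Otherwise $|C_1|>|C_2|$ and the $(\gamma,\delta)$-nesting property applies to the pair $(S_1,S_2)$: it produces $S=(C,a)$ with $|C|\le|C_2|\le k$, with $\cost(P,S)\le\gamma\,\cost(S_1)+\delta\,\cost(S_2)\le(\gamma\alpha+\delta\beta)\,\cost(OPT)$, and such that $S$ is hierarchically compatible with $S_1$. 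Feasibility of $S$ is the last point: hierarchical compatibility means every cluster of $S$ is obtained as a union of clusters of $S_1$, and since $S_1$ satisfies the constraint and the constraint is mergeable, iterating the merge operation preserves feasibility, so $S$ respects the constraint. Putting the two cases together yields the claimed $(\gamma\alpha+\delta\beta)$-approximation, and the whole procedure runs in polynomial time because each of the three ingredients does.

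The point that needs the most care is the interface between the facility-location subroutine and the constrained $k$-median objective: one has to observe that supplying zero facility costs is precisely what makes $OPT$ a legitimate benchmark for $S_1$, while the real burden of forcing $|C|\le k$ is offloaded to the nesting step via $S_2$ (without $S_2$ the facility-location solution $S_1$ could open as many as $\Theta(n/B)$ centers and be useless). A secondary subtlety is that the mergeability exploited here is the set-based ("strong") notion, so feasibility of a union of clusters does not depend on which representative center the nesting procedure assigns to it; for the non-uniform variant one instead needs the nesting procedure adjusted so that each merged cluster retains one of the centers whose lower bound it already satisfies, which is the refinement hinted at in the body and which I would treat separately rather than inside this lemma.
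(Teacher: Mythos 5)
Your proof is correct and follows essentially the same approach as the paper's: compute $S_1$ from the constrained facility-location subroutine with zero facility costs, compute $S_2$ from the unconstrained $k$-median subroutine, observe that both are relaxations of constrained $k$-median so both are bounded against $\cost(OPT)$, then nest and use mergeability to get feasibility. You add one small but legitimate detail the paper glosses over: the nesting property as stated only applies when $|C_1|>|C_2|$, so you split off the easy case $|C_1|\le|C_2|$ (where $S_1$ is already returnable, using $\gamma\ge 1$); this is a correct refinement rather than a divergence.
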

\begin{proof}
	We compute two solutions: An $\alpha$-approximate solution $S_1=(C_1,a_1)$ for the constrained facility location variant and a $\beta$-approximate solution $S_2 = (C_2,a_2)$ for the unconstrained generalized $k$-median problem. For the facility location variant, we need no opening costs, so we set the cost of all facilities to zero.
	
	By the nesting property, we get a solution $S=(C,a)$ which costs $\cost(P,S) \le \gamma \cdot \cost(P,S_1) + \delta \cost(P,S_2)$ that is hierarchically compatible with $S_1$ and satisfies that $|C| \le |C_2| \le k$. The unconstrained generalized $k$-median problem is a relaxation of the constrained generalized $k$-median problem because all we do is drop the constraint. The constraint facility location problem without facility cost arises from dropping the condition that $|C|\le k$, so it is also a relaxation. Thus,
	\begin{align*}
		\cost(P,S) &\le \gamma \cdot \cost(P,S_1) + \delta \cost(P,S_2)\\
		&\le \gamma \cdot \alpha \cdot \cost (OPT) + \delta \cdot \beta \cdot \cost(OPT),
	\end{align*}
	where $OPT$ is an optimal solution for the constrained generalized $k$-median  problem. Since $S$ is hierarchically compatible with $S_1$, we know that every cluster in $S$ results from merging two clusters in $S_1$. Since $S_1$ satisfies the constraint and the constraint is mergeable, $S$ also satisfies the constraint.
\end{proof}

We notice one detail: Lemma~\ref{lem:nesting-mergable} implicitly assumes that we are allowed to choose centers for clusters that are not part of the cluster themselve. We stated in the introduction that we define the generalized $k$-median problem such that this is allowed. Indeed, Definition~\ref{def:nesting} above allows us to choose the set of centers $C$ such that it does not necessarily have one point from every cluster, and it allows $a$ to assign a point that is itself a cluster to a different center in $C$. For (truly) mergable constraints like uniform lower bounds, this poses no problem because the constraint is not affected by the choice of center. However, for non-uniform lower bounds, we have to be a little more careful: We need that the merged cluster is a assigned to a center whose lower bound is indeed satisfied. Definition~\ref{def:nesting} does not guarantee this. We thus prove the following slight generalization of the nesting step by Lin et al.~\cite{LNRW10} (Statement~\ref{nestinglem-itemone} only generalizes to the case of arbitrary $\alpha$, but Statement~\ref{nestinglem-itemtwo} gives the generalization that we need for non-uniform lower bounds).

\begin{lemma}\label{nesting-lemma}
	Let $S_1=(C_1,a_1)$ and $S_2=(C_2,a_2)$ be two solutions with $|C_1| > |C_2|$ for the generalized facility location problem. We can compute
	\begin{enumerate}
		\item\label{nestinglem-itemone} a solution $S=(C_2',a)$ with $C_2'\subseteq C_2$ that is $(\alpha+\alpha^2,\alpha^2)$-nested with respect to $S_1$ and $S_2$,
		\item\label{nestinglem-itemtwo} a solution $S=(C_1',a)$ with $C_1'\subseteq C_1$ that is $(\alpha^3+2\alpha^2,\alpha^3+\alpha^2)$-nested with respect to $S_1$ and $S_2$, and which satisfies that  for all $c \in C_1'$ and for all $x\in P$ with $a_1(x)=c$, it holds that $a(x)=c$.
	\end{enumerate}
\end{lemma}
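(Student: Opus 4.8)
\emph{Plan.} I would follow the nesting construction of Lin et al.~\cite{LNRW10}, re-running their argument with the $\alpha$-relaxed triangle inequality in place of the ordinary one, and add a single ``pull-back'' step for Statement~\ref{nestinglem-itemtwo}. The organizing observation is that making $S$ hierarchically compatible with $S_1$ amounts to fixing a map $\pi\colon C_1\to C$ recording into which cluster of $S$ each cluster of $S_1$ is merged: setting $a\coloneqq\pi\circ a_1$ and $C\coloneqq\pi(C_1)$ then makes hierarchical compatibility automatic, so the work reduces to (i) choosing $\pi$ so that $\pi(C_1)$ has at most $|C_2|$ elements and lies in the prescribed center set, and (ii) bounding $\cost(P,S)=\sum_{x\in P}d(x,\pi(a_1(x)))$. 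For Statement~\ref{nestinglem-itemone} I would send each $c\in C_1$ to a closest center $t(c)\coloneqq\argmin_{c'\in C_2}d(c,c')$ of $S_2$ and take $\pi\coloneqq t$, $C_2'\coloneqq t(C_1)\subseteq C_2$ (so $|C_2'|\le|C_2|$). For a point $x$ with $c_1\coloneqq a_1(x)$ I would chain $d(x,t(c_1))\le\alpha\,d(x,c_1)+\alpha\,d(c_1,t(c_1))$, then $d(c_1,t(c_1))\le d(c_1,a_2(x))$ since $t(c_1)$ is a closest $C_2$-center to $c_1$, then $d(c_1,a_2(x))\le\alpha\,d(x,c_1)+\alpha\,d(x,a_2(x))$, arriving at $d(x,t(c_1))\le(\alpha+\alpha^2)d(x,c_1)+\alpha^2 d(x,a_2(x))$; summing over $P$ gives $(\alpha+\alpha^2,\alpha^2)$-nesting.

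\emph{Statement~\ref{nestinglem-itemtwo}.} Here I would keep the same $t$ but pull it back into $C_1$: for every $c'\in C_2$ with $t^{-1}(c')\neq\emptyset$ pick a representative $r(c')\coloneqq\argmin_{c\in t^{-1}(c')}d(c',c)\in C_1$, set $C_1'\coloneqq\{r(c')\mid c'\in C_2,\ t^{-1}(c')\neq\emptyset\}$ and $\pi\coloneqq r\circ t$. The fibers of $t$ are disjoint, so the number of nonempty ones is $|t(C_1)|\le|C_2|$, whence $|C_1'|\le|C_2|$; and if $c=r(c')$ then $t(c)=c'$, so $\pi(c)=r(c')=c$, i.e.\ $\pi$ fixes $C_1'$ pointwise, which is exactly the additional requirement that $a_1(x)=c\in C_1'$ implies $a(x)=c$. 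For the cost of $x$ with $c_1\coloneqq a_1(x)$ and $c'\coloneqq t(c_1)$, I would detour through the \emph{facility} $c'$: $d(x,r(c'))\le\alpha\,d(x,c')+\alpha\,d(c',r(c'))$, bound the first summand by reusing the Statement~\ref{nestinglem-itemone} estimate $d(x,c')\le(\alpha+\alpha^2)d(x,c_1)+\alpha^2 d(x,a_2(x))$, and bound the second by $d(c',r(c'))\le d(c',c_1)=d(c_1,t(c_1))\le d(c_1,a_2(x))\le\alpha\,d(x,c_1)+\alpha\,d(x,a_2(x))$ (using minimality of $r(c')$ over its fiber and of $t(c_1)$ over $C_2$). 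Adding the two with the outer factor $\alpha$ yields $d(x,\pi(c_1))\le(\alpha^3+2\alpha^2)d(x,c_1)+(\alpha^3+\alpha^2)d(x,a_2(x))$, and summing over $P$ gives the $(\alpha^3+2\alpha^2,\alpha^3+\alpha^2)$-nesting.

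\emph{Main obstacle and remarks.} The constructions and the cardinality/compatibility bookkeeping are routine; the one step where a careless estimate wastes a constant is the cost analysis of Statement~\ref{nestinglem-itemtwo}, where one must detour from $x$ to $r(t(c_1))$ \emph{through the facility $t(c_1)$}, reusing the bound already proved in Statement~\ref{nestinglem-itemone}, rather than through $c_1=a_1(x)$ together with the pair $(c_1,r(t(c_1)))$; the latter route charges $d(x,c_1)$ twice through the relaxed triangle inequality and produces a worse (though still constant) factor. I would also note two small points: the hypothesis $|C_1|>|C_2|$ plays no role in the estimates and only fixes the interesting regime (nesting a finer solution into a coarser one); and, because $C_2'\subseteq C_2$ and $C_1'\subseteq C_1$ while $\gamma,\delta\ge1$ in both cases, the facility-opening part of the objective is automatically dominated by that of $S_2$, resp.\ $S_1$, so the displayed bounds hold for the full facility-location cost and not just for the connection cost.
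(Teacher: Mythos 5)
Your construction and cost chains coincide with the paper's proof: for Statement~1 you map each $c\in C_1$ to its closest center $t(c)\in C_2$ and bound $d(c,t(c))\le d(c,a_2(x))$, and for Statement~2 you pull $t$ back to the fiber element $r(t(c))$ closest to $t(c)$, detour through $t(c)$, and bound $d(t(c),r(t(c)))\le d(t(c),c)$ exactly as the paper does, arriving at the same constants. The only cosmetic difference is that you reuse the already-derived Statement~1 bound for $d(x,t(c))$ whereas the paper re-expands it inline; the manipulations and resulting factors are identical.
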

\begin{proof}
	We get two solutions $S_1=(C_1,a_1)$ and $S_2=(C_2,a_2)$ with $|C_1| > |C_2|$. Let $||\cdot||$ denote the metric.
	
	For all $c_i \in C_1$, let $P_i$ be the set of all points assigned to $c_i \in C_1$ by $a_1$, and
	for all $o_j \in C_2$ , let $O_j$ be the set of all points assigned to $o_j$ by $a_2$.
	First we create a solution $S=(C_2',a)$ with $C_2' \subseteq C_2$.
	For all $i$, we assign every point $x \in P_i$ the center $o_j$ which is closest to $c_i$, i.e., $a(x) = \arg\min_{o_j \in C_2} ||c_i - o_j||$.
	By this choice we know that for any $x \in P_i$, $||c_i - o_j|| \le ||c_i - a_2(x)||$. 
	By two applications of the relaxed triangle inequality, we get that 
	\begin{align*}
		\sum_{x\in P_i} ||x-o_j|| &\le \sum_{x \in P_i} \alpha\cdot||x - c_i|| + \sum_{x \in P_i} \alpha\cdot||c_i-o_j||\\
		&\le \alpha\cdot\sum_{x \in P_i} ||x - c_i|| + \alpha\cdot\sum_{x \in P_i} ||c_i - a_2(x)||\\
		&\le \alpha\cdot\sum_{x \in P_i} ||x - c_i|| + \alpha\cdot\sum_{x \in P_i} \alpha\cdot(||c_i - x|| + ||x - a_2(x)||)\\
		&=(\alpha+\alpha^2) \cdot \sum_{x \in P_i} ||x - a_1(x)|| + \alpha^2 \cdot \sum_{x \in P_i} ||x - a_2(x)||.
	\end{align*}
	Adding the cost of all clusters yields the statement. 
	
	Now we convert $S$ into a solution $(C_1',a')$ with $C_1'\subset C_1$ at the cost of an increase in the nesting factors. Let $i$ be fixed. 
	So far, we have reassigned the points in $P_i$ to the center $o_j$ in $C_2$ closest to $c_i$. Now among all $c_{i'}$ for which $o_j$ was the closest center, we choose a center that is closest to $o_j$ and reassign the points there, i.e., $a'(x) = \arg\min \{ ||o_j-c_{i'}|| \mid a(c_{i'})=o_j\}$. The points are now assigned only to points in $C_1$. Since $a(c_{i'})=o_j$, we know that all points originally assigned to $c_{i'}$ are (re)assigned to $c_{i'}$.
	And because we only reassign a new center to the solution $C_2'$, we know that it still has at most $|C_2'|\le|C_2|$ many clusters. The cost is bounded by
	\begin{align*}
		\sum_{x\in P_i} ||x-a'(o_j))|| &\le \sum_{x\in P_i} \alpha\cdot||x-o_j|| + \alpha\cdot||o_j-a'(o_j)||\\
		&\le  \sum_{x\in P_i} \alpha\cdot||x-o_j|| + \alpha\cdot||o_j-c_i||\\
		&\le  \sum_{x\in P_i} \alpha^2 \cdot ||x - c_i|| + (\alpha^2+\alpha)\cdot||o_j-c_i||\\
		&\le  \sum_{x \in P_i} \alpha^2||x - c_i|| + (\alpha^2+\alpha)\sum_{x \in P_i} \alpha\cdot(||c_i - x|| + ||x - a_2(x)||)\\
		&=  (\alpha^3+2\alpha^2)\cdot \sum_{x \in P_i} ||x - a_1(x)|| + (\alpha^3+\alpha^2)\cdot\sum_{x \in P_i} ||x - a_2(x)||.
	\end{align*}
\end{proof}

Statement \ref{nestinglem-itemtwo} of Lemma~\ref{nesting-lemma} guarantees a solution where the centers are a subset of the centers in solution $S_1$, and the assignment ensures that points that were previously assigned to the chosen centers $C_1'$ are still assigned to their previous center. This has two benefits: a) If we previously had a solution where the centers are part of their own cluster, then this property is preserved and b) If the mergeability of the constraint depends on the center as for non-uniform lower bounds, we still satisfy the constraint. Indeed, for all $c \in C_1'$ we now know that all points previously assigned to $c$ are still assigned to $c$, then this means that if the lower bound for $c$ was satisfied by $S_1$, then it is also satisfied for $S$. Thus, we plug in the the $O(1)$-approximation for facility location with non-uniform lower bounds by Li~\cite{Li19} as $S_1$ and the already mentioned approximation $2.675+\epsilon$ approximation for $k$-median  as $S_2$ and get an $O(1)$-approximation for $k$-median with non-uniform lower bounds.

\begin{corollary}
	There exist $O(1)$-approximations for the $k$-median problem with uniform and non-uniform lower bounds.
\end{corollary}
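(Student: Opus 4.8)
The plan is to instantiate the reduction machinery of Lemma~\ref{lem:nesting-mergable} together with the nesting constructions of Lemma~\ref{nesting-lemma} and the known black-box approximation algorithms, once for uniform and once for non-uniform lower bounds. Since the standard $k$-median / facility location cost function satisfies the triangle inequality (i.e., has $\alpha = 1$ in the relaxed triangle inequality), Statement~\ref{nestinglem-itemone} of Lemma~\ref{nesting-lemma} specializes to the $(2,1)$-nesting property and Statement~\ref{nestinglem-itemtwo} to the $(3,2)$-nesting property.

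For the uniform case I would feed Lemma~\ref{lem:nesting-mergable} the $(2,1)$-nesting property of Statement~\ref{nestinglem-itemone}, the $(2.675+\epsilon)$-approximation for unconstrained $k$-median of~\cite{BPRST17} as the algorithm producing $S_2$, and the $82.6$-approximation for facility location with uniform lower bounds of~\cite{AS12} as the algorithm producing $S_1$. Uniform lower bounds form a genuinely mergeable constraint (merging two clusters of size at least $B$ yields a cluster of size at least $B$, irrespective of which center is retained), so Lemma~\ref{lem:nesting-mergable} applies as stated and outputs a $\big(2\cdot 82.6 + 1\cdot(2.675+\epsilon)\big) = (167.875+\epsilon)$-approximation, which in particular is $O(1)$.

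For the non-uniform case the one genuine subtlety, which I expect to be the \emph{main obstacle}, is that the constraint is only mergeable in the weaker, center-dependent sense: after merging the clusters of $c_1$ and $c_2$, feasibility requires that the merged cluster be assigned to a center whose own lower bound it meets. The generic nesting of Definition~\ref{def:nesting}, and in particular Statement~\ref{nestinglem-itemone} which places the new centers inside $C_2$, does not guarantee this. Statement~\ref{nestinglem-itemtwo} of Lemma~\ref{nesting-lemma} is precisely what fixes it: it returns a $(3,2)$-nested solution $S=(C_1',a)$ with $C_1'\subseteq C_1$ such that every point $x$ with $a_1(x)=c$, $c\in C_1'$, still satisfies $a(x)=c$; hence if $S_1$ met the non-uniform lower bound at $c$, so does $S$. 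Re-running the argument of Lemma~\ref{lem:nesting-mergable} with this assignment-preserving nesting in place of the generic one, and taking $S_1$ to be the $O(1)$-approximation for facility location with non-uniform lower bounds of~\cite{Li19} and $S_2$ the $(2.675+\epsilon)$-approximation of~\cite{BPRST17}, yields a solution with at most $k$ centers that respects all non-uniform lower bounds and has cost at most $3\cdot O(1) + 2\cdot(2.675+\epsilon) = O(1)$ times the optimum. Combining the two cases gives the corollary; since the only non-routine ingredient, the assignment-preserving nesting, is already supplied by Lemma~\ref{nesting-lemma}, all that remains is tracking the constants.
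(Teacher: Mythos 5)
Your proposal is correct and follows exactly the paper's argument: Statement~\ref{nestinglem-itemone} of Lemma~\ref{nesting-lemma} (the $(2,1)$-nesting) combined with \cite{BPRST17} and \cite{AS12} via Lemma~\ref{lem:nesting-mergable} for uniform lower bounds, and Statement~\ref{nestinglem-itemtwo} (the $(3,2)$-nesting that keeps $C_1'\subseteq C_1$ and preserves the assignment at the retained centers) combined with \cite{Li19} for non-uniform lower bounds, with the same observation that the generic nesting does not suffice in the non-uniform case because feasibility there depends on which center is retained.
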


\section{Decreasing the extra connections to an \texorpdfstring{$\epsilon$}{eps}-fraction}\label{appendix-eps}
\IncMargin{1ex}
\begin{algorithm}[t]
	\caption{Reducing the number of assigned centers per point to $1+\epsilon$\label{Alg:eWLB}}
	
	\DontPrintSemicolon
	\SetKwInOut{Input}{Input}\SetKwInOut{Output}{Output}
	\SetKwFor{ForAll}{for all}{}{}
	\BlankLine
	define an ordering on the centers $c_1< c_2\ldots< c_{k'}$ \;
	set $\widetilde{C}\coloneqq C$ and $\widetilde{a}_q^c=1$ if $c\in a(q)$ otherwise set $\widetilde a_q^c=0$\;
	\ForAll{$c\in C$}
	{	
		$P_c=\{q\in P\mid \widetilde a_q^c>0\}$\;
		$P_c^1=\{q\in P_c\mid \widetilde a_q=\widetilde a_q^c=1\}$\; 
		$P_c^{\epsilon}=\{q\in P_c\mid \widetilde a_q=1+\epsilon,\, \widetilde a_q^c=\epsilon\}$\; 
		$Q_c^\epsilon=\{q\in P_c\mid \widetilde a_q=1+\epsilon, \,\widetilde a_q^c=1\} $\; 
		$P_c^2=\{q\in P_c\mid \widetilde a_q\geq 2,\, \widetilde a_q^c=1\}$
	}
	\For{$i=1$ \KwTo $k'$}
	{
		\While{$P_{c_i}^2\neq \emptyset$}
		{	\label{alg2:start}
			$d=\textup{min } C(P_{c_i}^2)\backslash \{c_i\}$\;\label{alg2:def-d}
			$\overline {P_d^1}=P_d^1\cap \{q\in P\mid |a(q)|\geq 2 \textup{ and } a(q)\cap \{c_1,\ldots, c_{i-1}\}\cap \widetilde{C}\neq \emptyset\} $\;
			\If{$|P_d^1\backslash \overline {P_d^1}|<\frac{1}{\epsilon}$}
			{	
				delete $d$ from $\widetilde{C}$ and all connections to $d$ in $\widetilde{a}$\;	
				\ForAll{$q\in \overline {P_d^1}$} 
				{ 
					let $e=\min (a(q)\cap \widetilde{C})$\;
					set $\widetilde a_q^e=1$\;\label{alg2:reconnection}
				}
				\ForAll{$q\in Q_d^\epsilon$}
				{
					let $e\in \widetilde C$ such that $\widetilde a_q^{e}=\epsilon$\;
					set $\widetilde{a}_q^{e}=1$\;\label{alg2:reconnection2}
					
				}
				\If{$P_d^1\backslash\overline {P_d^1}\neq \emptyset$}
				{
					pick $x\in P_{c_i}^2$ connected to $d$\; \label{alg2:def-x}
					if $\widetilde a_x\geq 3$ set $\widetilde a_x^{c_i}=0$\;
					\ForAll{$q\in P_d^1\backslash \overline {P_d^1}$} 
					{
						set $\widetilde a_q^{c_i}=1$\;\label{alg2:reconnection3}
						
					}
				}
				
			}
			\Else{
				pick $x\in P_{c_i}^2$ connected to $d$ and $A\subset P_d^1\backslash \overline {P_d^1}$ of cardinality $\lceil\frac{1}{\epsilon}\rceil$\;\label{alg2:def-x2}
				set $\widetilde{a}_x^{c_i}=0$ and $\widetilde{a}_y^{c_i}=\epsilon$ for all $y\in A$\; \label{alg2:swap}
				
			}
		}
		
	}
\end{algorithm}
\DecMargin{1ex}

\thmEps*

\subparagraph{Reassignment process.}
In the beginning we set $\widetilde C=C$. For $q\in P$ let $\widetilde a_q^c=1$ if $c\in a(q)$ and otherwise let $\widetilde a_q^c=0$. We modify both $\widetilde C$ and $\widetilde a$ until we obtain a valid solution to generalized $k$-median with $(1+\epsilon)$-weak lower bounds. During the process, the centers in $\widetilde C$ are called \emph{currently open}, and when a center is deleted from $\widetilde C$, we say it is \emph{closed}. The centers are processed in an arbitrary but fixed order, i.e., we assume that $C=\{c_1,\ldots,c_{k'}\}$ for some $k'\leq k$ and process them in order $c_1,\ldots,c_{k'}$. 
We say that $c_i$ is \emph{smaller} than $c_j$ if $i < j$.

Before we start explaining the reassignment we observe that the following properties hold for $(\widetilde C,\widetilde a)$ in the beginning. 
\begin{enumerate}
	\item for all $q\in P$ we have either  $\widetilde a_q\in\mathbb N$ or $\widetilde a_q=1+\epsilon.$
	\item if $\widetilde a_q=1+\epsilon$ then $q$ is assigned to one center by an amount of one and to a second center by an amount of $\epsilon.$
	\item if $\widetilde a_q\in \mathbb N$ then $\widetilde a_q^c\in\{0,1\}$ for all $c\in C$.
\end{enumerate}
We ensure that these properties also hold during the whole reassignment process.

Let $c=c_i$ be the currently processed center.
By $P_c$ we denote the set of points assigned to $c$ by a positive amount under $\widetilde a.$ We divide $P_c$ into the four sets $P_c^1=\{q\in P_c\mid \widetilde a_q=\widetilde a_q^c=1\}$, $P_c^{\epsilon}=\{q\in P_c\mid \widetilde a_q=1+\epsilon, \widetilde a_q^c=\epsilon\}$, $Q_c^{\epsilon}=\{q\in P_c\mid \widetilde a_q=1+\epsilon, \widetilde a_q^c=1\}$ and finally $P_c^2=\{q\in P_c\mid \widetilde a_q\geq 2,\widetilde a_q^c=1\}$. Thus we differentiate between points which are assigned exclusively to $c$, points which are assigned by an amount of $\epsilon$ to $c$ and by an amount of one to an other center or vice versa and points which are assigned by an amount of one to $c$ and by an amount of at least one to some other centers. Furthermore with $C(P_c^2)$ we denote all centers which are connected to at least one point in $P_c^2$ under $\widetilde a$. Observe that indeed $P_c=P_c^1\cup P_c^{\epsilon}\cup Q_c^{\epsilon}\cup P_c^2$ if the above properties hold at that time.      

Notice that points in $P_c\backslash P_c^2$ are already assigned by an amount of at most $1+\epsilon$, so we only care about points in $P_c^2$. If $P_c^2$ is empty, we are done and proceed with the next center in $\widetilde C$. Otherwise we need to empty $P_c^2.$ Observe that points in $P_c^2$ are assigned to multiple centers, so if we delete the connection between one of these points and $c$, the point is still served by some other center. However, doing so violates the lower bound at $c$. So we have to replace this connection. 

As long as $P_c^2$ is non-empty, we do the following. We pick a center $d=\min C(P_c^2)\backslash \{c\}$ and a point $x\in P_c^2$ connected to $d$. 
We want to assign points from $P_d^1$ by amount of $\epsilon$ to $c$ to free $x$.
For technical reasons, we restrict the choice of these points: We exclude all points from the subset $\overline{P_d^1} := \{q\in P_d^1\mid |a(q)|\geq 2 \textup{ and } a(q)\cap \{c_1,\ldots, c_{i-1}\}\cap \widetilde{C}\neq \emptyset\}$, i.e., all points which were assigned to at least $2$ centers under the initial assignment $a$, and where one of these at least $2$ centers is still open \emph{and} smaller than $c$.

We can only assign points from $P_d^1\backslash \overline{P_d^1}$ to $c$ if its cardinality is at least $\lceil \frac{1}{\epsilon}\rceil$. If this is the case we choose a set $A$ of $\lceil\frac{1}{\epsilon}\rceil$ points from $P_d^1\backslash \overline{P_d^1}$ and set $\widetilde a_q^c=\epsilon$ for all $q\in A$ . Furthermore we set $\widetilde a_x^c=0$. So $x$ is no longer connected to $c$, but to satisfy the lower bound at $c$ we replace $x$ by a set of $\lceil\frac{1}{\epsilon}\rceil$ points which are now connected to $c$ by an amount of $\epsilon$ (Figure \ref{eBound:replacingX}). By this we guarantee that the lower bound at $c$ is still satisfied.

If $|P_d^1\backslash \overline{P_d^1}|<\lceil\frac{1}{\epsilon}\rceil$ our replacement plan does not work. Instead we close $d$ and set $\widetilde a_q^d=0$ for all $q\in P.$ If we close $d$, points in $P_d^1\cup Q_d^{\epsilon}$ will be assigned by an amount smaller than one, thus we do the following. 
All points in $P_d^1\backslash \overline{P_d^1}$ are reassigned to $c$, i.e., $\widetilde a_q^c=1$ for $q\in P_d^1\backslash\overline{P_d^1}$ (Figure \ref{eBound:DeletionD}). Since we assign all points in $P_d^1\backslash \overline{P_d^1}$ to $c$, we could delete this many connections between clients in $P_c^2$ and $c$. But for simplicity, if $P_d^1\backslash \overline{P_d^1}$ is non-empty and $\widetilde a_x\geq 3$, we only delete the connection between $x$ and $c$. A point $q\in \overline{P_d^1}$ is reassigned to the smallest open center in $a(q)$ by an amount of one. And finally every point in $Q_d^\epsilon$ is assigned by an amount of $\epsilon$ to some other center than $d$, so we add an additional amount of $1-\epsilon$ to this assignment. 

Observe that none of the above reassignments violates the claimed properties for $(\widetilde C,\widetilde a)$ above.
The entire procedure is described in Algorithm \ref{Alg:eWLB}.

\begin{lemma}
	\label{lemma:easy_properties2}
	Algorithm~\ref{Alg:eWLB} computes a feasible solution $(\widetilde C, \widetilde a)$ to generalized $k$-median with $(1+\epsilon)$-weak lower bounds.	
	Furthermore the following properties hold during all steps of the algorithm. 
	\begin{enumerate}
		\item\label{lem:alg:invarianten:a2} 
		For any center $c \in C$, $P_c$ does not change before $c$ is processed or closed. Up to that point all points in $P_c$ are assigned by an amount of $1$ to $c$.
		\item\label{lem:alg:invarianten:b2} If a connection between $x\in P$ and the currently processed center $c\in\widetilde C$ is deleted by the algorithm, we have from this time on $x\notin P_c^2$ until termination. Moreover $P_c^2$ remains empty after $c$ is processed.
		\item\label{lem:alg:invarianten:c2} While the algorithm processes $c\in C$ we always have $c<\min C(P_c^2)\backslash\{c\}$. Moreover all currently open centers which are smaller than $c$ remain open until termination.
		\item\label{lem:alg:invarianten:d2} If the algorithm establishes a new connection in Line~\ref{alg2:reconnection}, Line~\ref{alg2:reconnection3} or Line~\ref{alg2:swap} it remains until termination.
	\end{enumerate}
\end{lemma}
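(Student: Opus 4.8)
The plan is to mirror the proof of Lemma~\ref{lemma:easy_properties}, carrying the extra bookkeeping caused by the fractional $\epsilon$-assignments. Throughout, I would argue by induction over the steps of Algorithm~\ref{Alg:eWLB}, so that all the invariants (and the three "beginning'' properties of $(\widetilde C,\widetilde a)$: every $\widetilde a_q$ lies in $\mathbb{N}$ or equals $1+\epsilon$, a $1+\epsilon$-point carries exactly one amount-$1$ edge and one amount-$\epsilon$ edge, and an integral point carries only $0/1$-edges) may be assumed at all earlier times. Termination: in every iteration of the while loop in Line~\ref{alg2:start}, either some point leaves $P_{c_i}^2$ (its amount-$1$ edge to $c_i$ is set to $0$ in Line~\ref{alg2:swap} or in the conditional line before Line~\ref{alg2:reconnection3}), or the closed center $d$ is a neighbour of $P_{c_i}^2$ and closing it decreases the total of some point of $P_{c_i}^2$ by one. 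Since $P_d^1$ and $Q_d^\epsilon$ are disjoint from $P_{c_i}^2$ (their points have total at most $1+\epsilon<2$, while $P_{c_i}^2$-points have integral total $\ge 2$), none of Lines~\ref{alg2:reconnection}, \ref{alg2:reconnection2}, \ref{alg2:reconnection3}, \ref{alg2:swap} ever raises the total of a point in $P_{c_i}^2$, and no new point enters $P_{c_i}^2$ while $c_i$ is processed; hence the integer potential $\sum_{x\in P_{c_i}^2}\widetilde a_x$ strictly decreases, so the while loop, and with it the whole algorithm, terminates.

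Next I would establish feasibility in three parts, exactly as in the integral case. (i) Lower bounds: the only edge deletions are removing all edges of a center that is simultaneously closed, or removing one amount-$1$ edge of a point $x$ to the processed center $c_i$; the latter happens only in Line~\ref{alg2:swap}, where it is compensated by the $\lceil 1/\epsilon\rceil$ amount-$\epsilon$ edges into $c_i$ (and $\lceil 1/\epsilon\rceil\epsilon\ge 1$), or in the close-$d$ branch when $\widetilde a_x\ge 3$, where it is compensated by the at least one amount-$1$ edge into $c_i$ created in Line~\ref{alg2:reconnection3}; loads of all other centers only increase. (ii) Every point keeps total at least $1$: an amount-$1$ edge of $x$ to $c_i$ is deleted only when $\widetilde a_x\ge 2$, and when $d$ is closed every point exclusively assigned to $d$ lies in $P_d^1\cup Q_d^\epsilon$ and is reassigned an amount $1$ in Line~\ref{alg2:reconnection}, \ref{alg2:reconnection2} or \ref{alg2:reconnection3}. (iii) Every point has total at most $1+\epsilon$ at termination: I would check that each reassignment preserves the three beginning properties — the new $\epsilon$-edges of Line~\ref{alg2:swap} turn a $P_d^1$-point into a $1+\epsilon$-point, and the reconnections of a closed center turn $P_d^1\cup Q_d^\epsilon$-points back into amount-$1$ points — and then combine this with Property~\ref{lem:alg:invarianten:b2} (once every center is processed, every $P_c^2$ is empty), so that no point is assigned by more than $1+\epsilon$ at the end.

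The four invariants would then follow by the same arguments as in Lemma~\ref{lemma:easy_properties}. For Property~\ref{lem:alg:invarianten:a2}: edges incident to a center $c$ are touched only while $c$ is the processed center or while $c$ is being closed, and $\epsilon$-edges (Line~\ref{alg2:swap}) and new amount-$1$ edges (Lines~\ref{alg2:reconnection}, \ref{alg2:reconnection3}) always point to an already-processed center or to the currently processed one; hence before $c$ is processed or closed, $P_c$ is untouched and consists of the original amount-$1$ edges. For Property~\ref{lem:alg:invarianten:b2}: once $(x,c)$ is deleted with $c$ currently processed, any later re-creation of an edge $(x,c)$ can only happen via Line~\ref{alg2:reconnection} (with $x\in\overline{P_d^1}$) or Line~\ref{alg2:reconnection2} (with $x\in Q_d^\epsilon$), and in both cases that step reduces $x$ to a single amount-$1$ edge to $c$, so $\widetilde a_x=1<2$, and by (iii) its total can afterwards rise only to $1+\epsilon$, never to $2$; thus $x\notin P_c^2$ from then on, and the same reasoning shows no point can re-enter the (then empty) set $P_c^2$ once $c$ is processed. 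For Property~\ref{lem:alg:invarianten:c2}: when $d=\min C(P_c^2)\setminus\{c\}$ is chosen, a witnessing point of $P_c^2$ has integral total $\ge 2$ and hence an amount-$1$ edge to $d$, so $P_d^2\neq\emptyset$, which by Property~\ref{lem:alg:invarianten:b2} forces $d$ to be processed after $c$, i.e.\ $c<d$; since a center is closed only in this role, no center $\le$ the currently processed one is ever closed. For Property~\ref{lem:alg:invarianten:d2}: a new amount-$1$ edge from Line~\ref{alg2:reconnection} or \ref{alg2:reconnection3} leaves the point at total $1$ and a new $\epsilon$-edge from Line~\ref{alg2:swap} leaves it at total $1+\epsilon$; in either case the point's total never reaches $2$ afterwards, and the target center is already or currently processed, hence never closed by Property~\ref{lem:alg:invarianten:c2}, so none of the events that can remove such an edge — the point being assigned three or more times, the point lying in some $P^2$ set, or the target being closed — can occur.

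The step I expect to be the main obstacle is the bookkeeping around Line~\ref{alg2:reconnection2}: unlike in the integral algorithm, a connection created by the algorithm can later have its amount raised from $\epsilon$ to $1$ without being deleted, so "the connection remains'' in Property~\ref{lem:alg:invarianten:d2} must be read as "the edge stays present, possibly with a larger amount'', and one must carefully verify that such an upgrade never pushes a point past $1+\epsilon$ nor puts it back into a $P^2$ set. The clean way to handle this, which is what I would use everywhere above, is the auxiliary observation that every reassignment touches only points whose current total is at most $1$ — or, for Line~\ref{alg2:reconnection2}, points that are simultaneously losing their amount-$1$ edge to the center being closed — so that after any reassignment every affected point has total at most $1+\epsilon$, and in particular stays strictly below $2$; this plays the role of Property~\ref{lem:alg:invarianten:a0} of Lemma~\ref{lemma:easy_properties}, which has no direct counterpart in the statement of Lemma~\ref{lemma:easy_properties2}.
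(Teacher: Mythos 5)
Your proposal is correct and follows essentially the same inductive line of reasoning as the paper's proof (termination via a decreasing quantity over $P_{c_i}^2$, feasibility in three parts, and the four invariants each argued from the observation that the algorithm only creates edges at points with current total at most $1$). The one thing you flag as needing care — that Line~\ref{alg2:reconnection2} can raise an $\epsilon$-edge to amount $1$ and hence must be handled separately in the "only touch small-total points" observation — is indeed the right thing to notice, and the paper handles it in exactly the same implicit way (observing that such a point is simultaneously losing its amount-$1$ edge to the center being closed).
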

\begin{proof}
	The process terminates: For every iteration of the while loop starting in Line~\ref{alg2:start}, either a point is deleted from $P_{c_i}^2$ or there is at least one point $x \in P_{c_i}^2$ for which $\widetilde a_x$ is reduced by one. Furthermore $\widetilde a_x$ does never increase for any $x\in P_{c_i}^2$.
	
	The final solution satisfies lower bounds: 
	Every time we delete a connection between a point and a center it either happens because the center is closed or we replace this connection by assigning $\lceil\frac{1}{\epsilon}\rceil$ new points each by an amount of $\epsilon$ to it. So the lower bounds are satisfied at all open centers.
	
	All points are assigned by an amount of at least 1:
	Assume that the algorithm deletes the connection between a point $p$ and a center $d$. This either happens if $p$ is assigned by a total amount of at least $2$ at this time or $d$ is closed by the algorithm. In the last case we ensure in Line~\ref{alg2:reconnection}, Line~\ref{alg2:reconnection2} or Line~\ref{alg2:reconnection3} that $p$ is assigned by an amount of one to an other center after we close $d$. 
	
	All points are assigned by an amount of at most $1+\epsilon$: For $c\in C$ we know by Property \ref{lem:alg:invarianten:b2} that $P_c^2$ is empty after termination. Then $P_c=P_c^1\cup P_c^{\epsilon} \cup Q_c^{\epsilon}$, so all points connected to $c$ are assigned by a total amount of at most $1+\epsilon.$ 
	
	We conclude that the solution is feasible.
	
	\textbf{Property~\ref{lem:alg:invarianten:a2}:} 
	Let $c\in C$. Assume the property is true up to a time $t$. In the next step connections may change for the center that is currently processed, for a smaller center which has been processed already or for a center which is currently connected to a point by an amount of $\epsilon$. If $c$ is not processed so far none of this applies to it, so the property also holds in the next step.  
	
	\textbf{Property~\ref{lem:alg:invarianten:b2}:} 
	Assume that after the connection between $x\in P_c$ and $c$ is deleted by the algorithm, $x$ is part of $P_c^2$. That would require that the algorithm assigns $x$ to a center by an amount of one while it is already assigned to a second center by an amount of one, which does not happen. For the same reason $P_c^2$ remains empty after $c$ is processed by the algorithm. 
	
	\textbf{Property~\ref{lem:alg:invarianten:c2}:}
	Assume $c$ is currently processed by the algorithm and $d=\min C(P_c^2)\backslash\{c\}$. We know that at this time $P_d^2$ is non-empty. Which is by Property \ref{lem:alg:invarianten:b2} only possible if $d$ is processed after $c$. Thus we have $c<d$. This also means that centers can only be closed by the algorithm if they are not processed so far.   
	
	\textbf{Property~\ref{lem:alg:invarianten:d2}:}
	A connection established in Line~\ref{alg2:reconnection} involves a center which is already processed by the algorithm. By Property \ref{lem:alg:invarianten:c2} such centers remain open, thus the connection is not deleted until termination. In Line~\ref{alg2:reconnection3} and Line~\ref{alg2:swap} the algorithm establishes a connection between the currently processed center $c$ and some point $p$ which is assigned by an amount of at most 1 at this time. If this connection is deleted at some later point in time, this would require that $c$ is closed by the algorithm or $p\in P_c^2$. Both can not happen.   
\end{proof}

We bound the cost of $(\widetilde C,\widetilde a)$ in a similar way we bounded the cost of the solution in Theorem~\ref{TwoBound}. Let $N_c$ denote the set of points which are newly assigned by $\epsilon$ respectively $1$ to $c$ while $c$ is processed. This happens in Line~\ref{alg2:reconnection3} and Line~\ref{alg2:swap} of the algorithm. We want to charge the cost of these new connections to the cost of the original solution. 

For $y\in N_c$ let $d_y$ be the respective center in Line~\ref{alg2:def-d} of Algorithm \ref{Alg:eWLB} and $x_y$ the point in Line~\ref{alg2:def-x} respectively Line~\ref{alg2:def-x2} contained in $P_c^2$ and connected to $d_y$. 
Using the $\alpha$-relaxed triangle inequality, we obtain the following upper bound.
\begin{align}
	\nonumber d(y,c)	&\leq \alpha(d(y,x_y)+d(x_y,c))\leq \alpha\Big(\alpha\big (d(y,d_y)+d(d_y,x_y)\big )+d(x_y,c)\Big)\\
	&\leq \alpha ^2\big(d(y,d_y)+d(d_y,x_y)\big)+\alpha d(x_y,c).\label{eq:abc2}
\end{align}
We can apply \eqref{eq:abc2} to all $c\in\widetilde{C}$ and all $y\in N_c$. This yields the following upper bound on the cost of the final solution $(\widetilde C,\widetilde a)$.
\begin{align}
	\nonumber\cost(\widetilde C,\widetilde a)
	&=\sum_{c\in \widetilde C}\sum_{\hspace*{-1ex}\phantom{\widetilde C}y\in P}d(y,c)\widetilde{a}_y^c
	\leq \sum_{c\in \widetilde C}\Big (\hspace*{-1ex} \sum_{\hspace*{-1ex}\phantom{\widetilde C}y\in P_c\backslash N_c}d(y,c)+\sum_{\hspace*{-1ex}\phantom{\widetilde C}y\in N_c}d(y,c)\Big)\\
	\label{thm:eWLB_sum} &\leq \sum_{\hspace*{-1ex}\phantom{\widetilde C} c\in \widetilde C}\Big(\hspace*{-1ex}\sum_{\hspace*{-1ex}\phantom{\widetilde C}y\in P_c\backslash N_c} d(y,c)+
	\sum_{y\in N_c\hspace*{-1ex}\phantom{\widetilde C}} \alpha^2(d(y,d_y)+d(d_y,x_y))+ \alpha d(x_y,c)\Big). 
\end{align}	  

Notice that in the first inequality we use the fact that $\widetilde a_y^c\leq 1$. So we pay the the price of connecting $y$ to $c$ by an amount of $1$ independent of whether $\widetilde a_y^c$ is $1$ or $\epsilon.$ 

Expression~\eqref{thm:eWLB_sum} is what we want to pay for. 
Observe that all involved distances contribute to the original cost as well (we state this formally in Observation \ref{obs:all_are_in_cost2} below). So in principle, we can charge each summand to a term in the original cost. But what we need to do is to bound the number of times that each term in the original cost gets charged. To organize the counting, we count how many times a specific tuple of a point $z$ and a center $f$  occurs as $d(z,f)$ in \eqref{thm:eWLB_sum}.
Since it is important at which position a tuple appears, we give names to the different occurrences.
We say that that a tuple appears as a tuple of Type 0 if it appears as $d(y,c)$ in~\eqref{thm:eWLB_sum}, as tuple of Type 1 if it appears as $d(x_y,c)$, and as tuple of Type 2 if it appears as $d(y,d_y)$ or $d(d_y,x_y)$. We distinct the latter type further by calling a tuple occurring as $d(y,d_y)$ a tuple of Type 2.1 and a tuple occurring as $d(x_y,d_y)$ a tuple of Type 2.2. 
We say that $(y,d_y),(d_y,x_y)$ and $(x_y,c)$ \emph{contribute} to the cost of $(y,c)$, where by the \emph{cost} of $(y,c)$ we mean the upper bound on $d(y,c)$ in ~\eqref{eq:abc2} which we want to pay for.

\begin{observation}
	\label{obs:all_are_in_cost2}
	If a tuple $(z,f)$, $z\in P, f \in C$, occurs as Type 0, 1 or 2, then $f \in a(z) $, so in particular, $d(z,f)$ occurs as a term in the cost of the original solution.
\end{observation}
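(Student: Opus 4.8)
The plan is to imitate the proof of Observation~\ref{obs:all_are_in_cost}, splitting into the three tuple types and invoking the invariants of Lemma~\ref{lemma:easy_properties2}. The one genuine complication is that Algorithm~\ref{Alg:eWLB} has no exact analogue of Property~\ref{lem:alg:invarianten:a0} of Lemma~\ref{lemma:easy_properties} (it may reassign points that already carry a fractional assignment elsewhere), so the argument for Type~1 tuples needs an extra ingredient, which I expect to be the main obstacle.

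Before the case distinction I would isolate an auxiliary claim about ``unoriginal'' unit connections: \emph{if at some step of the algorithm $\widetilde a_q^f=1$ while $f\notin a(q)$, then this connection was created in Line~\ref{alg2:reconnection2} or Line~\ref{alg2:reconnection3}, and from that step on $q$ is assigned by a total amount of exactly~$1$ and only to $f$; in particular $q\notin P_f^2$ thereafter.} The point is that the initialization and Line~\ref{alg2:reconnection} create unit connections only to centers of $a(q)$, and Line~\ref{alg2:swap} creates weight-$\epsilon$ connections, so a unit connection with $f\notin a(q)$ can only come from Line~\ref{alg2:reconnection2} or~\ref{alg2:reconnection3}; in both of those lines $q$ ends up assigned exclusively to $f$ with total weight~$1$, and $f$ was the center being processed when that connection (respectively the underlying $\epsilon$-connection of Line~\ref{alg2:swap}) was created, so by the proof of Property~\ref{lem:alg:invarianten:c2} of Lemma~\ref{lemma:easy_properties2} the center $f$ is never closed afterwards. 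A short inspection of the remaining reassignment lines, each of which only adds connections to points exclusively connected with weight~$1$ to the center currently being closed or processed, or (Line~\ref{alg2:reconnection2}) to points carrying an $\epsilon$-connection, then shows, using that $f$ is never closed and is not processed strictly after the creation step, that none of them applies to $q$ again, so $q$ keeps exactly this one connection.

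With the claim at hand the three cases go as follows. For Type~$0$, a tuple $(y,c)$ occurs with $y\in P_c\backslash N_c$; if $c\notin a(y)$, then by the claim the unit connection between $y$ and $c$ stems from Line~\ref{alg2:reconnection3}, while an $\epsilon$-connection between $y$ and $c$ could only stem from Line~\ref{alg2:swap} (possibly later upgraded in Line~\ref{alg2:reconnection2}); in every one of these cases $y$ was assigned to $c$ while $c$ was being processed, i.e.\ $y\in N_c$, contradicting $y\in P_c\backslash N_c$, so $c\in a(y)$. For Type~$2$, the relevant tuples are $(y,d_y)$ and $(x_y,d_y)$ with $y\in N_c$, and both $y$ and $x_y$ are connected to $d_y$ at the start of the while-iteration in which $y$ is assigned to $c$; since $d_y=\min C(P_c^2)\backslash\{c\}$, Property~\ref{lem:alg:invarianten:c2} of Lemma~\ref{lemma:easy_properties2} gives $c<d_y$, so $d_y$ has not been processed yet and is currently open, whence Property~\ref{lem:alg:invarianten:a2} of Lemma~\ref{lemma:easy_properties2} shows that $P_{d_y}$ still equals its initial value $\{q\mid d_y\in a(q)\}$; as $y,x_y\in P_{d_y}$ at that moment, $d_y\in a(y)$ and $d_y\in a(x_y)$. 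For Type~$1$, the tuple is $(x_y,c)$ with $x_y\in P_c^2$ at some step, so $\widetilde a_{x_y}^c=1$ while $\widetilde a_{x_y}\ge 2$; by the auxiliary claim the unit connection between $x_y$ and $c$ cannot be ``unoriginal'', since that would force $\widetilde a_{x_y}=1$ from then on, contradicting $x_y\in P_c^2$, so $c\in a(x_y)$. In all three cases $f\in a(z)$, so $d(z,f)$ is a summand of $\cost(C,a)$, as required.
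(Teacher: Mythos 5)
Your proof is correct and takes the same route as the paper's: the same three-type case split, with Type~2 handled via Properties~\ref{lem:alg:invarianten:a2} and~\ref{lem:alg:invarianten:c2} of Lemma~\ref{lemma:easy_properties2}, and Types~0 and~1 by tracing back where a connection to a center outside $a(\cdot)$ can originate. The auxiliary claim you isolate (unit connections with $f\notin a(q)$ come only from Lines~\ref{alg2:reconnection2}/\ref{alg2:reconnection3} and leave $q$ exclusively assigned to $f$ with total amount $1$ forever) correctly formalizes the step that the paper only asserts --- ``there is a time where $x_y\in P_c^2$, which can only happen if the connection between $x_y$ and $c$ is already part of $(C,a)$'' --- and rightly compensates for the absence in Lemma~\ref{lemma:easy_properties2} of an analogue of Property~\ref{lem:alg:invarianten:a0} from Lemma~\ref{lemma:easy_properties}.
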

\begin{proof}
	For a center $c$ the set $P_c\backslash N_c$ contains points which are assigned to $c$ by the initial assignment $a$ or assigned to $c$ while $c$ is not processed by the algorithm. Latter can only happen if a connection is reestablished in Line~\ref{alg2:reconnection} which requires that the connection was already present in $(C,a)$. So Type 0 tuples satisfy the statement.
	
	For Type 1 and 2 tuples, 
	consider $y\in N_c$ for some center $c$ and the respective tuples $(x_y,c),(y,d_y)$, $(x_y,d_y).$ Notice that both $y$ and $x_y$ are connected to $d_y$ before $y$ is assigned to $c$. By Property \ref{lem:alg:invarianten:c2} of Lemma \ref{lemma:easy_properties2} we have $c<d_y$. Thus we know by Property \ref{lem:alg:invarianten:a2} of Lemma \ref{lemma:easy_properties2} that $d_y\in a(y)$ and $d_y\in a(x_y)$ which proves that Type 2 tuples satisfy the statement. 
	Moreover it holds that $c\in a(x_y)$ since there is a time where $x_y\in P_c^2$, which can only happen if the connection between $x_y$ and $c$ is already part of $(C,a)$. Thus, Type 1 tuples satisfy the statement.
\end{proof}

As indicated above, a tuple $(z,f)$ can contribute to the cost of multiple tuples. Notice that a tuple occurs at most once as a tuple of Type 0 in \eqref{thm:eWLB_sum}. To bound the cost of $(\widetilde C,\widetilde a)$ we bound the number of times a tuple appears as Type 1 or Type 2 tuple in \eqref{thm:eWLB_sum}. 

Remember that we used a similar statement in the proof of Theorem \ref{TwoBound}, where we proved that every tuple can appear at most once as each type. However here we can only bound the appearance by $\lceil\frac{1}{\epsilon}\rceil$ for Type 1 and Type 2 tuples due to Line~\ref{alg2:reconnection3} and Line~\ref{alg2:swap} where we assign up to $\lceil\frac{1}{\epsilon}\rceil$ points from $P_d^1$ to $c$. Notice that even if we assign each of these points initially by an amount of $\epsilon$ to $c$ as it is done in Line~\ref{alg2:swap}, that amount can be increased to $1$ at some later time in Line~\ref{alg2:reconnection2}. The proof is similar to that of Lemma \ref{lemma:types} but we carry out the arguments again for sake of completeness.
\begin{lemma}
	\label{lemma:types2}
	For all $z\in P, f \in C$, the tuple $(z,f)$ appears in \eqref{thm:eWLB_sum} at most $\lceil\frac{1}{\epsilon}\rceil$ times as tuple of Type 1 and at most $\lceil\frac{1}{\epsilon} \rceil$ times as tuple of Type 2.
\end{lemma}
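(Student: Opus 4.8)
The plan is to replay the proof of Lemma~\ref{lemma:types}, tracking the one place where the fractional algorithm may charge a single distance more than once. That place is a single iteration of the while loop: while processing a center $c=c_i$ the algorithm picks $d=\min C(P_{c_i}^2)\setminus\{c_i\}$ and a point $x\in P_{c_i}^2$ connected to $d$, and then assigns a whole \emph{set} of points to $c$ at once --- a set $A\subseteq P_d^1\setminus\overline{P_d^1}$ of cardinality $\lceil\frac{1}{\epsilon}\rceil$ in Line~\ref{alg2:swap}, or the set $P_d^1\setminus\overline{P_d^1}$, which has cardinality strictly less than $\frac{1}{\epsilon}$, in Line~\ref{alg2:reconnection3}. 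Call such an iteration a \emph{batch step}. Every point $y$ newly assigned to $c$ in a given batch step has the same witnesses $d_y=d$ and $x_y=x$, so inside one batch step the tuple $(x,c)$ is used at most $\lceil\frac{1}{\epsilon}\rceil$ times as a Type~1 tuple, the tuple $(x,d)$ at most $\lceil\frac{1}{\epsilon}\rceil$ times as a Type~2.2 tuple, while the Type~2.1 tuples $(y,d)$, $y\in N_c$, are pairwise distinct and each occurs once. It therefore suffices to show that a fixed tuple $(z,f)$ is responsible for at most one batch step in each role; summing over batch steps then yields the bound $\lceil\frac{1}{\epsilon}\rceil$ for Type~1 and for Type~2.

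For Type~1, an occurrence of $(z,f)$ as $d(x_y,c)$ forces $f=c$ and $z=x$. I would then argue that after this batch step $x$ is never again in $P_c^2$: either the connection between $x$ and $c$ is deleted --- in Line~\ref{alg2:swap}, or in Line~\ref{alg2:reconnection3} when $\widetilde a_x\ge 3$ --- so Property~\ref{lem:alg:invarianten:b2} of Lemma~\ref{lemma:easy_properties2} applies directly, or $d$ is closed and $\widetilde a_x$ drops below $2$, after which, using the invariants of Lemma~\ref{lemma:easy_properties2} together with the fact that no algorithmic reassignment pushes a point's total amount back up to $2$, $\widetilde a_x$ never again reaches a value $\ge 2$ with $\widetilde a_x^c=1$. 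Since $f=c$ is fixed, $(z,f)$ can occur as a Type~1 tuple only in this one batch step, so its total Type~1 count is at most $\lceil\frac{1}{\epsilon}\rceil$.

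For Type~2 I split into the subtypes exactly as in Lemma~\ref{lemma:types}. A Type~2.2 occurrence again forces $z=x$ and $f=d=\min C(P_c^2)\setminus\{c\}$ for the processed center $c$; the ``$x$ leaves $P_c^2$ forever'' fact rules out a second batch step on $c$ using it, and the minimality of $d$ together with Property~\ref{lem:alg:invarianten:c2} rules out $(z,f)$ being a Type~2.2 tuple for any $c'\neq c$. A Type~2.1 occurrence forces $z=y\in N_c$ and $f=d_y$; afterwards $y$ carries a new, never-deleted connection to $c$ (Property~\ref{lem:alg:invarianten:d2}), so either $d_y$ has been closed (in which case $P_{d_y}^1$ is empty from then on) or $\widetilde a_y>1$, and in both cases $y\notin P_{d_y}^1$ from then on, so $(z,f)$ never recurs as Type~2.1. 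Finally I would show the two subtypes are mutually exclusive for a fixed tuple, mirroring the last part of the proof of Lemma~\ref{lemma:types}: if $(z,f)$ was ever Type~2.2, then $z\in P_c^2$ at that time, which forces $|a(z)|\ge 2$ and $c\in a(z)\cap\widetilde C$ with $c$ smaller than any later processed center, hence $z\in\overline{P_f^1}$ and the algorithm never picks $z$ for a Type~2.1 reassignment; conversely, once $(z,f)$ is Type~2.1 we have $\widetilde a_z\le 1+\epsilon$ forever, so $z\notin P_{c'}^2$ for every $c'$ and $(z,f)$ is never Type~2.2. Hence the total Type~2 count is $\max\{1,\lceil\frac{1}{\epsilon}\rceil\}=\lceil\frac{1}{\epsilon}\rceil$.

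The genuine work here is not the combinatorial skeleton --- which is essentially identical to Lemma~\ref{lemma:types} --- but the bookkeeping of the fractional amounts behind the ``$x$ leaves $P_c^2$ forever'' and ``$\widetilde a_z\le 1+\epsilon$ forever'' claims. Two features make this more delicate than the integral case: the upgrade in Line~\ref{alg2:reconnection2}, where a point's $\epsilon$-contribution to a center is later raised to $1$ (so one cannot simply count connections but must note that the upgrade fires only just after the point's amount-$1$ anchor center is closed, so its total amount never exceeds $1+\epsilon$), and the special case in Line~\ref{alg2:reconnection3} where the connection between $x$ and $c$ is deliberately kept when $\widetilde a_x<3$, so that $x\notin P_c^2$ must be deduced from the closing of $d$ dropping $\widetilde a_x$ rather than from a deleted connection. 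One also has to recheck the Type~2.1/Type~2.2 exclusion against the new exclusion set $\overline{P_d^1}=P_d^1\cap\{q\mid |a(q)|\ge 2\ \text{and}\ a(q)\cap\{c_1,\ldots,c_{i-1}\}\cap\widetilde C\neq\emptyset\}$, which uses the threshold $|a(q)|\ge 2$ in place of the $|a(q)|\ge 3$ from the integral algorithm.
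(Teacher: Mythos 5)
Your proposal is correct and follows essentially the same approach as the paper's proof: reduce to showing a fixed tuple is used as witness in at most one iteration per role, bound each iteration by the batch size $\lceil\frac{1}{\epsilon}\rceil$, and establish mutual exclusion of Types 2.1 and 2.2 via the exclusion set $\overline{P_f^1}$. Where you go slightly beyond the paper is the explicit case split for Type~1 (connection deleted vs.\ $d$ closed with $\widetilde a_x=2$): the paper's proof of Lemma~\ref{lemma:types2} cites Property~\ref{lem:alg:invarianten:b2} wholesale even though that property is formally stated only for deleted connections, whereas you correctly observe that when $\widetilde a_x<3$ in Line~\ref{alg2:reconnection3} the $x$-to-$c$ edge survives and one must instead invoke the monotonicity fact that no reassignment ever raises a point's total amount to $2$ (only Line~\ref{alg2:swap} increases totals, and only from $1$ to $1+\epsilon$). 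That monotonicity observation is also exactly what the paper's Observation~\ref{obs:all_are_in_cost2} and the mutual-exclusion step quietly rely on when inferring $|a(z)|\geq 2$ from $z\in P_c^2$, so your framing surfaces a shared lemma that the paper leaves implicit. Both routes arrive at the same count $\max\{1,\lceil\frac{1}{\epsilon}\rceil\}=\lceil\frac{1}{\epsilon}\rceil$ for Type~2 via the same two exclusion directions, so the approaches are in substance the same.
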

\begin{proof}
	In the following, the tuple whose cost the tuple $(z,f)$ contributes to will always be named $(y,c)$, and we denote the time at which $y$ is newly assigned to $c$ by $t$.
	
	\textbf{Type 1:} Assume $(z,f)$ contributes to the cost of $(y,c)$ as a Tuple of Type 1. Then $f=c$. 
	At $t$ we assign up to $\lceil\frac{1}{\epsilon}\rceil$ points to $c$. So $(z,f)$ contributes to the cost of at most $\lceil\frac{1}{\epsilon}\rceil$ connections established by the algorithm at $t$ as tuple of Type 1.
	Notice that at the time step before $t$ we must have $z\in P_c^2$ and afterwards, $z$ is never again contained in $P_c^2$ by Property \ref{lem:alg:invarianten:b2} of Lemma \ref{lemma:easy_properties2}.  Thus the tuple $(z,c)$ can not be responsible for any assignment to $c$ after $t$, i.e., $(z,c)=(z,f)$ does not contribute to any further cost as a tuple of Type 1.
	
	\textbf{Type 2.1:}	
	Assume that $(z,f)$ contributes to the cost of $(y,c)$ as a Tuple of Type 2.1. Then $z=y$.
	At the time step before $t$, we have $y\in P_f^1$, $f \in C(P_c^2)$. By Property \ref{lem:alg:invarianten:d2} of Lemma \ref{lemma:easy_properties2}, newly established connections stay, so after time $t$, it always holds that $y\in P_{c}$.
	So even if $y$ is in $P_f$ at a later time, it can not be in $P_f^1$ since it is also connected to $c$. So $(y,f)=(z,f)$ does not contribute to any further cost as tuple of Type 2.1.
	Furthermore, observe that the algorithm never adds a connection to a point which is assigned more than once. So we know that $y$ is always assigned by an amount of at most $1+\epsilon$ after $t$ which means that $(y,f)$ does not contribute as tuple of Type 2.2 to the cost of any connection established by the algorithm after $t$ either.
	
	\textbf{Type 2.2:}	Finally we consider the case where $(z,f)$ contributes to the cost of $(y,c)$ as a tuple of Type 2.2. At time $t$, the algorithm processes $c$. By the way the algorithm chooses $f$ and $z$, we know that $z \in P_c^2$ (at the beginning of the process, i.e., before $t$) and $f = \min C(P_c^2)\backslash\{c\}$. After $t$, Property \ref{lem:alg:invarianten:b2} of Lemma \ref{lemma:easy_properties2} implies $z \notin P_c^2$, which means that as a tuple of Type 2.2, it can not contribute to the cost of any tuple containing $c$ after $t$. However it contributes as tuple of Type 2.2 to the cost of up to $\lceil\frac{1}{\epsilon}\rceil-1$ additional connections at time $t$ (see Line~\ref{alg2:reconnection3} and Line \ref{alg2:swap}).
	Assume instead that it contributes (as Type 2.2) to the cost of a tuple $(y',c')$ for a center $c' \neq c$, and some point $y' \in P$. This is supposed to happen after $t$, so $y'$ is newly assigned to $c'$ at some time $t' > t$.
	The step before $t'$ we have $z\in P_{c'}^2$. Thus before $c'$ is processed, we must always have $z\in P_{c'}^2$ by Property \ref{lem:alg:invarianten:a2} of Lemma \ref{lemma:easy_properties2}. So in particular, at time $t < t'$ we have $c'\in C(P_{c}^2)\backslash \{ c\}$. Moreover we know that at some time while $c'$ is processed by the algorithm we have $f=\min C(P_{c'}^2)\backslash \{c'\}$. Using Property \ref{lem:alg:invarianten:c2} of Lemma~\ref{lemma:easy_properties2} we conclude that $c'<f$. 
	Which is a contradiction since the algorithm chose $f$ and not $c'$ at time $t$, i.e., $f=\min C(P_{c}^2)\backslash \{ c\}$ must hold. Thus, $(z,f)$ can not contribute to the cost of $(y',c')$ as a tuple of Type 2.2.
	
	It is left to show that $(z,f)$ can not contribute to the cost of any $(y',c')$ as a tuple of Type 2.1 at some time $t' > t$. 
	For a contribution as Type 2.1, we would have $z=y'$ and $y' \in P_{f}^1$. We show that in this case $y'$ is even contained in $\overline {P_{f}^1}$. Remember that at time $t$ we have $y'=z\in P_c^2$ and that this only happens if $|a(y')|\geq 2$. Moreover $c$ is sill open by Property \ref{lem:alg:invarianten:c2} of Lemma \ref{lemma:easy_properties2} and is smaller than $c'$. Thus $c\in a(y')\cap \{e\mid e<c'\}\cap \widetilde C$, which proves $y'\in \overline{P_f^1}.$ Therefore the algorithm does not assign $y'$ to $c'$ (see Line \ref{alg2:reconnection}) and $(z,f)$ does not contribute as tuple of Type 2.1 to the cost of any connection established by the algorithm after $t$. 
\end{proof}
For the final counting, we define $T0$, $T1$ and $T2$ as the sets of all tuples of Type 0, 1 and 2, respectively. 

\begin{proof}[Proof of Theorem \ref{TwoBound}]
	Slightly abusing the notation we write $d(e)$ for a tuple $e=(z,f)$ by which we mean the distance $d(z,f)$. We obtain
	\addtocounter{equation}{-1}
	\begin{align}
		\cost(\widetilde C,\widetilde a)
		&\leq \sum_{c\in\widetilde{C}}
		\Big(\sum_{y\in P_c\backslash N_c\hspace*{-2ex}\phantom{\widetilde C}} d(y,c)+
		\sum_{y\in N_c\hspace*{-1ex}\phantom{\widetilde C}} \alpha^2(d(y,d_y)+d(d_y,x_y))+ \alpha d(x_y,c)\Big)\\
		&= \sum_{e\in T0} d(e)+\alpha^2\Big\lceil\frac{1}{\epsilon}\Big\rceil\sum_{e\in T2}d(e)+\alpha\Big\lceil\frac{1}{\epsilon}\Big\rceil\sum_{e\in T1}d(e) \label{thmpf:eq2}\\
		&\leq \big (\Big\lceil\frac{1}{\epsilon}\Big\rceil\alpha(\alpha+1)+1\big)\cost (C,a).\label{thmpf:ineq2} 
	\end{align} 
	Here we replace \eqref{thm:eWLB_sum} by summing up the cost of all tuples in $T_i$ for $i=0,1,2$ with the respective factor times the maximal number of appearances for each type. Thus by Lemma \ref{lemma:types2} we obtain a total factor of 1 for Type 0, $\alpha^2 \lceil\frac{1}{\epsilon}\rceil$ for Type 1 and $\alpha \lceil\frac{1}{\epsilon}\rceil$ for Type 2 (see \eqref{thmpf:eq2}). 
	
	Finally by Observation \ref{obs:all_are_in_cost2} the cost $d(e)$ for $e\in T_0\cup T_1\cup T_2$ occurs as a term in the original solution which proves \eqref{thmpf:ineq2}.
\end{proof}

\begin{figure}[t]
	\centering
	\resizebox{1\linewidth}{!}{
		\begin{tikzpicture}[dot/.style = {shape = circle, fill = black, inner sep =1.5pt},
		center/.style= {shape = rectangle, draw, minimum size= .6cm},
		every edge/.style={->, draw, >= stealth'},
		light/.style={color=gray!50!white},
		auto]
		\node[dot, label=$x$] (pointx) at (2, 1) {};
		\node[dot] (point1) at (5, -2.7) {};
		\node[dot] (point2) at (5.5, -2.35) {};
		\node[dot] (point3) at (6, -2) {};
		\draw[rotate=35] (5.5*0.819-2.35*0.573, -5.5*0.573-2.35*0.819) ellipse (1 and 0.5);
		\node at (7.2, -2.35) {$P_d^1\setminus\overline {P_d^1}$};
		\node[center] (centerc) at (0,0) {$c$};
		\node[center] (centerd) at (4,0) {$d$};
		
		\draw (pointx) edge node[swap]{1} (centerc)
		(pointx) edge node{1} (centerd)
		(point1) edge node{1} (centerd)
		(point2) edge (centerd)
		(point3) edge node[swap]{1} (centerd);
		\end{tikzpicture}
		\begin{tikzpicture}[dot/.style = {shape = circle, fill = black, inner sep =1.5pt},
		center/.style= {shape = rectangle, draw, minimum size= .6cm},
		every edge/.style={->, draw, >= stealth'},
		every node/.style={color=black},
		light/.style={color=gray!50!white},
		auto]
		\node[dot, label=$x$] (pointx) at (2, 1) {};
		\node[dot] (point1) at (5, -2.7) {};
		\node[dot] (point2) at (5.5, -2.35) {};
		\node[dot] (point3) at (6, -2) {};
		\draw[rotate=35] (5.5*0.819-2.35*0.573, -5.5*0.573-2.35*0.819) ellipse (1 and 0.5);
		\node at (6.8, -2.35) {$A$};
		\node[center] (centerc) at (0,0) {$c$};
		\node[center] (centerd) at (4,0) {$d$};
		
		\draw (pointx) edge[light] (centerc)
		(pointx) edge node{1} (centerd)
		(point1) edge (centerd)
		(point2) edge (centerd)
		(point3) edge node[swap]{1} (centerd)
		(point1) edge node{$\epsilon$} (centerc)
		(point2) edge (centerc)
		(point3) edge node[swap]{$\epsilon$} (centerc);
		\end{tikzpicture}}
	\caption{Shows case $|P_d^1\backslash \overline {P_d^1}|>\lceil\frac{1}{\epsilon}\rceil$. Pick a set $A\subset P_d^1\backslash\overline {P_d^1}$ of cardinality $\lceil\frac{1}{\epsilon}\rceil$ and assign an amount of $\epsilon$ from points in $A$ to $c$. Here $A=P_d^1\backslash \overline {P_d^1}.$}
	\label{eBound:replacingX}
\end{figure}
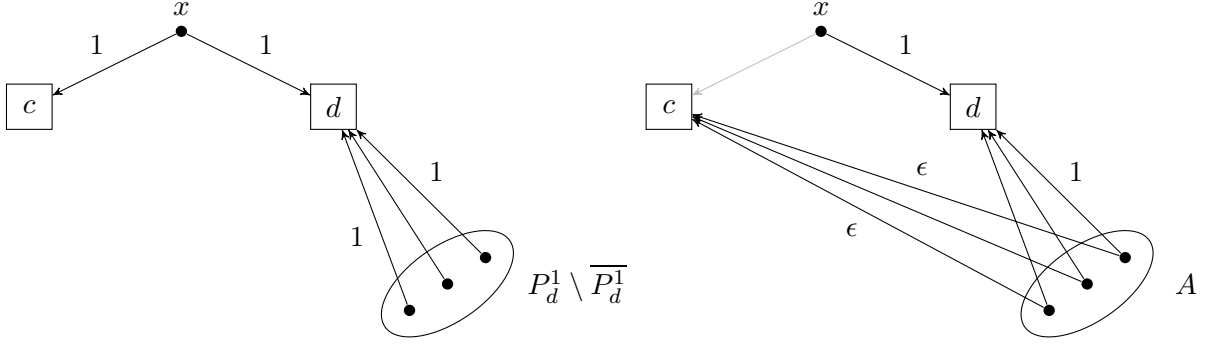 
\begin{figure}[t]
	\centering
	\begin{tikzpicture}[dot/.style = {shape = circle, fill = black, inner sep =1.5pt},
	center/.style= {shape = rectangle, draw, minimum size= .6cm},
	every edge/.style={->, draw, >= stealth'},
	light/.style={color=gray!50!white},
	auto]		
	\node[dot, label=$x$] (pointx') at (2+9, 1) {};
	\node[dot] (point1') at (5+9, -2.7) {};
	\node[dot] (point2') at (5.5+9, -2.35) {};
	\node[dot] (point3') at (6+9, -2) {};
	\draw[rotate=35] (5.5*0.819+9*0.819-2.35*0.573, -5.5*0.573-9*0.573-2.35*0.819) ellipse (1 and 0.5);
	\node at (7.2+9, -2.35) {$P_d^1\setminus \overline {P_d^1}$};
	\node[center] (centerc') at (0+9,0) {$c$};
	\node[center] (centerd') at (4+9,0) {$d$};
	
	\draw (pointx') edge[light] (centerc')
	(pointx') edge[light] (centerd')
	(point1') edge[light] (centerd')
	(point2') edge[light] (centerd')
	(point3') edge[light] (centerd')
	(point1') edge node{1} (centerc')
	(point2') edge (centerc')
	(point3') edge (centerc');
	
	\end{tikzpicture}
	\caption{Shows case $|P_d^1\backslash \overline {P_d^1}|<\lceil \frac{1}{\epsilon}\rceil$. Center $d$ is closed and points from $P_d^1\backslash \overline {P_d^1}$ are assigned to $c$.}
	\label{eBound:DeletionD}
\end{figure}
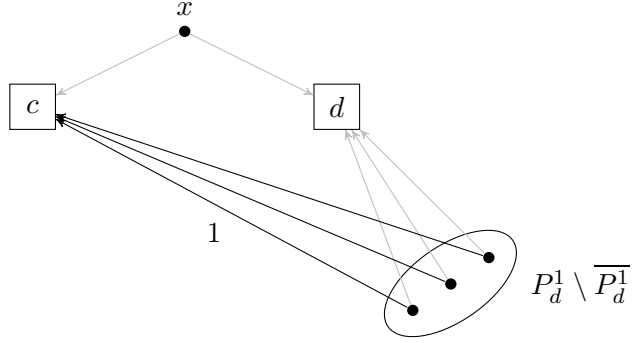
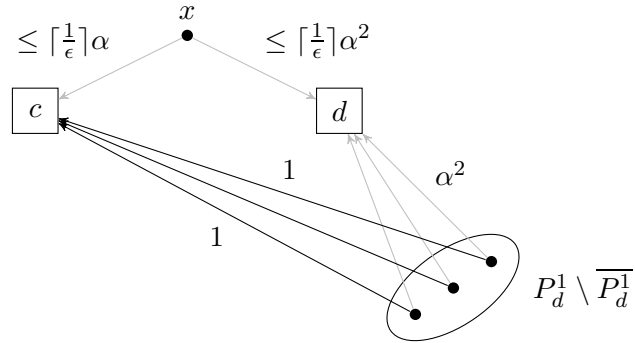
\begin{figure}[t]
	\centering
	\begin{tikzpicture}[dot/.style = {shape = circle, fill = black, inner sep =1.5pt},
	center/.style= {shape = rectangle, draw, minimum size= .6cm},
	every edge/.style={->, draw, >= stealth'},
	every node/.style={color=black},
	light/.style={color=gray!50!white},
	auto]
	\node[dot, label=$x$] (pointx') at (2+9, 1) {};
	\node[dot] (point1') at (5+9, -2.7) {};
	\node[dot] (point2') at (5.5+9, -2.35) {};
	\node[dot] (point3') at (6+9, -2) {};
	\draw[rotate=35] (5.5*0.819+9*0.819-2.35*0.573, -5.5*0.573-9*0.573-2.35*0.819) ellipse (1 and 0.5);
	\node at (7.2+9, -2.35) {$P_d^1\setminus\overline {P_d^1}$};
	\node[center] (centerc') at (0+9,0) {$c$};
	\node[center] (centerd') at (4+9,0) {$d$};
	
	\draw (pointx') edge[light] node[swap]{$\leq \lceil \frac{1}{\epsilon}\rceil \alpha$} (centerc')
	(pointx') edge[light] node{$\leq \lceil \frac{1}{\epsilon}\rceil \alpha^2$} (centerd')
	(point1') edge[light] (centerd')
	(point2') edge[light] (centerd')
	(point3') edge[light] node[swap]{$\alpha^2$} (centerd')
	(point1') edge node{1} (centerc')
	(point2') edge (centerc')
	(point3') edge node[swap]{1} (centerc');
	\end{tikzpicture}
	\caption{Showing the case where $d$ is closed. To bound the distance from points in $P_d^1\backslash \overline {P_d^1}$ to $c$ the respective distances appear with a factor of $\lceil\frac{1}{\epsilon}\rceil\alpha, \lceil\frac{1}{\epsilon}\rceil\alpha^2$ or $\alpha^2$.}
	\label{eBound:CostBound}
\end{figure}

Note that we also prove that we can find a fractional assignment of a special structure. The assignment $\widetilde a$ assigns every point to at most two centers. It is assigned by an amount on one to one center and eventually by an additional amount of $\epsilon$ to a second center. 

\section{A bicriteria algorithm to generalized \kk-median with lower bounds}\label{sec:bicriteria}
So far we presented an algorithm that computes a set of at most $k$ centers $C\subset F$ and an assignment $a\colon P\rightarrow \mathcal P(C)$ such that the lower bound is satisfied at all centers and every point is assigned at least once and at most twice.

An $(\beta,\delta)$-bicriteria solution for generalized $k$-median with lower bounds consists of at most $k$ centers $C'\subset F$ and an assignment $a'\colon P\rightarrow C$ such that at least $\beta B(c)$ points are assigned to $c\in C'$ by $a'$ and $\cost(C',a')\leq \delta \cost(OPT_k)$. Here $OPT_k$ denotes an optimal solution to generalized $k$-median with lower bounds.

Given a $\beta \geq \frac{1}{2}$ and a $\gamma$-approximate solution to generalized $k$-median with 2-weak lower bounds $(C,a)$, we can compute a $(\beta, \gamma\max\{\frac{\alpha\beta}{1-\beta}+1, \frac{\alpha^2\beta}{1-\beta}\})$-bicriteria solution in the following way. Let $C=\{c_1,\ldots, c_{k'}\}$ for some $k'\leq k$. We process the centers in order $c_1,\ldots,c_{k'}$ and decide if they are open or closed. We say that $c_i$ is \textit{smaller} than $c_j$ if $i<j$. If we decide that a center $c$ is open we directly assign at least $\lceil\beta B(c)\rceil$ points to $c$. In the beginning all points are unassigned.

Consider center $c_i$. Let $A_i$ be the set of all points assigned to $c_i$ under $a$. We know that $|A_i|\geq B(c_i)$. If at least $\lceil\beta B(c_i)\rceil$ points in $A_i$ are not assigned so far, $c_i$ remains open and all currently unassigned points from $A_i$ are assigned to $c_i$ (Figure \ref{Bicriteria:Opening}).
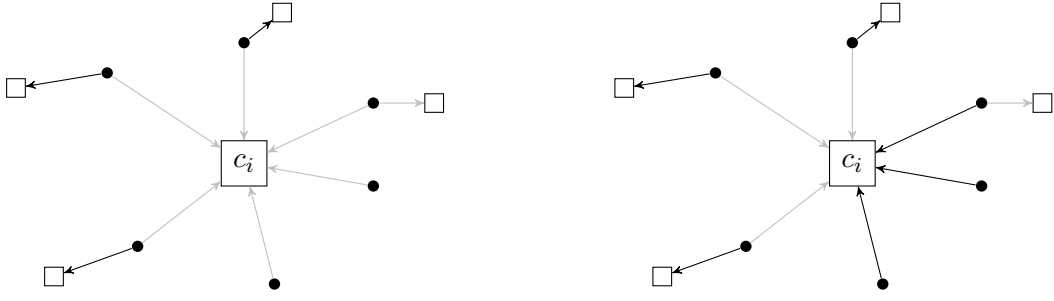
\begin{figure}
	\centering
	\begin{tikzpicture}[dot/.style = {shape = circle, fill = black, inner sep =1.5pt},
	center/.style= {shape = rectangle, draw, minimum size= .6cm},
	emptycenter/.style= {shape = rectangle, draw, minimum size= 7pt, inner sep = 0},
	every edge/.style={->, draw, >= stealth'},
	every node/.style={color=black},
	light/.style={color=gray!50!white},
	auto]
	\node[center] (centerc) at (0,0) {$c_i$};
	\node[emptycenter] (center1) at (-3, 1) {};
	\node[emptycenter] (center2) at (-2.5, -1.5) {};
	\node[emptycenter] (center3) at (0.5, 2) {};
	\node[emptycenter] (center4) at (2.5, 0.8) {};
	
	\node[dot] (dot1) at (-1.8, 1.2) {};
	\node[dot] (dot2) at (-1.4, -1.1) {};
	\node[dot] (dot3) at (0, 1.6) {};
	\node[dot] (dot4) at (1.7, 0.8) {};
	\node[dot] (dot5) at (1.7, -0.3) {};
	\node[dot] (dot6) at (0.4, -1.6) {};
	
	\draw (dot1) edge[light] (centerc)
	(dot2) edge[light] (centerc)
	(dot3) edge[light] (centerc)
	(dot4) edge[light] (centerc)
	(dot5) edge[light] (centerc)
	(dot6) edge[light] (centerc)
	(dot1) edge (center1)
	(dot2) edge (center2)
	(dot3) edge (center3)
	(dot4) edge[light] (center4);
	
	\node[center] (centerc') at (0+8,0) {$c_i$};
	\node[emptycenter] (center1') at (-3+8, 1) {};
	\node[emptycenter] (center2') at (-2.5+8, -1.5) {};
	\node[emptycenter] (center3') at (0.5+8, 2) {};
	\node[emptycenter] (center4') at (2.5+8, 0.8) {};
	
	\node[dot] (dot1') at (-1.8+8, 1.2) {};
	\node[dot] (dot2') at (-1.4+8, -1.1) {};
	\node[dot] (dot3') at (0+8, 1.6) {};
	\node[dot] (dot4') at (1.7+8, 0.8) {};
	\node[dot] (dot5') at (1.7+8, -0.3) {};
	\node[dot] (dot6') at (0.4+8, -1.6) {};
	
	\draw (dot1') edge[light] (centerc')
	(dot2') edge[light] (centerc')
	(dot3') edge[light] (centerc')
	(dot4') edge (centerc')
	(dot5') edge (centerc')
	(dot6') edge (centerc')
	(dot1') edge (center1')
	(dot2') edge (center2')
	(dot3') edge (center3')
	(dot4') edge[light] (center4');
	\end{tikzpicture}
	\caption{Shows the case where $A_i$ contains at least $\lceil\beta B(c_i)\rceil$ unassigned points. The three points on the left are already assigned to other centers and the three points on the right are newly assigned to $c_i$. The gray connections come from $a$.}
	\label{Bicriteria:Opening}
\end{figure}
If less than $\lceil\beta B(c_i)\rceil$ points from $A_i$ are unassigned, the center is closed.

Let $C'$ denote the centers from $\{c_1,\ldots,c_{i-1}\}$ which are open and $B_i$ the set of unassigned points from $A_i$ which are not connected to any center larger than $c_i$ under $a$. To guarantee that all points are assigned at the end, we have to care about points in $B_i$. By assumption there are at most $\lfloor\beta B(c_i)\rfloor$ such points. We simply assign point $p\in B_i$ to the nearest center $\argmin_{c\in C'}d(c,p)$ in $C'$. The whole procedure is described in Algorithm~\ref{algo:bicriteriell}.

To upper bound the assignment cost in the case $c_i$ is closed by the algorithm we consider a second assignment $b$, which may be fractional. We define for $p\in B_i$ and $c\in C'$ a value $b_p^c\in [0,1]$ which indicates the amount by which $p$ is assigned to $c$. We claim that we can find a fractional assignment such that for every $q\in B_i$ and $f\in C'$ the following holds
\begin{enumerate}
	\item point $q$ is assigned by an amount of one, i.e., 
	\[\sum_{c\in C'}b_q^c=1\]
	\item and at most $\frac{\beta}{1-\beta}|\{p\in A_i \mid f\in a(p)\}|$ amount is assigned to $f$, i.e., 
	\[\sum_{p\in B_i}b_f^p\leq \frac{\beta}{1-\beta}|\{p\in A_i \mid f\in a(p)\}|\]
\end{enumerate}
Such an assignment can be found since
\begin{align*}
	\frac{\beta}{1-\beta} \sum_{c\in C'} |\{p\in A_i \mid c\in a(p)\}|=&\frac{\beta}{1-\beta}|\{p\in A_i\mid a(p)\cap C'\neq \emptyset\}|\\
	\geq &\frac{\beta(1-\beta)}{1-\beta} B(c_i)\geq |B_i|.
\end{align*}
To see the first inequality we observe the following.
If a point $p\in A_i$ is connected to an open center $c\in C'$ under $a$, it is already assigned to $c$ by the algorithm. So the set of points from $A_i$ which are already assigned to some center equals $\{p\in A_i\mid a(p)\cap C'\neq \emptyset\}$. We know that $|A_i|\geq B(c_i)$ and that at most $\lfloor\beta B(c_i)\rfloor$ points from $A_i$ are unassigned. Thus we have $|\{p\in A_i\mid a(p)\cap C'\neq \emptyset\}|\geq (1-\beta)B(c_i)$.

Let $b$ be an assignment satisfying the above properties. We obtain the following upper bound to the cost of $b$.
\begin{align*}
	\sum_{c\in C'}\sum_{p\in B_i}  b_p^c\ d(p,c)
	\leq& \sum_{c\in C'}\Big(\sum_{x\in A_i\colon c\in a(x)}\frac{\beta}{1-\beta} \big ( \alpha^2 d(c_i,x)+\alpha d(x,c)\big )+\sum_{p\in B_i}b_p^c \alpha^2 d(p,c_i)\Big )\\
	=&\frac{\alpha\beta}{1-\beta} \sum_{c\in C'} \sum_{x\in A_i\colon c\in a(x)} d(x,c)+ 
	\frac{\alpha^2\beta}{1-\beta} \sum_{x\in A_i} d(x,c_i).
\end{align*}
For the first inequality we used above bound on $\sum_{p\in B_i}b_p^c$. We can charge every point in $\{x\in A_i\mid c\in a(x)\}$ up to an amount of $\frac{\beta}{1-\beta}$ for the assignment cost of $B_i$ to $c$. Assume such a point $x$ gets charged by an amount of $\gamma\leq b_p^c$ for the distance $d(p,c)$. We obtain the following upper bound on the cost
\[\gamma d(p,c)\leq \gamma(\alpha^2 d(p,c_i)+\alpha^2 d(c_i,x)+\alpha d(x,c)).\] 
Thus in total the distance $d(p,c_i)$ appears with a factor of $b_p^c\alpha^2$, distance $d(c_i,x)$ with factor $\frac{\beta}{1-\beta} \alpha^2$ and $d(x,c)$ with factor $\frac{\beta}{1-\beta} \alpha$ in the upper bound on the assignment cost of $B_i$ to $c$. 

The equality follows immediately from $\sum_{c\in C'}b_p^c=1$ and $B_i\cap \{x\in A_i\mid a(x)\cap C'\neq\emptyset\}=\emptyset$.

Assigning every point in $B_i$ to its nearest center can only be cheaper than distributing $B_i$ to centers in $C'$ via $b$. We obtain 
\begin{align}
	\nonumber\sum_{p\in B_i} \min_{c\in C'} d(p,c)\leq& \sum_{c\in C'}\sum_{p\in B_i}  b_p^c\ d(p,c)\\
	\leq &\frac{\alpha\beta}{1-\beta} \sum_{c\in C'} \sum_{x\in A_i\colon c\in a(x)} d(x,c)+ 
	\frac{\alpha^2\beta}{1-\beta} \sum_{x\in A_i} d(x,c_i).\label{bik:ineq}
\end{align}

\begin{figure}
	\centering
	\begin{tikzpicture}[dot/.style = {shape = circle, fill = black, inner sep =1.5pt},
	center/.style= {shape = rectangle, draw, minimum size= .6cm},
	emptycenter/.style= {shape = rectangle, draw, minimum size= 10pt, inner sep = 0},
	every edge/.style={->, draw, >= stealth'},
	every node/.style={color=black},
	light/.style={color=gray!50!white},
	auto]
	\node[center] (centerc) at (0,0) {$c_i$};
	\node[emptycenter] (center1) at (-3, 1) {};
	\node[emptycenter] (center2) at (-2.5, -1.5) {};
	\node[emptycenter] (center3) at (0.5, 2) {};
	\node[emptycenter] (center4) at (2.5, 0.8) {};
	
	\node[dot] (dot1) at (-1.8, 1.2) {};
	\node[dot] (dot2) at (-1.4, -1.1) {};
	\node[dot] (dot3) at (0, 1.6) {};
	\node[dot] (dot4) at (1.7, 0.8) {};
	\node[dot] (dot5) at (1.7, -0.3) {};
	\node[dot] (dot6) at (0.4, -1.6) {};
	\coordinate (help11) at (-0.15, -0.45);
	\coordinate (help12) at (-1.3, -1.28);
	\coordinate (help21) at (0.45, 0.15);
	\coordinate (help22) at (0.45, 0.45);
	\coordinate (help23) at (-1.8, 1.4);
	
	\draw (dot1) edge[light] (centerc)
	(dot2) edge[light] node{$\alpha^2$} (centerc)
	(dot3) edge[light] (centerc)
	(dot4) edge[light] (centerc)
	(dot5) edge[light] (centerc)
	(dot6) edge[light] node[swap, near start]{$\alpha^2$} (centerc)
	(dot1) edge (center1)
	(dot2) edge node[swap]{$\alpha$} (center2)
	(dot3) edge (center3)
	(dot4) edge[light] (center4);
	\draw [->, >= stealth', rounded corners] (dot6) to (help11) to (help12) to (center2);
	\draw [->, >= stealth', rounded corners] (dot5) to[bend right=3] (help23) to (center1);
	\end{tikzpicture}
	\caption{Showing assignment $b$ in the case where $c_i$ is closed. The two points from $B_i$ are distributed to centers in $C'$. The gray connections come from $a$. $\alpha$ and $\alpha^2$ are the factors with which the respective distances appear in the upper bound of the new connection. }
	\label{Bicriteria:Closing}
\end{figure}
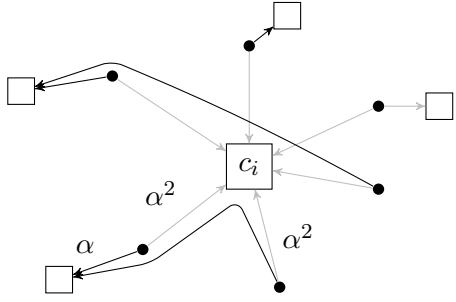

Let $(C',a')$ be the final solution computed by the algorithm.
\begin{align*}
	\cost	&(C',a')=\sum_{c\in C'}\sum_{\substack{x\in P:\\ a'(x)=c}}d(x,c)\\
	&\leq \sum_{c\in C'}\hspace{-2.5pt}\sum_{\substack{x\in P:\\ c\in a(x)}} d(x,c)+\frac{\alpha\beta}{1-\beta}\sum_{c\in C'}\hspace{-2.5pt}\sum_{\substack{x\in P:\\ c\in a(x)}} d(x,c)+\frac{\alpha^2\beta}{1-\beta}\sum_{c\in C\backslash C'}\sum_{\substack{x\in P:\\c\in a(x)}} d(x,c)\\
	&\leq \max\{\frac{\alpha\beta}{1-\beta}+1, \frac{\alpha^2\beta}{1-\beta}\}\sum_{c\in C}\sum_{\substack{x\in P:\\ c\in a(x)}} d(x,c)\\
	&=\max\{\frac{\alpha\beta}{1-\beta}+1, \frac{\alpha^2\beta}{1-\beta}\}\cost(C,a).
\end{align*}
To see the first inequality we use the upper bound in \eqref{bik:ineq}. Let $x\in P$ and $c\in a(x)$. If $c$ is closed in the final solution the distance $d(x,c)$ is only charged with a factor of $\frac{\alpha^2\beta}{1-\beta}$ in \eqref{bik:ineq} for closing $c$. If $c$ is open in the final solution the distance $d(x,c)$ is charged with factor one if $a'(x)=c$ and can also be charged with a factor of $\frac{\alpha\beta}{1-\beta}$ in \eqref{bik:ineq} for closing a center $d\in a(x)$. This can happen at most once since $|a(x)|\leq 2$. This proves the first inequality.

Since generalized $k$-median with 2-weak lower bounds is a relaxation of generalized $k$-median with lower bounds we obtain 
\[\cost(C',a')\leq \gamma \max\{\frac{\alpha\beta}{1-\beta}+1, \frac{\alpha^2\beta}{1-\beta}\}\cost(OPT).\]
This leads to the following theorem.
\IncMargin{1ex}
\begin{algorithm}[t]
	\DontPrintSemicolon
	\SetKwInOut{Input}{Input}\SetKwInOut{Output}{Output}
	\SetKwFor{ForAll}{for all}{}{}
	\Input{$\gamma$-approximate solution $(C,a)$ to  generalized $k$-median with 2-weak lower bounds, $C=\{c_1,\ldots, c_{k'}\}$}
	\Output{Bicriteria solution $(C',a')$ to generalized $k$-median with lower bounds.}
	\BlankLine
	set $C'=\emptyset$, $a'(x)=\bot$ for all $x\in P$\;
	$N=P$\;
	\For{$i=1$ \KwTo $k'$}
	{	
		$A_i=\{x\in P\mid c_i\in a(x)\}$\;
		$B_i=\{x\in A_i\mid a(x)\subset \{c_1\ldots,c_i\}\}\cap N$\;
		\If{$A_i\cap N\geq \beta B(c_i)$}
		{	
			set $a'(x)=c_i$ for all $x\in A_i\cap N$\;
			$N=N\backslash A_i$\;
			$C'=C'\cup \{c_i\}$\;
			
		}
		\Else{
			set $a'(x)=\argmin_{c\in C'}d(x,c)$ for all $x\in B_i$\;
		}
		
	}
	\caption{A $(\beta, \gamma \max\{\frac{\alpha\beta}{1-\beta}+1, \frac{\alpha^2\beta}{1-\beta}\})$-bicriteria approximation algorithm to generalized $k$-median with lower bounds\label{algo:bicriteriell}}
\end{algorithm}
\DecMargin{1ex}   

\thmBicriteria*

\end{document}